\newtheorem{theorem}{Theorem}
\newtheorem{lemma}{Lemma}
\newtheorem{assumption}{Assumption}
\newtheorem{proposition}{Proposition}
\newtheorem{corollary}{Corollary}
\newtheorem{remark}{Remark}
\def\T{{ \mathrm{\scriptscriptstyle T} }}
\newcommand{\blind}{0}
\begin{document}

\def\spacingset#1{\renewcommand{\baselinestretch}%
{#1}\small\normalsize} \spacingset{1}


\if0\blind
{
  \title{\bf Stochastic Generalized Lotka-Volterra Model with An Application to Learning Microbial Community Structures}
  \author{Libai Xu and Ximing Xu\\
    School of statistics and Data Science, Nankai University, Tianjin,\\ 300071, China\\
    libaixuyx@outlook.com  ximing@nankai.edu.cn\\
     Dehan Kong\\
    Department of Statistical Sciences, University of Toronto, Ontario\\ M5S 3G3, Canada\\
    kongdehan@ustat.toronto.edu\\
    Hong Gu and Toby Kenney\hspace{.2cm}\\
	Department of Mathematics and Statistics, Dalhousie University, Halifax,\\ Nova Scotia B3H\\
	hgu@dal.ca tb432381@dal.ca
}
  \maketitle
} \fi

\if1\blind
{
  \bigskip
  \bigskip
  \bigskip
  \begin{center}
   {\bf Stochastic Generalized Lotka-Volterra Model with An Application to Learning Microbial Community Structures}
\end{center}
  \medskip
} \fi

\bigskip
\begin{abstract}
Inferring microbial community structure based on temporal metagenomics
data is an important goal in microbiome studies. The deterministic
generalized Lotka-Volterra differential (GLV) equations have been used
to model the dynamics of microbial data. However, these approaches
fail to take random environmental fluctuations into account, which may
negatively impact the estimates. We propose a new stochastic GLV
(SGLV) differential equation model, where the random perturbations of
Brownian motion in the model can naturally account for the external
environmental effects on the microbial community. We establish new
conditions and show various mathematical properties of the solutions
including general existence and uniqueness, stationary distribution,
and ergodicity. We further develop approximate maximum likelihood
estimators based on discrete observations and systematically
investigate the consistency and asymptotic normality of the proposed
estimators. Our method is demonstrated through simulation studies and
an application to the well-known ``moving picture'' temporal microbial
dataset.
\end{abstract}

\noindent%
{\it Keywords:} Interaction network; Microbial communities; Stochastic differential equation; Stochastic Generalized Lotka-Volterra model.
\vfill

\newpage
\spacingset{1.5} 
\section{Introduction}
The human microbiome is the collection of all microbes in and on the
human body, and it plays a key role in human health and
disease. Recent advances in high-throughput sequencing technologies
have made it possible to capture microbiome data from human
specimens. There have been many studies investigating the association
between the structure of the microbiome and various outcomes \citep{faust2012microbial,buffie2015precision,cao2017inferring,gibson2016origins,von2018complex}, however, most
studies have focused on the static relationship between the microbiome and
host phenotype, or environmental conditions. The microbiome, while stable in the long term, fluctuates rapidly,
both from external influences, and from its own
dynamics. Understanding the dynamics of the microbiome provides
insight into how the microbiome may change in response to a given
stimulus. This has applications both in predicting adverse effects
caused by the microbiome, and in planning for effective
intervention and remediation.

In \cite{mounier2008microbial}, the deterministic Generalized
Lotka-Volterra (GLV) model was introduced to model inter- and
intraspecies interactions of a cheese microbial
community. \citet{marino2014mathematical} used a deterministic GLV
model to characterize the temporal changes of microbial community in
germfree mice and estimated the interaction effects between the most
abundant operational taxonomic units by least squares estimation. This
method attributed all the effect of environmental variability on the
population dynamics to measurement errors. Figure~\ref{AbundancePlot}
Panel (a) in Appendix~\ref{AppendTable} shows the relative abundance
of Bacteroidaceae from our real data application. The pattern shown is
typical of a system undergoing environmental white noise. However, the
deterministic GLV model fails to account for these environmental
fluctuations \citep{bandyopadhyay2005ratio}, resulting in inaccurate
estimates of the parameters and prediction of future dynamics.
Indeed, \cite{may2019stability} pointed out that due to environmental
fluctuation, the birth rate, carrying capacity, competition
coefficients and other parameters involved with the system exhibit
random fluctuation to different extents. Large amplitude fluctuation
in population may lead to extinction of certain species, which can't
happen in deterministic models.  Hence the stable analysis of the
equilibrium in the deterministic case is not realistic and the
solution may not be reliable. To partially solve this problem,
\citet{stein2013ecological} and \citet{buffie2015precision} added
known external perturbations to incorporate the effect of
environmental variability on the population dynamics. They then
estimated the interaction effects between the most abundant
operational taxonomic units by least squares estimation. In practice,
however, there are many unknown perturbations, that are not accounted
for in their models.

In this paper, we propose a new \textit{Stochastic GLV (SGLV)}
differential equation model to learn interactions between dynamic
microbial communities.  Under the stochastic differential equation,
the abundance of each operational taxonomic unit follows a stochastic
process with the conditional mean following the deterministic GLV
equation, but with an additional noise term following Brownian motion
to represent random external perturbations.  Through simulation
studies, we show that the SGLV model produces better
parameter estimators (in terms of mean squared error) than the
deterministic GLV model estimators from
\citet{marino2014mathematical}. We also show that for real data, the
prediction errors are lower under the SGLV model than under
the deterministic GLV model.

Our paper makes the following contributions. First, we propose a new
SGLV model to study the dynamics of
the microbiome and the interactions between different microbes in the
microbial community. To the best of our knowledge, this is the first
time that the SGLV model is studied
in the statistical literature. Our model is general in the sense that
it can describe competition, mutualism and mixed stochastic systems of
competition and mutualism. Second, we establish a new set of general conditions which guarantee various mathematical properties
  including existence and uniqueness, the stationary distribution, and
  the ergodic property of our SGLV
  model solutions.  The conditions previously derived for competition
  \citep{jiang2012analysis,nguyen2017coexistence,liu2018stationary} and/or mutualism
  \citep{liu2013analysis,liu2015analysis,liu2017stochastic} systems can be regarded as
  special cases of our general conditions. Third, we propose an
approximate maximum likelihood estimator for model parameters, and
study its statistical properties including consistency and asymptotic
normality. There is some literature on approximate maximum likelihood
estimators for stochastic differential equation models. For example,
\cite{dacunha1986estimation}, \cite{yoshida1992estimation}, \cite{genon1993estimation} and \cite{ait2008closed}
consider general one or multi-dimensional diffusion processes and
estimate the unknown parameters using approximate maximum likelihood
estimators.  They have developed consistency and asymptotic normality
under the coercivity and Lipshitz continuity conditions.  However, the
coercivity and Lipshitz continuity conditions do not hold in our
SGLV model, and thus their techniques
can not be applied. Instead, we develop a set of new conditions and
show that our approximate maximum likelihood estimator still satisfies
consistency and asymptotic normality.

The rest of this article is organized as follows. In Section~2, we
introduce our SGLV model, present the mathematical
properties of the solution and propose the approximate
maximum likelihood estimator for the model
parameters. In Section~3, we investigate the statistical properties of
the approximate
maximum likelihood estimator. Simulations are conducted in Section~4 to evaluate the
finite sample performance of the proposed method. In Section~5, we
apply our SGLV model to a temporal microbiome set \citep{caporaso2011moving}. We
conclude with discussion in Section 6. The mathematical proofs are
given in Appendices~\ref{AppendProof} and~\ref{AppendProofTheorem}.

\section{Methodology}
We begin by listing some notation used throughout the paper. Let
$x(t)$ denote the vector of abundances of all $N$ species at time $t$;
$E(x)$ the expectation of random vector $x$; $x^{\T}$ the transpose of
the vector $x$; $\mathbb{R}^{N}_{+}$ the space of $N$-dimensional
positive real numbers and $\mathbb{R}_{\geq 0}$ the non-negative real
numbers. Let $r$ denote the vector of intrinsic growth rates
$(r_{1},...,r_{N})^{\T}$; $\Theta$ the space of drift parameters in
the stochastic differential equation; $\Phi$ the space of diffusion
parameters in the stochastic differential equation; $a\wedge
b=\min\{a,b\}$; $a\vee b=\max\{a,b\}$; $A$ the $N\times N$ matrix with
elements $a_{kl},k,l=1,...,N$; $\sigma^{2}$ the vector of diffusion
parameters ${\rm}(\sigma^{2}_{1},\ldots,\sigma^{2}_{N})^{\T}$; and $C,
C_{1}, C_{2}$ constants which may change between rows. For random
variables, $X_1,X_2,\ldots$, we write $X_n=O_{p}(1)$
(resp. $X_n=o_{p}(1)$) if $X_n$ is bounded (resp. converges) in
probability.  Let $\|\cdot\|$ denote either the Euclidean norm of a
vector or the operator norm of a real matrix.

\subsection{Model}
Consider a collection of $N$ microbes (operational taxonomic units) in
a habitat. Let $x_{k}(t)$ denote the population of microbe $k$ at time
$t$, $ 1\leq k\leq N$ and $ t\geq 0$. To model the complex and dynamic
ecosystem, population dynamics models, especially the GLV model, have
been used for predictive modeling of the intestinal microbiota
\citep{faust2012microbial,stein2013ecological,marino2014mathematical,fisher2014identifying,buffie2015precision}. The
GLV model assumes that the microbe populations follow a set of
ordinary differential equations
\begin{equation}\label{2.1}
 \mathrm{d}x_{k}(t)/\mathrm{d}t = x_{k}(t)\left\{r_{k}+\sum^{N}_{l=1} a_{kl}x_{l}(t)\right\}, \quad k=1,...,N,
\end{equation}
where $a_{kk}=-r_{k}c_{k}<0$, $r_{k}$ denotes the intrinsic growth rate of species $k$,
$c_{k}$ is the coefficient of negative intraspecific interaction representing the inverse of the carrying capacity of the species in isolation and $a_{kl}$ is the interaction coefficient between species. The parameters $r_{k},c_{k}$ and $a_{kl}$ are assumed to be time-invariant.
When $ N=1 $, the model reduces to the classical logistic growth model
\citep{capocelli1975note,roman2012modelling,heydari2014fast,campillo2018parameter}:
$\mathrm{d}x_{1}(t)/\mathrm{d}t = x_{1}(t)\{r_{1}-c_1r_{1}x_{1}(t)\},\nonumber$
where $ c_1=1/K$ and $K$ is the environmental carrying capacity.
Although model~\eqref{2.1} is commonly used in the
literature, many of the systems studied exhibit random fluctuations,
rather than following a deterministic equation
\citep{may2019stability,bandyopadhyay2005ratio}.

To overcome these problems, we propose a \textit{Stochastic GLV
  (SGLV)} differential equation model by perturbing the intrinsic
growth rates of each operational taxonomic unit in
equation~\eqref{2.1}.
\begin{equation}\label{2.4}
\mathrm{d}x_{k}(t)=  x_{k}(t)\left\{r_{k}
+\sum^{N}_{l=1}a_{kl}x_{l}(t)\right\}\mathrm{d}t+\sigma_{k}x_{k}(t)\mathrm{d}B_{k}(t), \quad k=1,...,N,
\end{equation}
where $B(t) = (B_{1}(t),...,B_{N}(t))^{\T}$ is an $N$-dimensional
standard Brownian motion. In particular, $\{B_{k}(t)\}_{1\leq k\leq
  N}$ are mutually independent standard one-dimensional Brownian
motions defined over the complete probability space
$(\Omega,\mathcal{F},\{\mathcal{F}_{t}\}_{t\geq 0},P)$ with filtration
${\mathcal{F}_{t}} $ satisfying the usual conditions.
We can rewrite equation~\eqref{2.4} in vector form as
\begin{equation}\label{1.1}
\mathrm{d}x(t) = {\rm diag}(x(t))\{r+Ax(t)\}\mathrm{d}t + {\rm diag}(\sigma) {\rm diag}(x(t)) \mathrm{d}B(t), t > 0,
\end{equation}
where $r=(r_{1},...,r_{N})^{\T} $,  $A=(a_{kl})_{N\times N}$,
$\sigma=(\sigma_{1},...,\sigma_{N})^{\T}$, $x(t)=(x_1(t), \ldots,
x_N(t))^{\T}$ and $ {\rm diag}(x)$ denotes an
$N\times N$ diagonal matrix with diagonal elements $ x=(x_1, \ldots,
x_N)^{\T} $.
This system is capable of modelling a range of pairwise
interactions. Since $x_{k}(t)$ in SGLV model~\eqref{1.1} represents
the population of the species $k$ at time $t$, we are only interested
in its positive solutions. We therefore focus on the case when
$r_{k}>0$ for all $k$, since cases with $r_{k}<0$ may not have any stable
state in $\mathbb{R}_{+}^{N}$
\citep{may2007theoretical}. Depending on the values
of $a_{kl}~(l\neq k)$, model~\eqref{1.1} can include the following
special cases:
\begin{enumerate}
\item $a_{kl}<0$ for all $l\neq k$: model~\eqref{1.1} is a stochastic competition system.
\item $a_{kl}>0$ for all $l\neq k$: model~\eqref{1.1} is a stochastic mutualism system.
\item $a_{kl}>0$ for some $l\neq k$ and $a_{kl}>0$ for some $l\neq k$ : model~\eqref{1.1} is a
    mixed stochastic system of competition and mutualism.
\end{enumerate}

\subsection{Solution Properties of Stochastic Generalized Lotka-Volterra Model}
In this subsection,
we study the theoretical properties for the
solution to equation~\eqref{1.1}. The main results are presented in
Propositions~\ref{pro2.1}, \ref{pro2.2}, \ref{pro2.3} and
Corollary~\ref{pro2.4}, with the proofs deferred to Appendix~\ref{AppendProof}.
We need the following conditions:
\begin{assumption}
\label{assumptionA1}
The initial value $x(0)=(x_{1}(0),\ldots,x_{N}(0))^{\T}\in \mathbb{R}^{N}_{+}$, $r_{k}-\sigma^{2}_{k}/2>0,\sigma_{k}>0$,  and $A=(a_{kl})_{N\times N}$ is  non-positive definite.
\end{assumption}
\begin{assumption}
 \label{assumptionA2} For some $\phi\geq 4$, the elements of $A$ satisfy
 $$a_{kk}+\phi(\phi+1)^{-1}\sum_{l=1}^N(a_{kl}\lor
0)+(\phi+1)^{-1}\sum_{l=1}^N(a_{lk}\lor 0)<0.$$
\end{assumption}
\begin{assumption}
\label{assumptionA3}
Each element of the vector $\tilde{x}=(\tilde{x}_{1},...,\tilde{x}_{N})^{\T}=-A^{-1}\left(r-\sigma^{2}/2\right)$ is positive, i.e.,  $\tilde{x}_{k}>0$ for $k=1,...,N$.
\end{assumption}
\begin{assumption}
\label{assumptionA4}
There exist positive constants $c_1$,$\ldots$, $c_N$ such that
for $k=1,...,N$,
$$\sum^{N}_{i=1}c_{i}\sigma^{2}_{i}\tilde{x}_{i} < -\left[2c_{k}a_{kk}
   +\sum_{l\neq k}\{c_{k}|a_{kl}|
   +c_{l}|a_{lk}|\}   \right]\tilde{x}^{2}_{k}.$$
\end{assumption}

\begin{proposition}\label{pro2.1}
For $t\geq0$, under Assumption~\ref{assumptionA1}, there is a unique solution $x(t)$ to SGLV model~\eqref{1.1} and  $x(t)\in \mathbb{R}^{N}_{+}$ almost surely.
\end{proposition}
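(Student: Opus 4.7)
The plan is to follow the classical Mao-style argument for existence, uniqueness, and non-explosion of SDEs with locally Lipschitz but only polynomially bounded coefficients, combined with a Lyapunov function that also rules out exit through the boundary of $\mathbb{R}^{N}_{+}$.

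First, I would observe that the drift $b(x)={\rm diag}(x)(r+Ax)$ and the diffusion $g(x)={\rm diag}(\sigma){\rm diag}(x)$ are polynomial in $x$, hence locally Lipschitz on $\mathbb{R}^{N}$. Standard SDE theory therefore yields, for the initial value $x(0)\in\mathbb{R}^{N}_{+}$ from Assumption~\ref{assumptionA1}, a unique strong solution $x(t)$ defined up to an explosion time $\tau_{e}$. The task is to show $\tau_{e}=\infty$ and $x(t)\in\mathbb{R}^{N}_{+}$ almost surely for all $t\geq 0$.

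Second, to control both blow-up to $\infty$ and approach to the boundary $\partial\mathbb{R}^{N}_{+}$ simultaneously, I would introduce the stopping times
$$\tau_{n}=\inf\!\left\{t\in[0,\tau_{e}):\ \min_{k}x_{k}(t)\leq 1/n\ \text{or}\ \max_{k}x_{k}(t)\geq n\right\},$$
with $\tau_{n}\uparrow\tau_{\infty}\leq\tau_{e}$, and use the Lyapunov function
$$V(x)=\sum_{k=1}^{N}\bigl(x_{k}-1-\log x_{k}\bigr),$$
which is nonnegative on $\mathbb{R}^{N}_{+}$ (since $u-1-\log u\geq 0$) and tends to $\infty$ whenever any coordinate approaches $0$ or $\infty$. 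By Itô's formula,
$$LV(x)=\sum_{k=1}^{N}(x_{k}-1)\!\left(r_{k}+\sum_{l=1}^{N}a_{kl}x_{l}\right)+\tfrac{1}{2}\sum_{k=1}^{N}\sigma_{k}^{2}.$$
Expanding, the pure quadratic term equals $x^{\T}Ax$, which is $\leq 0$ by the non-positive definiteness of $A$ in Assumption~\ref{assumptionA1}. The remaining terms are linear in $x$ plus constants, and since $u-1-\log u\geq u/2-C_{0}$ for $u>0$, one has $\sum_{k}x_{k}\leq 2V(x)+C_{1}$. Together this yields an inequality of the form $LV(x)\leq K\bigl(1+V(x)\bigr)$ for some constant $K$.

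Finally, I would apply Itô and Gronwall to $e^{-Kt}V(x(t\wedge\tau_{n}))$ to obtain $\mathbb{E}V(x(t\wedge\tau_{n}))\leq \bigl(V(x(0))+Kt\bigr)e^{Kt}$. Since $V_{n}:=\inf\{V(x): x\in\mathbb{R}^{N}_{+},\ \min_{k}x_{k}=1/n\ \text{or}\ \max_{k}x_{k}=n\}\to\infty$, Markov's inequality gives $P(\tau_{n}\leq t)\leq \mathbb{E}V(x(t\wedge\tau_{n}))/V_{n}\to 0$, hence $\tau_{n}\uparrow\infty$ almost surely, so $\tau_{e}=\infty$ and $x(t)$ never leaves $\mathbb{R}^{N}_{+}$.

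The main obstacle I expect is verifying the Lyapunov inequality $LV\leq K(1+V)$: the linear growth in $x$ from $r_{k}x_{k}$ and from $-\sum_{l}(\sum_{k}a_{kl})x_{l}$ must be absorbed into $V$, which works only because $V$ grows linearly at infinity; and controlling the $-\log x_{k}$ part of $V$ as $x_{k}\downarrow 0$ relies crucially on the fact that ${\rm diag}(x)$ appears in both the drift and diffusion, so the formal $\log$-dynamics stay finite. The non-positive definiteness of $A$ is the key structural ingredient that kills the quadratic term and thereby prevents explosion.
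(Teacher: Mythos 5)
Your proposal is correct and follows essentially the same route as the paper: the same Lyapunov function $V(x)=\sum_{k}(x_{k}-1-\log x_{k})$, the same use of non-positive definiteness of $A$ to kill the quadratic term in $LV$, the same linear bound $\sum_k x_k\leq 2V+C$, and the same Gr\"{o}nwall/Markov conclusion. The only cosmetic difference is that you stop the process when a coordinate hits $1/n$ or $n$ while the paper stops on the level sets $\{V\geq\eta\}$ directly; both localizations are equivalent here.
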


Most previous literature on the existence of a unique global solution
to a stochastic differential equation requires conditions including the linear growth condition
and local Lipschitz condition
\citep{arnold1974stochastic,friedman2010stochastic,oksendal2013stochastic,liu2015stochastic}.
However, the coefficients
of~model \eqref{1.1} do not satisfy the linear growth condition. Therefore,
we develop new techniques to show that the solution of~model \eqref{1.1}
can't explode in finite time. Particularly we establish a new set
of conditions which guarantee existence, uniqueness and positivity of
the global solution to stochastic differential equation~\eqref{1.1}.
We also prove some bounds on the moments of the solution, which will
be used to prove Theorems \ref{FixedT} and \ref{NoFixT}.
\begin{proposition}\label{pro2.2}
Under Assumptions~\ref{assumptionA1} and \ref{assumptionA2}, there exist positive constants $C_{1}$ and $C_{2}$ such that for any initial value $x_{0}\in \mathbb{R}_{+}^{N}$,
the solution of SGLV~model \eqref{1.1} has the properties
\begin{equation}\label{limsup}
\limsup_{t\rightarrow+\infty}\sum^{N}_{k=1}E\{x^{\theta}_{k}(t)\}\leq C_{1},\quad
  \limsup_{t\rightarrow+\infty}\sum^{N}_{k=1}t^{-1}E\left\{\int^{t}_{0}x^{\theta}_{k}(s)ds\right\}\leq C_{2},\nonumber
\end{equation}
for any $0\leq\theta\leq 4$.
\end{proposition}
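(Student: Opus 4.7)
The plan is to apply It\^o's formula to the Lyapunov function $V(x)=\sum_{k=1}^N x_k^{\phi}$, where $\phi\geq 4$ is the exponent supplied by Assumption~\ref{assumptionA2}, and then derive the two bounds from the resulting drift inequality via (a) an exponential Gronwall-type argument and (b) direct integration.

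First I would compute $LV$. A routine application of It\^o's formula to each $x_k^{\phi}$ and summing yields
\begin{equation*}
LV(x)=\sum_{k=1}^N \phi\Bigl[r_k+\tfrac{1}{2}(\phi-1)\sigma_k^2\Bigr]x_k^{\phi} + \phi\sum_k a_{kk}\,x_k^{\phi+1} + \phi\sum_{k\neq l} a_{kl}\,x_k^{\phi} x_l.
\end{equation*}
For the cross terms I would invoke Young's inequality $x_k^{\phi} x_l\leq \tfrac{\phi}{\phi+1}x_k^{\phi+1}+\tfrac{1}{\phi+1}x_l^{\phi+1}$ only for pairs with $a_{kl}>0$, dropping the remaining non-positive contributions. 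Re-indexing the second piece of the resulting double sum by swapping $k$ and $l$, and using $a_{kk}<0$ from Assumption~\ref{assumptionA1} (so that $a_{kk}\vee 0=0$), collects the coefficient of each $x_k^{\phi+1}$ into precisely
\begin{equation*}
\phi\Bigl[a_{kk}+\tfrac{\phi}{\phi+1}\sum_l (a_{kl}\vee 0)+\tfrac{1}{\phi+1}\sum_l(a_{lk}\vee 0)\Bigr],
\end{equation*}
which Assumption~\ref{assumptionA2} bounds above by $-\phi\delta$ uniformly in $k$ for some $\delta>0$. Absorbing the lower-order $x_k^{\phi}$ terms into the constant via the elementary fact that $\beta y^{\phi}-\alpha y^{\phi+1}$ is bounded above on $[0,\infty)$ then delivers the key drift estimate $LV(x)\leq C_0 - \alpha\sum_{k=1}^N x_k^{\phi+1}$ for some $\alpha,C_0>0$.

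From here both conclusions follow by standard devices. For the pointwise-in-$t$ bound I would observe that $V+LV$ is also bounded above by a constant $C_1'$ (again because $y^{\phi}-\alpha' y^{\phi+1}$ is bounded), apply Dynkin's formula to $e^{t}V(x(t))$ along the localizing stopping times $\tau_n=\inf\{t\geq 0:\|x(t)\|\geq n\}$ furnished by Proposition~\ref{pro2.1}, kill the local martingale in expectation, and let $n\to\infty$ via Fatou's lemma to arrive at $E[V(x(t))]\leq V(x(0))e^{-t}+C_1'$, whence $\limsup_{t\to\infty}\sum_k E[x_k^{\phi}(t)]\leq C_1'$. For the time-averaged bound I would instead apply Dynkin's formula directly to $V(x(t\wedge\tau_n))$, rearrange to $\alpha\,E\int_0^{t\wedge\tau_n}\sum_k x_k^{\phi+1}(s)\,ds\leq V(x(0))+C_0 t$ (using $V\geq 0$), pass to the limit by monotone convergence, divide by $t$, and take $\limsup$. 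Since $\phi\geq 4\geq\theta$, the elementary inequality $y^{\theta}\leq 1+y^{\phi+1}$ on $[0,\infty)$ then upgrades both estimates from the exponents $\phi$ (resp.\ $\phi+1$) down to any $\theta\in[0,4]$.

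The main obstacle will be the algebraic bookkeeping in step one: tracking the Young weights $\phi/(\phi+1)$ and $1/(\phi+1)$ and re-indexing the double sum carefully enough that the coefficient of $x_k^{\phi+1}$ matches, term for term, the expression appearing in Assumption~\ref{assumptionA2}, so that the assumption can be applied \emph{directly} rather than with extra slack. The subsequent localization and Fatou/monotone-convergence steps are routine once non-explosion has been secured by Proposition~\ref{pro2.1}.
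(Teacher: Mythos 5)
Your proposal is correct and follows essentially the same route as the paper's proof: the Lyapunov function $V(x)=\sum_k x_k^{\phi}$, It\^o/Dynkin applied to $e^{t}V(x(t))$ with localization, Young's inequality with weights $\phi/(\phi+1)$ and $1/(\phi+1)$ applied only to the positive off-diagonal terms, and re-indexing so that Assumption~2 yields negativity of the coefficient of each $x_k^{\phi+1}$. The only cosmetic difference is in the time-averaged bound, where the paper integrates the pointwise estimate $E\{V(x(t))\}\leq V(x_0)e^{-t}+C_1$ over $[0,t]$ while you apply Dynkin's formula to $V$ directly to control the $(\phi+1)$-moment average; both are valid and interchangeable here.
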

Finally we need to show that the solution has a
stationary distribution and is ergodic.
\begin{proposition}\label{pro2.3}
Under Assumptions~\ref{assumptionA1}, \ref{assumptionA3} and \ref{assumptionA4}, there is a stationary distribution for the solution of SGLV~model~\eqref{1.1}, and it has the ergodic property.
\end{proposition}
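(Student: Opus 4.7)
The plan is to invoke a standard Khasminskii-type ergodicity criterion: for a non-degenerate, non-explosive diffusion on $\mathbb{R}^{N}_{+}$, it suffices to exhibit a non-negative $C^{2}$ Lyapunov function $V$ that diverges as $x$ approaches $\partial\mathbb{R}^{N}_{+}$ or infinity, together with a compact set $K\subset\mathbb{R}^{N}_{+}$ and a constant $C>0$ such that $LV\leq -C$ on $\mathbb{R}^{N}_{+}\setminus K$, where $L$ is the infinitesimal generator of~\eqref{1.1}. Proposition~\ref{pro2.1} furnishes the positive, non-explosive solution, and Assumption~\ref{assumptionA1}'s condition $\sigma_{k}>0$ combined with the strict positivity of each $x_{k}$ makes the diffusion matrix $\mathrm{diag}(\sigma_{k}^{2}x_{k}^{2})$ uniformly positive on compact subsets of $\mathbb{R}^{N}_{+}$, supplying the required non-degeneracy.

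I would take the standard Lotka--Volterra Lyapunov function
\begin{equation*}
V(x)=\sum_{k=1}^{N} c_{k}\bigl(x_{k}-\tilde{x}_{k}-\tilde{x}_{k}\log(x_{k}/\tilde{x}_{k})\bigr),
\end{equation*}
with $c_{k}$ from Assumption~\ref{assumptionA4} and $\tilde{x}_{k}$ from Assumption~\ref{assumptionA3}. Convexity makes each summand non-negative with minimum at $x_{k}=\tilde{x}_{k}$; the logarithmic singularity at $0$ and the linear growth at infinity give the required coercivity. A direct computation of $LV$ using the identity $r_{k}+\sum_{l}a_{kl}\tilde{x}_{l}=\sigma_{k}^{2}/2$ (the coordinate form of $A\tilde{x}=-(r-\sigma^{2}/2)$) combines the drift and diffusion contributions into
\begin{equation*}
LV(x)=\tfrac{1}{2}\sum_{k}c_{k}\sigma_{k}^{2}x_{k}+\sum_{k,l}c_{k}a_{kl}(x_{k}-\tilde{x}_{k})(x_{l}-\tilde{x}_{l}).
\end{equation*}
Bounding the off-diagonal terms of the quadratic form by $2|y_{k}y_{l}|\leq y_{k}^{2}+y_{l}^{2}$ reduces it to $\tfrac{1}{2}\sum_{k}(x_{k}-\tilde{x}_{k})^{2}\{2c_{k}a_{kk}+\sum_{l\neq k}(c_{k}|a_{kl}|+c_{l}|a_{lk}|)\}$, and Assumption~\ref{assumptionA4}, which is equivalent to $\{2c_{k}a_{kk}+\sum_{l\neq k}(c_{k}|a_{kl}|+c_{l}|a_{lk}|)\}\tilde{x}_{k}^{2}<-\sum_{i}c_{i}\sigma_{i}^{2}\tilde{x}_{i}$ with a uniform positive slack, turns this into a strictly negative-definite quadratic form in $x-\tilde{x}$.

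The main obstacle is upgrading this pointwise bound to the uniform estimate $LV\leq -C$ outside a compact subset of $\mathbb{R}^{N}_{+}$. As $\|x\|\to\infty$ the negative quadratic term overwhelms the positive linear term and $LV\to-\infty$. As some $x_{j}\to 0$ along a bounded slab, the ratio $(x_{j}-\tilde{x}_{j})^{2}/\tilde{x}_{j}^{2}\to 1$, so the strict slack in Assumption~\ref{assumptionA4} ensures that $LV$ extends continuously to a strictly negative value on the corresponding boundary face. A compactness argument on the closure of $\mathbb{R}^{N}_{+}$ intersected with a sufficiently large ball then produces a single $C>0$ and a compact $K\subset\mathbb{R}^{N}_{+}$ such that $LV\leq -C$ on the complement, at which point the Khasminskii ergodic theorem yields both the stationary distribution and the strong law of large numbers along trajectories that constitute the ergodicity claim.
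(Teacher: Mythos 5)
Your plan is, in outline, the paper's own proof: the same Khasminskii stationarity and ergodicity theorems, the same Lyapunov function $V^{\ast}(x)=\sum_{k}c_{k}\{x_{k}-\tilde{x}_{k}-\tilde{x}_{k}\log(x_{k}/\tilde{x}_{k})\}$, the same identity $r_{k}+\sum_{l}a_{kl}\tilde{x}_{l}=\sigma_{k}^{2}/2$, and the same cross-term bound $2|y_{k}y_{l}|\leq y_{k}^{2}+y_{l}^{2}$ producing exactly the coefficients $2c_{k}a_{kk}+\sum_{l\neq k}(c_{k}|a_{kl}|+c_{l}|a_{lk}|)$ that Assumption~\ref{assumptionA4} is built around. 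The one substantive divergence is your formula for $\mathcal{L}V^{\ast}$: you carry the term $\tfrac{1}{2}\sum_{k}c_{k}\sigma_{k}^{2}x_{k}$, whereas the paper's display has the constant $\tfrac{1}{2}\sum_{k}c_{k}\sigma_{k}^{2}\tilde{x}_{k}$. Your version is in fact the correct It\^{o} computation --- the drift contributes $\tfrac{1}{2}\sum_{k}c_{k}\sigma_{k}^{2}(x_{k}-\tilde{x}_{k})$ in addition to the quadratic form, and the paper silently drops it.

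That extra accuracy is precisely where your final step needs more care. With the constant term, Assumption~\ref{assumptionA4} immediately places the ``bad'' ellipsoid $\{x:\sum_{k}(2c_{k}a_{kk}+\sum_{l\neq k}(c_{k}|a_{kl}|+c_{l}|a_{lk}|))(x_{k}-\tilde{x}_{k})^{2}+\sum_{k}c_{k}\sigma_{k}^{2}\tilde{x}_{k}\geq 0\}$ strictly inside $\mathbb{R}^{N}_{+}$, which is the entire design of that assumption. With your linear term, checking negativity of $\mathcal{L}V^{\ast}$ near the boundary by sending a single $x_{j}\to 0$ while the remaining coordinates sit near $\tilde{x}$ does not control the supremum over the whole face $\{x_{j}=0\}$: maximizing $\tfrac{1}{2}c_{k}\sigma_{k}^{2}x_{k}-\tfrac{1}{2}b_{k}(x_{k}-\tilde{x}_{k})^{2}$ over $x_{k}\geq 0$ (writing $b_{k}$ for the positive quantity $-2c_{k}a_{kk}-\sum_{l\neq k}(c_{k}|a_{kl}|+c_{l}|a_{lk}|)$) contributes $\tfrac{1}{2}c_{k}\sigma_{k}^{2}\tilde{x}_{k}+(c_{k}\sigma_{k}^{2})^{2}/(8b_{k})$, and the surplus $\sum_{k\neq j}(c_{k}\sigma_{k}^{2})^{2}/(8b_{k})$ is not covered by the strict inequality in Assumption~\ref{assumptionA4} as stated. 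So either show this surplus is absorbed by the slack in the assumption, or acknowledge that the uniform bound $\mathcal{L}V^{\ast}\leq -C^{\ast}$ off a compact subset of the open orthant requires a marginally strengthened hypothesis; as written, your compactness argument asserts the conclusion at exactly the point where your (correct) extra term bites. Everything else --- non-degeneracy from $\sigma_{k}>0$, non-explosion from Proposition~\ref{pro2.1}, and the invocation of the ergodic theorem --- matches the paper.
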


\begin{corollary}{}\label{pro2.4}
Under Assumptions~\ref{assumptionA1}, \ref{assumptionA3} and \ref{assumptionA4}, for any Borel measurable function $f(\cdot):\mathbb{R}^{N}_{+}\rightarrow \mathbb{R}$,
which is integrable
with respect to the density, $\pi(\cdot)$, of the stationary distribution, the solution of SGLV~model \eqref{1.1} has the property,
\begin{equation}\label{2.11}
\lim_{t\rightarrow+\infty}t^{-1}\int^{t}_{0}f(x(s))ds=\int_{\mathbb{R}^{N}_{+}}f(x)\pi(dx).\nonumber
\end{equation}
\end{corollary}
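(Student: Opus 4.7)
The plan is to derive Corollary~\ref{pro2.4} as a direct consequence of Proposition~\ref{pro2.3}, together with a standard truncation argument to pass from bounded to $\pi$-integrable test functions.

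First, I would invoke the content of Proposition~\ref{pro2.3}: the solution $x(t)$, being a time-homogeneous Markov diffusion (Proposition~\ref{pro2.1}) with unique stationary distribution $\pi$, is ergodic. In its standard formulation for Markov processes, ergodicity together with the existence of an invariant probability measure $\pi$ implies that for every bounded Borel function $g:\mathbb{R}^{N}_{+}\to\mathbb{R}$,
$$\lim_{t\to\infty} t^{-1}\int_0^t g(x(s))\,ds = \int_{\mathbb{R}^{N}_{+}} g(x)\,\pi(dx)\quad P\text{-a.s.}$$
This gives Corollary~\ref{pro2.4} in the bounded case.

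Second, I would extend the convergence to an arbitrary $\pi$-integrable $f$. Decomposing $f=f^{+}-f^{-}$ reduces matters to nonnegative $f$. For $M>0$ set $f_M = f\wedge M$; each $f_M$ is bounded, so the first step yields $t^{-1}\int_0^t f_M(x(s))\,ds\to \int f_M\,d\pi$ a.s. Since $f\geq f_M$, taking $\liminf_{t\to\infty}$ and then $M\to\infty$ (monotone convergence on the spatial integral) gives the lower bound $\liminf_t t^{-1}\int_0^t f(x(s))\,ds\geq \int f\,d\pi$ almost surely. For the matching upper bound, I would apply the bounded case to $f_M$ and control the residual $f-f_M\geq 0$: by $\pi$-integrability, $\int(f-f_M)\,d\pi\to 0$ as $M\to\infty$, and applying the same truncate-and-pass-to-the-limit procedure to $f-f_M$ shows that $\limsup_t t^{-1}\int_0^t (f-f_M)(x(s))\,ds$ can be made arbitrarily small.

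The principal obstacle is this second step, namely extending from bounded $g$ to unbounded $\pi$-integrable $f$. Once the ergodic property established in Proposition~\ref{pro2.3} is available, this extension is standard in Markov-process ergodic theory and relies only on $f\in L^1(\pi)$ together with monotone/dominated convergence; there is no additional stochastic analysis required beyond what is already encapsulated in Propositions~\ref{pro2.1} and~\ref{pro2.3}.
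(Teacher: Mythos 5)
Your first step matches the paper: Proposition~\ref{pro2.3} is invoked to supply the stationary distribution and the ergodic property. But the paper's own proof then simply cites Khasminskii's ergodicity theorem (Theorem~4.2 of \cite{RZH:1980}, reproduced in Appendix~\ref{AppendProof}), which is already stated for every $f$ integrable with respect to the stationary measure --- not merely for bounded $f$ --- so no truncation step is needed. That theorem handles unbounded integrable $f$ by a regeneration argument (cycles between successive returns to the recurrent compact set, plus the strong law of large numbers for the cycle integrals), which is precisely the machinery your truncation is trying to substitute for.

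The gap is in your upper bound. For nonnegative $f$ and $f_M=f\wedge M$, the bounded case plus monotone convergence does give $\liminf_{t\to\infty} t^{-1}\int_0^t f(x(s))\,ds\ge\int f\,d\pi$ almost surely. But your claim that ``applying the same truncate-and-pass-to-the-limit procedure to $f-f_M$'' makes $\limsup_{t\to\infty} t^{-1}\int_0^t (f-f_M)(x(s))\,ds$ small is circular: applied to the nonnegative unbounded function $g=f-f_M$, your procedure only produces the one-sided bound $\liminf_{t\to\infty} t^{-1}\int_0^t g(x(s))\,ds\ge\int g\,d\pi$, which is the wrong direction. Knowing that $\int g\,d\pi$ is small gives no a priori control on the $\limsup$ of its time average; the process could in principle spend just enough time where $g$ is huge to inflate it. Controlling that $\limsup$ is exactly the hard content of the ergodic theorem for $L^1(\pi)$ functions, and it requires either Birkhoff's theorem under the stationary law $P_\pi$ followed by a transfer to arbitrary initial points via recurrence and the strong Markov property, or Khasminskii's cycle decomposition. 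As written, your argument establishes only the lower-bound half of the corollary.
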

This proposition and corollary are key to obtaining the limit of the
average of the continuous log-likelihood function with respect to time, and also crucial to the proof of our asymptotic theory.

\subsection{Parameter Estimation for the Stochastic Generalized Lotka-Volterra Model}\label{parameterestimation}
In this section, we develop approximate maximum likelihood estimators of the parameters $ \{r_k, 1\leq k\leq N\} $, $ \{a_{kl}, 1\leq k\leq N, 1\leq l\leq N\} $ and $ \{\sigma_k, 1\leq k\leq N\} $.

Let $u_{k}(t)=\log x_{k}(t)$ for $ t\in [0, T] $ and $R_{k}=r_{k}-\sigma^{2}_{k}/2$. By It\^{o}'s formula we have
\begin{equation}\label{logSDE}
\mathrm{d}u_{k}(t)= \left[R_{k}+\sum^{N}_{l=1}a_{kl}\exp\{u_{l}(t)\}\right]\mathrm{d}t+\sigma_{k}\mathrm{d}B_{k}(t),
\end{equation}
from which the true log-likelihood function can be derived for
continuously-observed data. However, in practice the data $\{x_k(t_i), 1\leq i\leq n, 1\leq k\leq N\}$ are only observed at a
sequence of discrete time points $ 0=t_1<t_2<\ldots<t_n=T$. We approximate stochastic differential equation \eqref{logSDE} over intervals $[t_{i},t_{i+1}]$
using Euler's approximation. Let $\{\epsilon_{k,i},k=1, \cdots,N, i=1,
\cdots, n-1 \}$ denote independent and identically distributed standard normal distributions, $\Delta_{i,t}=t_{i+1}-t_i,\Delta_{i}u_{k}=u_{k}(t_{i+1})-u_{k}(t_{i})$,
\begin{equation}\label{disSDE}
\Delta_{i}u_{k} \approx\left[R_{k}+\sum^{N}_{l=1}a_{kl}\exp\{u_{l}(t_i)\}\right]\Delta_{i,t}
+\sigma_{k}\Delta^{1/2}_{i,t}\epsilon_{k,i}.
\end{equation}
Let $\mathcal{F}_{i}$ denote a sigma algebra generated by $\{u_{1}(t_i),...,u_{N}(t_i)\}$.
By the Markov property,
the approximate likelihood of the $(i+1)$th observation is
\begin{eqnarray}
&&f(u(t_{i+1})|\mathcal{F}_{i})\nonumber\\
&=& \prod^{N}_{k=1} [(2\pi)^{-1/2} \Delta^{-1/2}_{i,t}\sigma_{k}^{-1}\exp\{-
  2^{-1}\sigma^{-2}_{k}\Delta^{-1}_{i,t}(\Delta_{i}u_{k}- [R_{k}+\sum^{N}_{l=1}a_{kl}\exp\{u_{l}(t_i)\}]\Delta_{i,t})^{2}\}].\nonumber
\end{eqnarray}
Ignoring constant terms, the approximate log-likelihood function is
\begin{eqnarray}\label{EulerEstimate}
&& \mathcal{L}_{n,T}(\vartheta,\sigma^{2})\nonumber\\
&=& -\sum^{N}_{k=1}\left\{(n-1)\log\sigma^{2}_{k}
+\sum^{n-1}_{i=1}\sigma^{-2}_{k}\Delta^{-1}_{i,t}\left(\Delta_{i}u_{k}-\left[R_{k}+\sum^{N}_{l=1}a_{kl}
\exp\{u_{l}(t_{i})\}\right]
\Delta_{i,t}\right)^{2}\right\}.\nonumber
\end{eqnarray}
Let $\vartheta=(\vartheta_{1},...,\vartheta_{N})^{\T}\in
\mathbb{R}^{N(N+1)} $ with
$\vartheta_{k}=(r_{k},a_{k1},...,a_{kN})^{\T}\in \mathbb{R}^{N+1}$ be
the drift parameters, and
$\sigma^{2}=(\sigma^{2}_{1},...,\sigma^{2}_{N})^{\T}$ the diffusion parameters. The approximate log-likelihood is
$$  \mathcal{L }_{n,T}(\vartheta,\sigma^{2})=\log(L(u(t_{1}),...,u(t_{n})|\mathcal{F}_{0})) = \mathcal{G}_{n,T}(\sigma^{2})+\ell_{n,T}(\vartheta),$$
where
$\mathcal{G}_{n,T}(\sigma^{2}) =-(n-1)\sum^{N}_{k=1}\log \sigma_{k}-\sum^{N}_{k=1}\sum_{i=1}^{n-1}2^{-1}\sigma^{-2}_{k}\Delta^{-1}_{i,t}\{\Delta_{i}u_{k}\}^{2}\nonumber
$
does not depend on the drift parameters $\vartheta$, and $\ell_{n,T}(\vartheta)=\sum^{N}_{k=1}\sigma^{-2}_{k}\ell_{n,T}(\vartheta_{k})$
is the discrete version of the continuous log-likelihood, with $\ell_{n,T}(\vartheta_{k})$ defined as
$$  \ell_{n,T}(\vartheta_{k}) =  \sum_{i=1}^{n-1}\Delta_{i}u_{k}\left[R_{k}+\sum^{N}_{l=1}a_{kl}\exp\{u_{l}(t_i)\}\right]-
\sum_{i=1}^{n-1}\left[R_{k}+\sum^{N}_{l=1}a_{kl}\exp\{u_{l}(t_i)\}\right]^{2}\Delta_{i,t}/2.$$
The approximate maximum likelihood estimators of $\vartheta$ and $\sigma^{2}$, denoted by $\widehat{\vartheta}_{n,T}$ and $\widehat{\sigma}^{2}_{n,T}$ respectively, can be solved from
$\partial \mathcal{L}_{n,T}(\vartheta,\sigma^{2})/\partial \vartheta =0$~ and~$
\partial \mathcal{L}_{n,T}(\vartheta,\sigma^{2})/\partial \sigma^{2} =0.\nonumber$
As $\mathcal{G}_{n,T}(\sigma^{2})$ does not depend on $\vartheta$,
$\widehat{\vartheta}_{n,T}=(\hat{\vartheta}_{1,n,T},...,\hat{\vartheta}_{N,n,T})^{\T}$
can be solved from $\partial \ell_{n,T}(\vartheta)/\partial
\vartheta=0$ directly. The closed forms of the approximate maximum
likelihood estimators of SGLV~model
\eqref{1.1}, $\widehat{\sigma}^{2}_{n,T}$ and
$\widehat{\vartheta}_{n,T}$ are
\begin{align}\label{MLEestimators}
  \widehat{\sigma}^{2}_{k,n,T} &= (n-1)^{-1}\sum^{n-1}_{i=1}\left(\Delta_{i}u_{k}-\left[\hat{R}_{k}+\sum^{N}_{l=1}\hat{a}_{kl}\exp\{u_{l}(t_{i})\}\right]\Delta_{i,t}\right)^{2}\Delta^{-1}_{i,t},\nonumber\\
\hat{r}_{k,n,T} &=\hat{R}_{k}+\widehat{\sigma}^{2}_{k,n,T}/2,\quad \hat{R}_{k} =T^{-1}\left[u_{k}(t_n)-u_{k}(t_1)-\sum^{n-1}_{i=1}\sum^{N}_{l=1}\Delta_{i,t}\hat{a}_{kl} \exp\{u_{l}(t_{i})\}\right],\nonumber\\
  \hat{a}_{kp} &=\left\{\begin{array}{ll}
  \left(L^{-1}M\right)_{kp},&\textrm{if }k=p,\\
  -\left(L^{-1}M\right)_{kp},&\textrm{otherwise,}\end{array}\right.
\end{align}
where
\begin{align*}
L_{l,s} &= \sum_{i=1}^{n-1}\Delta_{i,t} \left[\sum_{i=1}^{n-1}\exp\{u_{l}(t_i)+u_{s}(t_i)\}\Delta_{i,t}\right]
-\left[\sum_{i=1}^{n-1}\exp\{u_{l}(t_i)\}\Delta_{i,t}\right]\left[ \sum_{i=1}^{n-1}\exp\{u_{s}(t_i)\}\Delta_{i,t}\right],\\
  M_{k,p} &= \{u_{k}(t_{n})-u_{k}(t_1)\}\sum_{i=1}^{n-1}\exp\{u_{p}(t_i)\}\Delta_{i,t}-
  \sum^{n-1}_{i=1}\Delta_{i,t}\left[\sum_{i=1}^{n-1}\{u_{k}(t_{i+1})-u_{k}(t_i)\}\exp\{u_{p}(t_i)\}\right].
\end{align*}

The estimators of interaction coefficients $\hat{a}_{kl}$ in
formula~\eqref{MLEestimators} may be positive or negative, representing a
stimulatory or antagonistic microbial interaction.

\section{Asymptotic Theory}\label{theory}
In this section, we derive consistency and asymptotic normality
properties for the approximate maximum likelihood estimators in formula~\eqref{MLEestimators}. Our case
is more challenging than conventional theories for maximum likelihood estimators, because we
need to account for the approximation error in addition to the usual
statistical error.  To prove the theorems, we first show that the
differences between the approximate maximum likelihood estimators
$\widehat{\sigma}^{2}_{k,n,T}$ and $\widehat{\vartheta}_{n,T}$
obtained in formula~\eqref{MLEestimators} and the continuous-time maximum likelihood estimators,
denoted by $\widehat{\sigma}^{2}_{k,T}$ and $\widehat{\vartheta}_{T}$,
are neglegible.  Our next step is to show that $
\widehat{\sigma}^{2}_{k,n,T}$ and $\widehat{\vartheta}_{n,T}$ converge
to the true values, denoted by $ \sigma^{2}$ and $ \vartheta^{0}$. The
detailed proof is deferred to Appendix~\ref{AppendProofTheorem}.

Before we present our main theorem, we first define the maximum likelihood estimators $
\widehat{\sigma}^{2}_{k,T}$ and $\widehat{\vartheta}_{T}$ based on the
true continuous likelihood function. The estimators $
\widehat{\sigma}^{2}_{k,T}$ are uniquely determined through the
following equations
$$\label{trueMLEsigma}
  \widehat{\sigma}^{2}_{k,T}=\lim_{n\rightarrow+\infty}T^{-1
  }\sum^{2^{n}}_{i=1}\{u_{k}(iT/2^{n})-u_{k}((i-1)T/2^{n})\}^{2}, k=1,...,N.\nonumber
$$
The estimation of drift parameters $\hat{\vartheta}_T$ is performed by
maximum likelihood.

Let $\Delta_{\max}=\max_{i=0,\ldots,n-1}\Delta_{i,t}$. We need one of
the following conditions in addition to
Assumptions~\ref{assumptionA1}--\ref{assumptionA4}:
\begin{assumption}
\label{assumptionA5}
$T$ is fixed, and $\Delta_{\max}\rightarrow0 $.
\end{assumption}
\begin{assumption}
\label{assumptionA6}
(I) $T\rightarrow+\infty$ and $\Delta_{\max}\rightarrow0 $; (II) $T\rightarrow+\infty$ and
$T\Delta_{\max}\rightarrow 0$.
\end{assumption}

\begin{theorem}\label{FixedT}
Under Assumptions~\ref{assumptionA1}, \ref{assumptionA2}, and
\ref{assumptionA5}, we have that, conditional on the maximum likelihood
estimators lying in some compact parameter space $\mathcal{K}$,

(i) $\|\hat{\vartheta}_{n,T}-\hat{\vartheta}_{T}\|=O_{p}\left(\Delta^{1/2}_{\max}\right)$.

(ii) For any $k\in\{1,...,N\}$,
$(n/2)^{1/2}\sigma^{-2}_{k}(\widehat{\sigma}^{2}_{k,n,T}-\sigma^{2}_{k}) \rightarrow N(0,1).\nonumber$
\end{theorem}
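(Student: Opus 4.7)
The plan is to first reduce both parts to a comparison with the continuous-time MLE, and then separately control the Euler discretization error and the parameter-estimation mismatch. Throughout I will use the moment bounds of Proposition~\ref{pro2.2}, the positivity result of Proposition~\ref{pro2.1}, and confinement of the estimators to the compact set $\mathcal{K}$.

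For part (i), the continuous-time log-likelihood obtained from~\eqref{logSDE} via Girsanov's theorem has the form
$$\ell_T(\vartheta_k) = \int_0^T h_k(s;\vartheta_k)\,du_k(s) - \frac{1}{2}\int_0^T h_k(s;\vartheta_k)^2\,ds,\qquad h_k(s;\vartheta_k) = R_k + \sum_{l=1}^N a_{kl}e^{u_l(s)}.$$
Since $\ell_T$ and $\ell_{n,T}$ are both quadratic in $\vartheta_k$, the two MLE's solve linear systems $H_T\hat\vartheta_{k,T} = b_T$ and $H_{n,T}\hat\vartheta_{k,n,T} = b_{n,T}$, whose entries differ only by Riemann or stochastic-integral sums of the form $\sum_i e^{u_l(t_i)+u_s(t_i)}\Delta_{i,t}$ and $\sum_i e^{u_l(t_i)}\Delta_i u_k$ versus their integral analogues. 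I would bound $\|H_{n,T}-H_T\|$ by applying It\^o's formula to $e^{u_l+u_s}$ together with the moment bounds of Proposition~\ref{pro2.2}, and bound $\|b_{n,T}-b_T\|$ at the same rate using the It\^o isometry on the martingale part. Both errors are $O_p(\Delta_{\max}^{1/2})$, and confinement to $\mathcal{K}$ together with the nonsingularity of $H_T$ there yields $\|\hat\vartheta_{n,T}-\hat\vartheta_T\| = O_p(\Delta_{\max}^{1/2})$.

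For part (ii), substitute the exact dynamics into the residual: with $D_{k,i}^0 = R_k + \sum_l a_{kl}^0 e^{u_l(t_i)}$,
$$\Delta_i u_k - D_{k,i}^0\Delta_{i,t} = \sigma_k\{B_k(t_{i+1})-B_k(t_i)\} + \rho_{k,i},\qquad \rho_{k,i} = \int_{t_i}^{t_{i+1}}\{h_k(u(s))-h_k(u(t_i))\}\,ds = O_p(\Delta_{i,t}^{3/2}),$$
where the bound on $\rho_{k,i}$ follows from It\^o's formula applied to $e^{u_l(s)}$ and Proposition~\ref{pro2.2}. Expanding $(\Delta_i u_k - \hat D_{k,i}\Delta_{i,t})^2/\Delta_{i,t}$ and averaging over $i$ produces a leading term $\sigma_k^2(n-1)^{-1}\sum_{i=1}^{n-1}\xi_{k,i}^2$, with $\xi_{k,i} = \{B_k(t_{i+1})-B_k(t_i)\}/\Delta_{i,t}^{1/2}$ i.i.d.\ standard normal, plus three kinds of error terms: one involving only $\rho_{k,i}$, a cross-term between Brownian increments and $\hat D_{k,i}-D_{k,i}^0$, and one involving $(\hat D_{k,i}-D_{k,i}^0)^2$. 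Each is shown to be $o_p(n^{-1/2})$ using part (i), a Riemann-sum bound of order $\Delta_{\max}$ for $(n-1)^{-1}\sum_i(\hat D_{k,i}-D_{k,i}^0)^2\Delta_{i,t}$, and Burkholder--Davis--Gundy inequalities on the martingale parts. Applying the classical CLT for i.i.d.\ $\chi_1^2$ variables (variance $2$) to the leading term delivers $(n/2)^{1/2}\sigma_k^{-2}(\hat\sigma^2_{k,n,T}-\sigma_k^2)\to N(0,1)$.

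The main obstacle will be the accounting of perturbations in part~(ii): the Euler residuals $\rho_{k,i}$ and the drift mismatches $\hat D_{k,i}-D_{k,i}^0$ appear inside squares and are coupled with Brownian increments, so one must combine BDG bounds with the moment bounds of Proposition~\ref{pro2.2} and the rate from part~(i) carefully to confirm negligibility at the $n^{-1/2}$ scale (which implicitly uses sufficient regularity of the partition so that $n^{1/2}\Delta_{\max}\to 0$). A secondary difficulty in part~(i) is ensuring that $H_T$ is uniformly nonsingular; this is where the compactness of $\mathcal{K}$ plays an essential role, since without it the inverse map need not be Lipschitz and the linear rate transfer from scores to estimators would fail.
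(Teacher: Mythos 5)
Your proposal follows essentially the same route as the paper: for (i) you exploit the fact that $\ell_T$ and $\ell_{n,T}$ are quadratic in $\vartheta$, reduce $\|\hat{\vartheta}_{n,T}-\hat{\vartheta}_{T}\|$ to the difference of the associated linear systems (the paper phrases this as an exact Taylor expansion of the score, with the score difference controlled uniformly over $\overline{\mathcal{K}}$ by its Lemma~S1), and bound that difference at rate $\Delta_{\max}^{1/2}$ via It\^{o} isometry and the moment bounds of Proposition~\ref{pro2.2}; for (ii) you center $\widehat{\sigma}^{2}_{k,n,T}$ at $n^{-1}\sigma^2_k\sum_i\{\Delta_iB_k(t)\}^2\Delta_{i,t}^{-1}$ and apply the $\chi^2_1$ central limit theorem, exactly as the paper does. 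Your three-term error accounting in (ii) is in fact more honest than the paper's own two-term decomposition, which silently drops the drift-mismatch contribution.

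Three points need tightening. First, the nonsingularity of $H_T$ has nothing to do with compactness of $\mathcal{K}$: for this model the $a$-block of $\ell''_T$ is $-\int_0^T\exp\{u_l(t)+u_m(t)\}dt$ and does not depend on $\vartheta$ at all, so what is needed (and what the paper also asserts only loosely) is that this path-dependent Gram matrix is a.s.\ nonsingular with $\|\{\ell''_T\}^{-1}\|=O_p(1)$; compactness of $\mathcal{K}$ enters only to make $\sup|a_{kl}|$, $\sup|R_k|$ finite in the uniform bound on the score difference. Second, you do not need $n^{1/2}\Delta_{\max}\to 0$: since $\sum_i\Delta_{i,t}=T$ is fixed, one has $\sum_i\Delta_{i,t}^2\le T\Delta_{\max}$ and $(n-1)^{-1}\sum_i\Delta_{i,t}=T/(n-1)$, which already makes all your error terms $o_p(n^{-1/2})$ under Assumption~\ref{assumptionA5} alone; importing a partition-regularity hypothesis would weaken the stated theorem. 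Third, and this is the one step that fails if done naively: the cross term $(n-1)^{-1}\sum_i\Delta_iB_k\,(\hat{D}_{k,i}-D^0_{k,i})$ is only $O_p(n^{-1/2})$ by Cauchy--Schwarz, hence $O_p(1)$ after multiplying by $n^{1/2}$, because for fixed $T$ the drift mismatch is $O_p(1)$ and not small (Remark~\ref{remarktheorem1}: $\hat{\vartheta}_T$ need not be close to $\vartheta^0$). To get $o_p(1)$ you must factor the non-adapted coefficients $\hat{R}_k-R^0_k$ and $\hat{a}_{kl}-a^0_{kl}$ out of the sum and recognize $\sum_i\Delta_iB_k\exp\{u_l(t_i)\}$ as an $O_p(1)$ discrete stochastic integral over the fixed window $[0,T]$ (It\^{o} isometry plus Proposition~\ref{pro2.2}), or equivalently invoke the normal-equation orthogonality of the fitted residuals to the design. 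Your mention of Burkholder--Davis--Gundy bounds on the martingale parts points in the right direction, but this is the step that must be made explicit for the argument to close.
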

\begin{remark}\label{remarktheorem1}
Theorem \ref{FixedT} gives the convergence rate and asymptotic
normality of diffusion parameter estimates for fixed observation time
$T$. For the drift parameters, this theorem only gives the rates of
$\|\hat{\vartheta}_{n,T}-\hat{\vartheta}_{T}\|$, the discrepancy
between the approximate maximum likelihood estimators and the
continuous-time maximum likelihood estimators, and this discrepancy decreases
at rate $ \Delta^{1/2}_{\max} $. However, this theorem does not provide
any guarantee about $\|\hat{\vartheta}_{n,T}-\vartheta_{0}\|$ because
the continuous-time maximum likelihood estimators $
\hat{\vartheta}_{T} $ may not converge to the true parameter values $
\vartheta_{0} $ due to the fact that the ergodicity of $x(t)$ does not
apply to finite $T$. Therefore, for fixed $ T $, our approximate
maximum likelihood estimators might be biased regardless of the value
of $n$. The bias comes from two parts: the gap
$\|\hat{\vartheta}_{T}-\vartheta_{0}\|$ and Euler's
approximation. These parts decrease when $T$ increases and
$\Delta_{\max}$ decreases respectively.
\end{remark}

To make the bias asymptotically negligible, we need to consider infinite observation time $ T $. In particular, the following
theorem adapts Theorem~\ref{FixedT} to ergodic diffusions. The
ergodicity is essential for the asymptotic theory to hold as
$T\rightarrow+\infty$.
We define the matrix
$\mathcal{I}={\rm diag}\left(\sigma^{-2}_{1}I_{1},...,\sigma^{-2}_{N}I_{N}\right)$, where
\begin{equation}\label{Information}
  I_{k}(\vartheta_{k}) =-\partial^{2} \ell_{\vartheta^{0}_{k}}(\vartheta_{k})/(\partial \vartheta_k)^{2}=\int_{\mathbb{R}^{N}_{+}}\mu_{k}(u)P_{\vartheta^{0}}(du)
\end{equation}
where the continuous log-likelihood of the $k$th variable is
\begin{align*}
  \ell_{\vartheta^{0}_{k}}(\vartheta_{k})
=&\int_{\mathbb{R}^{N}}\left[R_{k}+\sum^{N}_{l=1}a_{kl}\exp\{u_{l}(t)\}\right]\left[\tilde{R}^{0}_{k}
  +\sum^{N}_{l=1}a^{0}_{kl}\exp\{u_{l}(t)\}\right]P_{\vartheta_{0}}(du_{k})\\
&\qquad-\frac{1}{2}\int_{\mathbb{R}^{N}}\left[R_{k}+\sum^{N}_{l=1}a_{kl}\exp\{u_{l}(t)\}\right]^{2}P_{\vartheta_{0}}(du_{k}),
\end{align*}
$P_{\vartheta^{0}}(du)$ is the stationary distribution of $u(t)$ and
$$\mu_{k}={\rm diag}\left\{\left(\frac{\partial}{\partial \vartheta_{k}}\left[R_{k}+\sum^{N}_{l=1}a_{kl}\exp\{u_{l}(t)\}\right]\right)^{\T}\left(\frac{\partial}{\partial \vartheta_{k}}\left[R_{k}+\sum^{N}_{l=1}a_{kl}\exp\{u_{l}(t)\}\right]\right)\right\}_{\tilde{N}\times\tilde{N}}.$$

\begin{theorem}\label{NoFixT}
Under Assumptions~\ref{assumptionA1}--\ref{assumptionA4} and~\ref{assumptionA6}(I), we have
\begin{align*}
  \|\hat{\vartheta}_{n,T}-\hat{\vartheta}_{T}\|=O_{p}(\Delta^{1/2}_{\max}),\quad
  \|\hat{\vartheta}_{n,T}-\vartheta^{0}\|=o_{p}(1).
\end{align*}
If we further assume Assumption~\ref{assumptionA6}(II), then
\begin{align*}
  T^{1/2}(\hat{\vartheta}_{n,T}-\vartheta^{0}) \rightarrow N(0, \mathcal{I}^{-1}(\vartheta^{0})),\quad
(n/2)^{1/2}\sigma^{-2}_{k}(\widehat{\sigma}^{2}_{k,n,T}-\sigma^{2}_{k}) \rightarrow N(0,1)
\end{align*}
where $\mathcal{I}(\vartheta)={\rm diag}\{\sigma^{-2}_{1} I_{1}(\vartheta_{1}),...,\sigma^{-2}_{N}I_{N}(\vartheta_{N})\}$, $I_{k}(\vartheta_{k})$ is defined in (\ref{Information}), $k=1,...,N$.
\end{theorem}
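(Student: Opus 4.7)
The plan is to split
$$\hat{\vartheta}_{n,T}-\vartheta^0=(\hat{\vartheta}_{n,T}-\hat{\vartheta}_T)+(\hat{\vartheta}_T-\vartheta^0)$$
and handle the two pieces separately: the Euler-discretization gap by refining the argument of Theorem~\ref{FixedT}(i) to obtain constants uniform in $T$, and the continuous-time MLE piece by standard ergodic-diffusion asymptotics based on Proposition~\ref{pro2.3} and Corollary~\ref{pro2.4}. The new ingredients compared with Theorem~\ref{FixedT} are precisely the stationary distribution and ergodicity, which pin down the limit of the observed information and anchor $\hat{\vartheta}_T$ at $\vartheta^0$.

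For the discretization gap, I would rerun the Taylor expansion of $\partial_\vartheta\ell_{n,T}$ used in Theorem~\ref{FixedT}, but normalize the observed information by $T^{-1}$. Corollary~\ref{pro2.4} applied to the matrix-valued integrand $\mu_k$ gives $T^{-1}\{-\partial^2_\vartheta\ell_{n,T}(\vartheta)\}\to\mathcal{I}(\vartheta)$ uniformly on compact neighborhoods of $\vartheta^0$, and the limit is positive definite by Assumption~\ref{assumptionA3}. The score gap $\partial_\vartheta\ell_{n,T}(\vartheta)-\partial_\vartheta\ell_T(\vartheta)$ decomposes into Riemann-sum residuals $\sum_i f(u(t_i))\Delta_{i,t}-\int_0^T f(u(s))ds$ and stochastic-integral residuals $\sum_i f(u(t_i))\Delta_i u_k-\int_0^T f(u(s))du_k(s)$, with $f$ polynomial in $\exp(u_l)$; each is controlled in $L^2$ uniformly in $T$ via Proposition~\ref{pro2.2} and an It\^o expansion, yielding $\|\hat{\vartheta}_{n,T}-\hat{\vartheta}_T\|=O_p(\Delta_{\max}^{1/2})$. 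Consistency of $\hat{\vartheta}_T$ then follows from a standard M-estimator argument: Corollary~\ref{pro2.4} gives $T^{-1}\ell_T(\vartheta)\to\ell_{\vartheta^0}(\vartheta)$ uniformly on compacts, and $\vartheta^0$ is the unique maximizer since its population score vanishes and its Hessian $-\mathcal{I}(\vartheta^0)$ is negative definite, so $\hat{\vartheta}_T\to\vartheta^0$ in probability and thus $\|\hat{\vartheta}_{n,T}-\vartheta^0\|=o_p(1)$.

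For the CLT, I would Taylor-expand the continuous score around $\vartheta^0$ to obtain
$$T^{1/2}(\hat{\vartheta}_T-\vartheta^0) = \{-T^{-1}\partial^2_\vartheta\ell_T(\bar{\vartheta})\}^{-1}T^{-1/2}\partial_\vartheta\ell_T(\vartheta^0).$$
The denominator converges to $\mathcal{I}(\vartheta^0)$ by Corollary~\ref{pro2.4}. The numerator is, via It\^o applied to $u_k$, a continuous martingale whose $k$th block reduces to $\sigma_k^{-1}\int_0^T\partial_{\vartheta_k}\{R_k+\sum_l a_{kl}e^{u_l(s)}\}dB_k(s)$; its quadratic variation divided by $T$ converges by ergodicity to $\mathcal{I}(\vartheta^0)$, and the continuous-martingale central limit theorem delivers $N(0,\mathcal{I}^{-1}(\vartheta^0))$. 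Adding back $T^{1/2}(\hat{\vartheta}_{n,T}-\hat{\vartheta}_T)=O_p((T\Delta_{\max})^{1/2})=o_p(1)$ under Assumption~\ref{assumptionA6}(II) transfers the limit to $\hat{\vartheta}_{n,T}$. For $\hat{\sigma}^2_{k,n,T}$, which is essentially the realized quadratic variation of $u_k$, substituting the Euler expansion~\eqref{disSDE} gives $(n/2)^{1/2}\sigma_k^{-2}(\hat{\sigma}^2_{k,n,T}-\sigma^2_k)=(2(n-1))^{-1/2}\sum_i(\epsilon_{k,i}^2-1)+o_p(1)$ with remainder of order $(T\Delta_{\max})^{1/2}$, and the leading sum converges to $N(0,1)$ by the classical CLT.

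The main obstacle is proving the $O_p(\Delta_{\max}^{1/2})$ score-gap bound with constants uniform in $T$; only then does $T^{1/2}\|\hat{\vartheta}_{n,T}-\hat{\vartheta}_T\|=O_p((T\Delta_{\max})^{1/2})$ match Assumption~\ref{assumptionA6}(II) so cleanly. The residuals mix deterministic Riemann-sum errors with stochastic It\^o corrections involving $e^{u_l(t)}$, and aggregating them at the rate $T^{1/2}\Delta_{\max}^{1/2}$ rather than a looser rate requires pairing the high-order moment bounds of Proposition~\ref{pro2.2} with Burkholder-Davis-Gundy estimates. Once that rate and the ergodicity are in hand, everything else is a routine application of martingale-CLT and M-estimator machinery.
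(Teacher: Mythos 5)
Your proposal is correct and follows the same skeleton as the paper's proof: the decomposition $\hat{\vartheta}_{n,T}-\vartheta^{0}=(\hat{\vartheta}_{n,T}-\hat{\vartheta}_{T})+(\hat{\vartheta}_{T}-\vartheta^{0})$, a $T^{-1}$-normalized uniform score-gap bound of order $\Delta_{\max}^{1/2}$ (the paper's Lemma~S2, the infinite-horizon analogue of Lemma~S1), an inverse observed information of order $T^{-1}$ so that the product gives $\|\hat{\vartheta}_{n,T}-\hat{\vartheta}_{T}\|=O_{p}(\Delta_{\max}^{1/2})$ and $T^{1/2}\|\hat{\vartheta}_{n,T}-\hat{\vartheta}_{T}\|=O_{p}((T\Delta_{\max})^{1/2})=o_{p}(1)$ under Assumption~\ref{assumptionA6}(II), and the treatment of $\widehat{\sigma}^{2}_{k,n,T}$ as realized quadratic variation plus a remainder of order $(T\Delta_{\max})^{1/2}$. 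The one place where your route genuinely diverges is the continuous-time piece: you propose to prove consistency of $\hat{\vartheta}_{T}$ by a uniform law of large numbers and M-estimator argument from Corollary~\ref{pro2.4}, and asymptotic normality of $T^{1/2}(\hat{\vartheta}_{T}-\vartheta^{0})$ via the continuous-martingale central limit theorem applied to the score $\sigma_{k}^{-1}\int_{0}^{T}\partial_{\vartheta_{k}}\{R_{k}+\sum_{l}a_{kl}e^{u_{l}(s)}\}\,dB_{k}(s)$, whereas the paper imports both facts from the classical ergodic-diffusion literature (Brown and Hewitt 1975; Huzak 2018), using those references also to obtain the uniform lower bound $\lambda_{0}/2$ on the eigenvalues of $-T^{-1}\ell''_{T}(\hat{\vartheta}_{T})$. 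Your self-contained version is more transparent but obliges you to verify identifiability (that $\vartheta^{0}$ uniquely maximizes the quadratic limit $\ell_{\vartheta^{0}}(\cdot)$, i.e.\ that $\mathcal{I}(\vartheta^{0})$ is nonsingular, which rests on non-degeneracy of the stationary density rather than on Assumption~\ref{assumptionA3} alone) and the uniform convergence on compacts; the citation route gets these for free once ergodicity and the moment bounds of Proposition~\ref{pro2.2} are established. Otherwise the two arguments are interchangeable.
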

\begin{remark}
The theorem states that approximate maximum likelihood estimators are consistent and asymptotically
efficient when the observation time goes to infinity and maximum step size $\Delta^{1/2}_{\max}$ goes to zero.
\end{remark}

\section{Simulation Study}
In this section, we investigate the finite sample performance of the
proposed method. We set $N=5$, with initial values
$x_{0}=(0.5,0.15,0.13,0.05,0.04)^{\T}$. We simulate the temporal
dynamics using Euler's approximation~\eqref{disSDE} with a time step
$\Delta=0.01$.  To simulate a discrete sample, we simulate $n-1$
sample time steps $\Delta_{i,t}=t_{n,i+1}-t_{n,i}$, $1\leq i\leq n-1$
independently from the distribution $\Delta_{i,t}=\{0.1,0.3,0.5\}$
with probability $\{0.7,0.2,0.1\}$ respectively.
The $n$ time points $t_{n,1}<t_{n,2}<\ldots<t_{n,n}$ are then
generated by $t_{n,j} =\sum_{i=1}^{j-1} \Delta_{i,t}$ (so
$t_{n,1}=0$). We consider the following two parameter settings
including both positive and negative interaction coefficients:
\begin{description}
  \item
  Case 1: $r=(1,1.5,2,1.5,2)^{\T},\sigma=(0.1,0.1,0.1,0.1,0.1)^{\T},{\scalebox{0.9} \bf{A}=\left(
  \begin{array}{rrrrr}
    -2 & -2.5 & -2  & 1 & 1 \\
    1 & -6  & -2 & 3 & -1 \\
    -1 & -2 & -5  & 1 & -1 \\
    -1 & 0.5 & 0.1  & -10 & 1\\
    -1.5 & -2  & -2 & 2 & -9 \\
  \end{array}
\right).}$
  \item Case 2:~~Same setting as Case~1 except that $\sigma=(1,1,1,1,1)^{\T}$.
\end{description}

The interaction coefficients and growth rates are designed to be
similar to the coefficients estimated from the real data in
Section~\ref{RealData}. For each case, we consider three sample sizes,
$n\in\{300,500,1000\}$. We compare the
estimators based on the SGLV model~\eqref{1.1} with the
least squares estimation~\citep{cao2017inferring,bucci2016mdsine,gibson2016origins} of
deterministic GLV model~\eqref{2.1}. We report
the mean squared error
of the parameter estimates. As the deterministic GLV  Model does not estimate the
diffusion parameter $\sigma$, we do not report it for this approach.

We report the mean squared errors of the parameters for
Case~1, $n=1000$ based on 1000 simulations in Table~\ref{Case1-1000},
\renewcommand\arraystretch{1}
\renewcommand\arraystretch{0.6}
\begin{table}[!htbp]
\caption{Simulation results for Case 1, $n=1000$,
	$\Delta_{i,t}\in\{0.1,0.3,0.5\}$ based on 1000 Monte Carlo samples:
	mean squared errors (standard error) of estimates of $\{a_{kl},
	1\leq k\leq 5, 1\leq l\leq 5\}$ and $\{r_k, 1\leq k\leq 5\}$. We
	write `--' to indicate values not estimated by the method; `GLV' is the
	deterministic GLV model method; and `SGLV' is
	our method.}\label{Case1-1000}\begin{center}
		\resizebox{\textwidth}{28mm}{
			\setlength{\tabcolsep}{0.5mm}{
				\begin{tabular}{r|rrrrr|rr}
					\hline
					\hline
					\multicolumn{1}{l|}{Method}&\multicolumn{5}{c}{Mean squared error (standard error)}&\multicolumn{1}{r}{}& \multicolumn{1}{r}{}\\
					\hline
					&  \multicolumn{1}{c}{$a_{11}$} & \multicolumn{1}{c}{$a_{12}$} & \multicolumn{1}{c}{$a_{13}$} & \multicolumn{1}{c}{$a_{14}$} & \multicolumn{1}{c|}{$a_{15}$} & \multicolumn{1}{c}{$r_{1}$} &\multicolumn{1}{c}{$\sigma^{2}_{1}$ } \\
					\multicolumn{1}{c|}{GLV}&  $ 0.179(0.008)$ & $0.642(0.025)$ & $0.534(0.021)$ & $ 1.093(0.046)$ &
					$0.901(0.042)$ & $0.127(0.006)$ &\multicolumn{1}{c}{$-$} \\
					\multicolumn{1}{c|}{SGLV}& $ 0.136(0.006)$ & $0.438(0.018) $ & $0.358(0.015)$ & $ 0.841(0.038)$ &
					$0.705(0.032)$ & $0.089(0.004)$&$ 0.232(3.287)\times 10^{-6}$\\
					\cline{1-8}
					&  \multicolumn{1}{c}{$a_{21}$} & \multicolumn{1}{c}{$a_{22}$} & \multicolumn{1}{c}{$a_{23}$} & \multicolumn{1}{c}{$a_{24}$} & \multicolumn{1}{c|}{$a_{25}$} &\multicolumn{1}{c}{ $r_{2}$} &\multicolumn{1}{c}{$\sigma^{2}_{2}$} \\
					\multicolumn{1}{c|}{GLV}&  $ 0.155(0.007)$ & $1.788(0.039)$ & $0.801(0.022)$ & $2.087(0.064)$ &
					$0.836(0.037)$ & $0.204(0.007)$ &\multicolumn{1}{c}{$-$}\\
					\multicolumn{1}{c|}{SGLV}&  $ 0.131(0.006)$ & $1.048(0.027)$ & $0.522(0.016)$ & $ 1.428(0.049)$
					& $0.685(0.031)$ & $0.129(0.005)$ &$3.362(2.121)\times 10^{-6}$ \\
					\cline{1-8}
					&  \multicolumn{1}{c}{$a_{31}$} & \multicolumn{1}{c}{$a_{32}$} & \multicolumn{1}{c}{$a_{33}$} & \multicolumn{1}{c}{$a_{34}$} & \multicolumn{1}{c|}{$a_{35}$} & \multicolumn{1}{c}{$r_{3}$} &\multicolumn{1}{c}{$\sigma^{2}_{3}$} \\
					\multicolumn{1}{c|}{GLV}&  $ 0.213(0.008)$ & $0.877(0.026)$ & $1.534(0.031)$ & $0.873(0.035)$
					& $0.737(0.031)$ & $0.328(0.008)$ &\multicolumn{1}{c}{$-$} \\
					\multicolumn{1}{c|}{SGLV}&  $ 0.157(0.007)$ & $0.575(0.019)$ & $0.856(0.021)$ & $0.692(0.029)$
					& $0.620(0.026)$ & $0.198(0.006)$ &$4.162(2.396)\times 10^{-6}$ \\
					\cline{1-8}
					&   \multicolumn{1}{c}{$a_{41}$} & \multicolumn{1}{c}{$a_{42}$} & \multicolumn{1}{c}{$a_{43}$}&  \multicolumn{1}{c}{$a_{44}$} & \multicolumn{1}{c|}{$a_{45}$} & \multicolumn{1}{c}{$r_{4}$} &\multicolumn{1}{c}{$\sigma^{2}_{4}$} \\
					\multicolumn{1}{c|}{GLV}&  $ 0.252(0.009)$ & $0.295(0.013)$ & $0.176(0.008)$ & $ 5.604(0.103) $
					& $0.682(0.029)$ & $0.126(0.005)$ &\multicolumn{1}{c}{$-$} \\
					\multicolumn{1}{c|}{SGLV}&  $ 0.175(0.007)$ & $0.240(0.010)$ & $0.162(0.007)$ & $3.177(0.073)$
					& $0.606(0.026)$ & $0.084(0.003)$ &$4.962(2.954)\times 10^{-6}$ \\
					\cline{1-8}
					&  \multicolumn{1}{c}{$a_{51}$} & \multicolumn{1}{c}{$a_{52}$} & \multicolumn{1}{c}{$a_{53}$} &  \multicolumn{1}{c}{$a_{54}$} & \multicolumn{1}{c|}{$a_{55}$} & \multicolumn{1}{c}{$r_{5}$} & \multicolumn{1}{c}{$\sigma^{2}_{5}$}\\
					\multicolumn{1}{c|}{GLV}&  $ 0.291(0.011)$ & $0.873(0.027)$ & $0.864(0.024)$ & $1.547(0.053)$
					& $2.886(0.077)$ & $0.308(0.009)$ &\multicolumn{1}{c}{$-$} \\
					\multicolumn{1}{c|}{SGLV}&  $ 0.200(0.008)$ & $0.589(0.020)$ & $0.572(0.017)$ & $1.113(0.041)$
					& $1.715(0.052)$ & $0.190(0.006)$ &$2.707(2.033)\times 10^{-6}$ \\
					\cline{1-8}
					\hline
		\end{tabular}}}
	\end{center}
\end{table}

From the results, we see that our method outperforms the deterministic
GLV approach
in terms of mean squared errors
for a large majority of the parameter
estimates.

Simulation results for Case 1 and Case 2, $n\in\{300,500,1000\}$ are presented
in Appendix~\ref{AppendTable}, Tables~\ref{Case1-delta0-300}--\ref{Case2-delta0-1000}. From
the mean squared error
of parameters in
Tables~\ref{Case1-delta0-300}--\ref{Case2-delta0-1000}, we can see
that our method always outperforms the deterministic GLV approach.
The increase in diffusion parameter in Case~2 leads to an increase in MSE of
parameter estimates due to larger fluctuations in the data.
We also notice that MSE of parameter estimates decreases with
sample size increasing when  diffusion parameters are small
which is consistent with our theory.

\section{Real Data Application}\label{RealData}

We apply our method to the moving picture dataset
\citep{caporaso2011moving}. The dataset consists of samples from 4 body
sites on 2 individuals. We focus on the samples from the faeces of
person~2. We choose the faeces site because the gut is more sheltered
from external influences than other body sites, and we therefore
expect it to provide better insights into the internal dynamics of the
system. We choose person~2 because there are more
observations. \citet{cai2017learning} found that there is a shift in person~2's
gut microbiome during the study. Since we are interested in the
equilibrium dynamics of the microbiome, we analyse only the data prior
to this shift.

The data are collected on a daily basis. The time intervals between
consecutive data points can vary, ranging from 1 day to 9 days.
Microbiome data are subject to large multiplicative noise, commonly
referred to as {\em sequencing depth} which is generally thought to be
unrelated to the microbial dynamics. We therefore analyse the
proportions of each operational taxonomic unit in the community,
rather than the total counts. There are other approaches to deal with
the sequencing depth issue, see
\citep{weiss2017normalization} for a summary, but in
the absence of a clear consensus over which method is best for our
application, we have chosen a simple and widely-used method for our
analysis. We do not expect the choice of normalisation to have a large
impact on our results.

To obtain more stability and interpretability, we group the
operational taxonomic units at the family level. There are 107
family-level operational taxonomic units observed in the data. We
select the five most abundant families from the data, namely
Bacteroidaceae, Ruminococcaceae, Lachnospiraceae, Porphyromonadaceae,
and an unspecified family from the class Bacteroidales. In
Appendix~\ref{AppendTable}, Figure~\ref{AbundancePlot}(a) shows the
temporal relative abundance of the family Bacteroidaceae, while
Figure~\ref{AbundancePlot}(b) shows a log transformation of the
relative abundance of the family Bacteroidaceae and one-step
predictions under our estimated SGLV
model.  \renewcommand\arraystretch{1}
\begin{table}[htbp]
	\begin{center}
	\caption{Estimated interaction coefficients, growth rates and
		diffusion coefficients for the 5 most abundant families in
		person~2$'$s gut. The significantly non-zero interaction
		coefficients (at the 5\% level), excluding the diagonal entries and
		growth rates, which are all significantly non-zero, are highlighted
		in blue.}\label{GutFamily} \setlength{\tabcolsep}{1.5mm}{
			\begin{tabular}{r|rrrrr|rr}
				\hline
				\hline
				\diagbox{~}{$a_{kp}$}{~~~~~~~~~~~} & \multicolumn{1}{c}{B.} & \multicolumn{1}{c}{R.} & \multicolumn{1}{c}{L.} & \multicolumn{1}{c}{P.} & \multicolumn{1}{c}{B$'$.} & \multicolumn{1}{c}{$r_i$} & \multicolumn{1}{c}{$\sigma^{2}_i$} \\
				\hline
				\multicolumn{1}{l|}{Bacteroidaceae}       & $ -1.997$ & \cellcolor{blue!30!white!70}{$-2.212$}& \cellcolor{blue!30!white!70}{$-2.035$} &
				$2.713$ & $0.176$  & 1.573&  0.464 \\
				\multicolumn{1}{l|}{Ruminococcaceae}      & $0.008$ & $ -3.957$ & \cellcolor{blue!30!white!70}{$-1.954$} & $ \cellcolor{blue!30!white!70}{3.352}$ &$-0.602$  & 0.768 &  0.167  \\
				\multicolumn{1}{l|}{Lachnospiraceae}      & $-0.364$ & \cellcolor{blue!30!white!70}{$-1.633$} & $ -5.481$ & $ 1.059$ & $-0.604$ & 1.290  &  0.376\\
				\multicolumn{1}{l|}{Porphyromonadaceae}   & $-0.308$ & $0.036$ & $0.017$ & $-10.371$ & $0.410$ & 0.761 &  0.262   \\
				\multicolumn{1}{l|}{Bacteroidales (unsp.)}& \cellcolor{blue!30!white!70}{$-1.322$} &
				$-1.584$ & \cellcolor{blue!30!white!70}{$-3.322$} & $2.015$ & $-8.984$  & 1.795&  0.564 \\
				\hline
				\hline
			\end{tabular}
		}
	\end{center}
\end{table}

The estimates of the interaction coefficients, growth rates and
diffusion coefficients are given in Table~\ref{GutFamily}.
Note that the estimated values of $r_{i}$ and $\sigma^{2}_{i} (i=1,...,5)$ satisfy
Assumption~\ref{assumptionA1}, and the main theoretical results including Proposition~\ref{pro2.2}, Theorem~\ref{FixedT} and Theorem~\ref{NoFixT} hold for the estimated interaction coefficients in Table~\ref{GutFamily}.
The estimated interaction coefficients indicate a mixture of
competition and mutualism, with Porphyromonadaceae stimulating growth
of the other families. This is consistent with 
\cite{darveau2012porphyromonas},
who found that, in the oral microbiome, one species from the family
Porphyromonadaceae can manipulate the host immune system, allowing
colonization by microbes that would usually be suppressed by the
immune system. A similar effect in the gut microbiome could explain
the patterns estimated here. The competition between other families
also seems biologically plausible, partially because of the
compositionality of the data, meaning the growth of one operational
taxonomic unit results in a reduced proportion of other operational
taxonomic units.

We construct confidence intervals by applying
Theorem~\ref{NoFixT} to construct the confidence interval
$\left[\hat{a}_{kl}-1.96\hat{\sigma}_{k}\sqrt{\{{\rm
    diag}(\hat{I}^{-1}_{k})\}_{l}},
  \hat{a}_{kl}+1.96\hat{\sigma}_{k}\sqrt{\{{\rm
    diag}(\hat{I}^{-1}_{k})\}_{l}}\right]$. In Table~\ref{GutFamily}, we
highlight the significantly non-zero interaction coefficients using
these confidence intervals. These significant interactions are plotted
in Figure~\ref{GutFamilyNetwork}.

The 5\%
confidence intervals
are reported in Appendix~\ref{AppendTable},
Table~\ref{GutFamilyInterval5}.  For comparison, the estimated
parameters under a least-squares GLV model
method are given in Table~\ref{GutFamilyTableGLV}. Corresponding 5\%
confidence intervals are in Table~\ref{GutFamilyGLVConfidence} of Appendix~\ref{AppendTable}. Since
Asymptotic confidence intervals are not available for the
deterministic GLV model, we use bootstrap
confidence intervals in this case. There is some consistency in the
estimated parameters under the two models, with many estimates being
positive or negative in both models. The significant interactions all
have the same sign under both models. The estimated interaction
coefficients and estimated growth rates are generally larger in
magnitude under our SGLV model.

\begin{figure}[htbp]
\centering
\caption{Significant interactions between abundant families. Blue
  arrows represent negative interactions. Thickness of arrow is
  proportional to interaction strength. Radius of circle is
  proportional to growth rate.}\label{GutFamilyNetwork}

\begin{center}
  \includegraphics[width=3.0in]{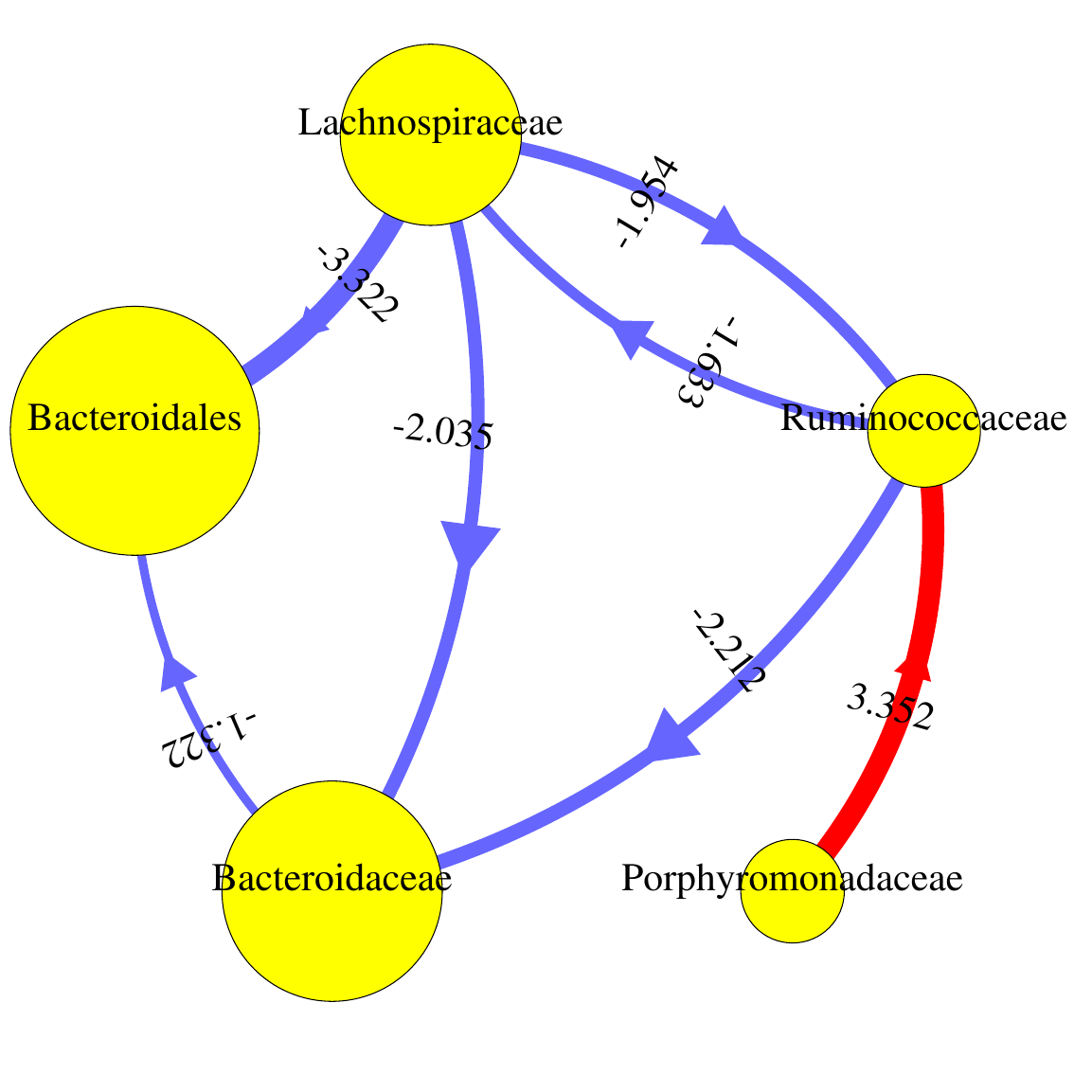}
\end{center}

\end{figure}

We also compare the predictive performance of our approach and the
least squares approach of the deterministic GLV
model. We divide the data randomly into training and test data
sets. We use a proportion $(1-1/k)$ of the data as the training set,
and the remaining $1/k$ portion of the data as the test data. We
consider $ k=24, 12$, and $8$. For each $k$, we average the results
over 100 random splits. We report the one-step prediction errors
obtained from both approaches in Table~\ref{MPE1}. The mean squared
prediction error is defined as ${\textrm{mean~squared~prediction~error}}:=
m^{-1}\sum^{m}_{i=1}(\hat{u}_{i}-u_{i})^{2},$ where $u_{i}$ is the
observed value at $t_{i}$, $\hat{u}_{i}$ is computed from the training
model based on the previous log-observed data point $u_{i-1}$ at time
$t_{i-1}$. From the results, our method outperforms the deterministic
GLV model approach in prediction. Although the
purpose of the analysis is interpretation rather than prediction,
improved predictive accuracy suggests a better fitting model, so the
conclusions drawn are more likely to be accurate.
\begin{table}[h]
\caption{Cross-validated mean squared prediction error (standard error) of
  one-step predictions on moving picture gut data, using a proportion
  $1/k$ of the data as test data, averaged over 100 training-test
  splits. `GLV' is the deterministic GLV
  model method, and `SGLV' is our method.  }\label{MPE1}
\begin{center}
\setlength{\tabcolsep}{2mm}{
\begin{tabular}{c|ccc}
\hline
    &\multicolumn{3}{c}{mean~squared~prediction~error~(standard~error)} \\
\hline
     $k$ & $24$ & $12$  &$8$\\
      $\rm GLV$ & $1.346(0.105)$& $1.495(0.089)$ &$1.762(0.096)$  \\
      $\rm SGLV$ &$1.206(0.093)$& $1.410(0.083) $ & $1.610(0.087)$\\
\hline
\end{tabular}
}
\end{center}

\end{table}

\newpage

\section{Conclusion}
\label{sec:conc}
Motivated by analysis of microbiome data, we proposed a new SGLV model to study
temporal dynamics of the microbiome, and derived a new set of general conditions to
guarantee various mathematical properties including existence and uniqueness of a solution,
the bounds of moments, stationarity and ergodicity. Further, we have developed
an approximate maximum likelihood estimator for our model and established a theoretical
guarantee for the consistency and asymptotic normality of our AMLE. We have
demonstrated the efficiency of our methods using simulations and real microbiome data
applications.

There are a number of important directions for future work. Firstly,
we have assumed that the number of OTUs, $N$, is fixed. In many
applications, $N$ can be much larger than the sample size, $n$, in which
case, regularization methods are needed to ensure good
estimation. Incorporating regularisation into our estimators is highly
non-trivial because the penalized estimates of $A$ may not be
non-positive definite. We may add additional constraints to guarantee
non-positive definiteness of $A$. However, this may lead to challenges
in both computation and theory, and therefore we leave this topic for
future research.  For the asymptotic consistency of our method, it is
essential to assume that the time difference between consecutive
samples converges to 0. When this assumption is violated, i.e. time
difference between consecutive samples does not converge to 0, the
finite sample properties of our AMLE are unclear, and it is an
interesting problem to explore in the future.

\bibliographystyle{Chicago}
\bibliography{Bibliography-MM-MC}

\clearpage
\bigskip

\appendix
\renewcommand{\theequation}{S\arabic{equation}}
\renewcommand{\thefigure}{S\arabic{figure}}
{\tiny }\renewcommand{\theexample}{S\arabic{example}}
\renewcommand{\thetable}{S\arabic{table}}
\setcounter{assumption}{0}
\setcounter{proposition}{0}
\setcounter{corollary}{0}

\section{Additional Simulation Results}  \label{AppendTable}
In this section, we present additional simulation results for various
scenarios. In particular, we include simulation results with various
sample sizes $n\in\{300,500,1000\}$, for Cases~1 and 2 from Section~4
in main paper. Time step sizes are still set at $\Delta_{i,t}\in
\{0.1,0.3,0.5\}$ with probabilities $0.7$, $0.2$ and $0.1$
respectively.  All results are based on 1,000 simulations, with mean
square error results compared for the approximate maximum likelihood
estimators of our SGLV model, and the least squares estimate under
deterministic GLV Differential Equation.
\renewcommand\arraystretch{0.6}
\begin{table}[!htbp]
	\caption{Simulation results for Case 1, $n=300$, based on 1000 simulations: mean squared errors (standard error) of estimates of
		$\{a_{kl}, 1\leq k\leq 5, 1\leq l\leq 5\}$ and $\{r_k, 1\leq k\leq
		5\}$. }\label{Case1-delta0-300}\begin{center}
		\resizebox{\textwidth}{28mm}{ \setlength{\tabcolsep}{0.5mm}{
				\begin{tabular}{r|rrrrr|rr}
					\hline
					\hline
					\multicolumn{1}{c|}{Method}&\multicolumn{5}{c}{Mean squared error (standard error)}&\multicolumn{1}{r}{}& \multicolumn{1}{r}{}\\
					\hline
					&  \multicolumn{1}{c}{$a_{11}$} & \multicolumn{1}{c}{$a_{12}$} & \multicolumn{1}{c}{$a_{13}$} & \multicolumn{1}{c}{$a_{14}$} & \multicolumn{1}{c|}{$a_{15}$} & \multicolumn{1}{c}{$r_{1}$} &\multicolumn{1}{c}{$\sigma^{2}_{1}$} \\
					\multicolumn{1}{c|}{GLV}&  $ 0.584(0.027)$ & $1.271(0.057) $ & $1.040(0.048)$ & $
					3.305(0.144)$ & $3.362(0.162)$&$0.332(0.016)$&\multicolumn{1}{c}{$-$} \\
					\multicolumn{1}{c|}{SGLV} &  $ 0.442(0.021)$ & $0.878(0.037) $ & $0.720(0.031)$ & $
					2.456(0.114)$ & $2.329(0.105)$ & $0.242(0.011)$& $ 0.765(3.590) \times 10^{-6}$\\
					\cline{1-8}
					&  \multicolumn{1}{c}{$a_{21}$} & \multicolumn{1}{c}{$a_{22}$} & \multicolumn{1}{c}{$a_{23}$} & \multicolumn{1}{c}{$a_{24}$} & \multicolumn{1}{c|}{$a_{25}$} &\multicolumn{1}{c}{$r_{2}$} &\multicolumn{1}{c}{$\sigma^{2}_{2}$} \\
					\multicolumn{1}{c|}{GLV}&  $ 0.483(0.021)$ & $1.685(0.057) $ & $1.291(0.049)$ & $
					3.392(0.135)$ & $2.500(0.115)$&$0.343(0.015)$ &\multicolumn{1}{c}{$-$} \\
					\multicolumn{1}{c|}{SGLV}&  $ 0.430(0.019)$ & $1.093(0.042) $ & $0.952(0.038)$ & $
					2.566(0.110)$ & $2.124(0.098)$&$0.272(0.012)$ &$4.189(3.163)\times 10^{-6}$\\
					\cline{1-8}
					&  \multicolumn{1}{c}{$a_{31}$} & \multicolumn{1}{c}{$a_{32}$} & \multicolumn{1}{c}{$a_{33}$} & \multicolumn{1}{c}{$a_{34}$} & \multicolumn{1}{c|}{$a_{35}$} & \multicolumn{1}{c}{$r_{3}$} &\multicolumn{1}{c}{$\sigma^{2}_{3}$} \\
					\multicolumn{1}{c|}{GLV}&  $ 0.485(0.020)$ & $1.272(0.051)$ & $1.412(0.047)$ & $
					2.050(0.089)$ & $
					2.350(0.106)$ & $0.426(0.016)$ &\multicolumn{1}{c}{$-$} \\
					\multicolumn{1}{c|}{SGLV}&  $ 0.424(0.018)$ & $0.989(0.041)$ & $0.871(0.033)$ & $
					1.937(0.086)$ & $2.182(0.099)$ & $0.305(0.012)$ &$4.757(3.327) \times 10^{-6}$\\
					\cline{1-8}
					&   \multicolumn{1}{c}{$a_{41}$} & \multicolumn{1}{c}{$a_{42}$} &
					\multicolumn{1}{c}{$a_{43}$}&  \multicolumn{1}{c}{$a_{44}$} &
					\multicolumn{1}{c|}{$a_{45}$} & \multicolumn{1}{c}{$r_{4}$} &
					\multicolumn{1}{c}{$\sigma^{2}_{4}$} \\
					\multicolumn{1}{c|}{GLV}&  $ 0.493(0.019)$ & $0.817(0.039)$ & $0.579(0.027)$ &
					$ 5.111(0.167) $ & $2.131(0.099)$ & $0.245(0.010)$ &\multicolumn{1}{c}{$-$} \\
					\multicolumn{1}{c|}{SGLV}&  $ 0.418(0.018)$ & $0.718(0.033)$ & $0.548(0.027)$ &
					$3.029(0.114)$ & $1.947(0.089)$ & $0.212(0.010)$ &$5.602(3.896)\times 10^{-6}$\\
					\cline{1-8}
					&  \multicolumn{1}{c}{$a_{51}$} & \multicolumn{1}{c}{$a_{52}$} &
					\multicolumn{1}{c}{$a_{53}$} &  \multicolumn{1}{c}{$a_{54}$} &
					\multicolumn{1}{c|}{$a_{55}$} & \multicolumn{1}{c}{$r_{5}$} &
					\multicolumn{1}{c}{$\sigma^{2}_{5}$}\\
					\multicolumn{1}{c|}{GLV}&  $ 0.555(0.023)$ & $1.274(0.053)$ & $1.285(0.048)$ &
					$3.289(0.134)$ & $3.299(0.120)$ & $0.408(0.015)$ &\multicolumn{1}{c}{$-$} \\
					\multicolumn{1}{c|}{SGLV}&   $ 0.454(0.021)$ & $0.942(0.040)$ & $0.930(0.036)$ &
					$ 2.523(0.106)$ & $2.221(0.095)$ & $0.291(0.012)$ &$3.361(2.945) \times 10^{-6}$\\
					\cline{1-8}
					\hline
		\end{tabular}}}
	\end{center}
\end{table}
\renewcommand\arraystretch{0.6}
\begin{table}[!htbp]
	\caption{Simulation results for Case~1, $n=500 $ based on 1000 simulations: mean squared errors (standard error) of estimates of $\{a_{kl}, 1\leq k\leq 5, 1\leq l\leq 5\}$ and $\{r_k, 1\leq k\leq 5\}$.}\label{Case1-500}\begin{center}
		\resizebox{\textwidth}{28mm}{
			\setlength{\tabcolsep}{0.5mm}{
				\begin{tabular}{r|rrrrr|rr}
					\hline
					\hline
					\multicolumn{1}{c|}{Method}&\multicolumn{5}{c}{Mean squared error(standard error)}&\multicolumn{1}{r}{}& \multicolumn{1}{r}{}\\
					\hline
					&  \multicolumn{1}{c}{$a_{11}$} & \multicolumn{1}{c}{$a_{12}$} & \multicolumn{1}{c}{$a_{13}$} & \multicolumn{1}{c}{$a_{14}$} & \multicolumn{1}{c|}{$a_{15}$} & \multicolumn{1}{c}{$r_{1}$} &\multicolumn{1}{c}{$\sigma^{2}_{1}$ } \\
					\multicolumn{1}{c|}{GLV}&  $ 0.361(0.016)$ & $0.978(0.043)$ & $0.733(0.030)$ &
					$ 2.214(0.100)$ & $1.805(0.082)$ & $0.221(0.010)$ &\multicolumn{1}{c}{$-$} \\
					\multicolumn{1}{c|}{SGLV}& $ 0.279(0.013)$ & $0.672(0.028) $ & $0.498(0.021)$ & $
					1.715(0.073)$ & $1.358(0.063)$ & $0.162(0.007)$&$ 0.467(3.402)\times 10^{-6}$\\
					\cline{1-8}
					&  \multicolumn{1}{c}{$a_{21}$} & \multicolumn{1}{c}{$a_{22}$} & \multicolumn{1}{c}{$a_{23}$} & \multicolumn{1}{c}{$a_{24}$} & \multicolumn{1}{c|}{$a_{25}$} &\multicolumn{1}{c}{$r_{2}$} &\multicolumn{1}{c}{$\sigma^{2}_{2}$} \\
					\multicolumn{1}{c|}{GLV}&  $0.279(0.012)$ & $1.781(0.050) $ & $1.027(0.035)$ & $
					2.682(0.100)$ & $1.521(0.066)$ & $0.270(0.011)$ &\multicolumn{1}{c}{$-$} \\
					\multicolumn{1}{c|}{SGLV}&  $0.243(0.011)$ & $1.005(0.032) $ & $0.688(0.025)$ & $1.954(0.077)$ & $1.288(0.058)$ & $0.184(0.007)$ &$3.640(2.586)\times 10^{-6}$ \\
					\cline{1-8}
					&  \multicolumn{1}{c}{$a_{31}$} & \multicolumn{1}{c}{$a_{32}$} & \multicolumn{1}{c}{$a_{33}$} &
					\multicolumn{1}{c}{$a_{34}$} & \multicolumn{1}{c|}{$a_{35}$} & \multicolumn{1}{c}{$r_{3}$} &\multicolumn{1}{c}{$\sigma^{2}_{3}$} \\
					\multicolumn{1}{c|}{GLV}&  $0.308(0.014)$ & $1.076(0.037) $ & $1.452(0.042)$ & $
					1.501(0.065)$ & $1.520(0.066)$ & $0.355(0.012)$  &\multicolumn{1}{c}{$-$} \\
					\multicolumn{1}{c|}{SGLV}&  $ 0.261(0.011)$ & $0.779(0.029)$ & $0.852(0.028)$ & $
					1.174(0.054)$ & $1.397(0.062)$ & $0.239(0.009)$ &$4.497(2.950)\times 10^{-6}$ \\
					\cline{1-8}
					&   \multicolumn{1}{c}{$a_{41}$} & \multicolumn{1}{c}{$a_{42}$} & \multicolumn{1}{c}{$a_{43}$}&  \multicolumn{1}{c}{$a_{44}$} & \multicolumn{1}{c|}{$a_{45}$} & \multicolumn{1}{c}{$r_{4}$} &\multicolumn{1}{c}{$\sigma^{2}_{4}$} \\
					\multicolumn{1}{c|}{GLV}&  $ 0.355(0.013)$ & $0.528(0.025)$ & $0.310(0.013)$ &
					$5.643(0.140)$ & $1.080(0.047)$ & $0.170(0.007)$  &\multicolumn{1}{c}{$-$} \\
					\multicolumn{1}{c|}{SGLV}&  $ 0.267(0.011)$ & $0.450(0.021)$ & $0.292(0.013)$ &
					$3.164(0.098)$ & $0.901(0.041)$ & $0.135(0.006)$ &$5.274(3.457)\times 10^{-6}$ \\
					\cline{1-8}
					&  \multicolumn{1}{c}{$a_{51}$} & \multicolumn{1}{c}{$a_{52}$} & \multicolumn{1}{c}{$a_{53}$} &  \multicolumn{1}{c}{$a_{54}$} & \multicolumn{1}{c|}{$a_{55}$} & \multicolumn{1}{c}{$r_{5}$} & \multicolumn{1}{c}{$\sigma^{2}_{5}$}\\
					\multicolumn{1}{c|}{GLV}&  $ 0.349(0.014)$ & $0.995(0.037)$ & $0.970(0.035)$ &
					$2.148(0.088)$ & $2.821(0.101)$ & $0.320(0.011)$ &\multicolumn{1}{c}{$-$} \\
					\multicolumn{1}{c|}{SGLV}&   $ 0.268(0.011)$ & $0.700(0.028)$ & $0.676(0.025)$ &
					$1.604(0.068)$ & $1.832(0.071)$ & $0.217(0.008)$ &$2.995(2.540)\times 10^{-6}$ \\
					\cline{1-8}
					\hline
			\end{tabular}}
		}
	\end{center}
\end{table}
\renewcommand\arraystretch{0.6}
\begin{table}[!htbp]
	\caption{Simulation results for Case~2, $n=300 $ based on 1000
		simulations: mean squared errors (standard error) of estimates of $\{a_{kl}, 1\leq k\leq 5, 1\leq l\leq 5\}$ and $\{r_k, 1\leq k\leq 5\}$.}\label{Case2-300}\begin{center}
		\resizebox{\textwidth}{28mm}{
			\setlength{\tabcolsep}{0.5mm}{
				\begin{tabular}{r|rrrrr|rr}
					\hline
					\hline
					\multicolumn{1}{c|}{Method}&\multicolumn{5}{c}{Mean squared error (standard error) }&\multicolumn{1}{r}{}& \multicolumn{1}{r}{}\\
					\hline
					&  \multicolumn{1}{c}{$ a_{11}$} & \multicolumn{1}{c}{$a_{12}$} & \multicolumn{1}{c}{$ a_{13}$} & \multicolumn{1}{c}{$a_{14}$} & \multicolumn{1}{c|}{$a_{15}$} & \multicolumn{1}{c}{$r_{1}$} &\multicolumn{1}{c}{$ \sigma^{2}_{1}$ } \\
					\multicolumn{1}{c|}{GLV}&   $6.939(0.592)$ & $2.890(0.137) $ & $1.432(0.064)$ & $
					5.648(0.282)$ & $4.817(0.216)$ & $0.695(0.026)$ &\multicolumn{1}{c}{$-$} \\
					\multicolumn{1}{c|}{SGLV}& $6.221(0.455)$ & $2.153(0.104) $ & $1.055(0.048)$ & $
					4.327(0.205)$ & $3.793(0.175)$ & $0.233(0.011)$&$0.007(0.022)$\\
					\cline{1-8}
					&  \multicolumn{1}{c}{$a_{21}$} & \multicolumn{1}{c}{$a_{22}$} & \multicolumn{1}{c}{$a_{23}$} & \multicolumn{1}{c}{$ a_{24}$} & \multicolumn{1}{c|}{$a_{25}$} &\multicolumn{1}{c}{ $r_{2}$} &\multicolumn{1}{c}{$\sigma^{2}_{2}$} \\
					\multicolumn{1}{c|}{GLV}&  $3.811(0.290)$ & $2.488(0.105) $ & $1.632(0.066)$ & $
					4.949(0.211)$ & $4.441(0.266)$ & $0.744(0.024)$ &\multicolumn{1}{c}{$-$} \\
					\multicolumn{1}{c|}{SGLV}&  $ 3.481(0.251)$ & $2.002(0.100) $ & $1.239(0.050)$ & $
					3.921(0.170)$ & $3.618(0.220)$&$0.234(0.011)$ &$0.021(0.023)$ \\
					\cline{1-8}
					&  \multicolumn{1}{c}{$ a_{31}$} & \multicolumn{1}{c}{$a_{32}$} & \multicolumn{1}{c}{$a_{33}$} & \multicolumn{1}{c}{$a_{34}$} & \multicolumn{1}{c|}{$a_{35}$} & \multicolumn{1}{c}{$r_{3}$} &\multicolumn{1}{c}{$\sigma^{2}_{3}$} \\
					\multicolumn{1}{c|}{GLV}&  $ 3.692(0.223)$ & $2.671(0.117)$ & $1.501(0.055)$ &
					$ 3.878(0.198)$ & $3.808(0.185)$ & $0.838(0.024)$ &\multicolumn{1}{c}{$-$} \\
					\multicolumn{1}{c|}{SGLV}&  $3.059(0.186)$ & $2.060(0.092)$ & $1.018(0.043)$ &
					$ 3.123(0.148)$ & $3.271(0.156)$ & $0.253(0.010)$&$0.035(0.030)$ \\
					\cline{1-8}
					&   \multicolumn{1}{c}{$a_{41}$} & \multicolumn{1}{c}{$a_{42}$} & \multicolumn{1}{c}{$a_{43}$}&  \multicolumn{1}{c}{$a_{44}$} & \multicolumn{1}{c|}{$a_{45}$} & \multicolumn{1}{c}{$r_{4}$} &\multicolumn{1}{c}{$\sigma^{2}_{4}$} \\
					\multicolumn{1}{c|}{GLV}&  $3.969(0.272)$ & $1.866(0.090)$ & $0.876(0.041)$ &
					$ 5.498(0.212)$ & $3.759(0.198)$ & $0.502(0.017)$&\multicolumn{1}{c}{$-$} \\
					\multicolumn{1}{c|}{SGLV}&  $3.205(0.223)$ & $1.715(0.084)$ & $0.792(0.037)$ &
					$ 3.949(0.168)$ & $3.456(0.173)$ & $0.170(0.008)$ &$0.033(0.028)$ \\
					\cline{1-8}
					&  \multicolumn{1}{c}{$a_{51}$} & \multicolumn{1}{c}{$a_{52}$} & \multicolumn{1}{c}{$a_{53}$} &  \multicolumn{1}{c}{$a_{54}$} & \multicolumn{1}{c|}{$a_{55}$} & \multicolumn{1}{c}{$r_{5}$} & \multicolumn{1}{c}{$\sigma^{2}_{5}$}\\
					\multicolumn{1}{c|}{GLV}&  $ 4.442(0.316)$ & $2.310(0.112)$ & $1.546(0.067)$ &
					$ 4.705(0.211)$ & $4.723(0.199)$ & $0.802(0.025)$ &\multicolumn{1}{c}{$-$} \\
					\multicolumn{1}{c|}{SGLV}&   $ 3.376(0.216)$ & $1.874(0.092)$ & $1.123(0.049)$ &
					$ 3.645(0.164)$ & $3.784(0.192)$ & $0.239(0.010)$ &$0.023(0.021)$ \\
					\cline{1-8}
					\hline
		\end{tabular}}}
	\end{center}
\end{table}
\renewcommand\arraystretch{0.6}
\begin{table}[!htbp]
	\caption{Simulation results for Case~2, $n=500 $ based on 1000
		simulations: mean squared error(standard error) of estimates of $\{a_{kl}, 1\leq k\leq 5, 1\leq l\leq 5\}$ and $\{r_k, 1\leq k\leq 5\}$.}\label{Case2-500}\begin{center}
		\resizebox{\textwidth}{28mm}{
			\setlength{\tabcolsep}{0.5mm}{
				\begin{tabular}{r|rrrrr|rr}
					\hline
					\hline
					\multicolumn{1}{c|}{Method}&\multicolumn{5}{c}{mean squared error(standard error) }&\multicolumn{1}{r}{}& \multicolumn{1}{r}{}\\
					\hline
					&  \multicolumn{1}{c}{$ a_{11}$} & \multicolumn{1}{c}{$a_{12}$} & \multicolumn{1}{c}{$ a_{13}$} & \multicolumn{1}{c}{$a_{14}$} & \multicolumn{1}{c|}{$a_{15}$} & \multicolumn{1}{c}{$r_{1}$} &\multicolumn{1}{c}{$ \sigma^{2}_{1}$ } \\
					\multicolumn{1}{c|}{GLV}&   $6.166(0.677)$ & $1.840(0.082) $ & $1.049(0.044)$ & $
					3.157(0.144)$ & $2.690(0.117)$ & $0.596(0.018)$ &\multicolumn{1}{c}{$-$} \\
					\multicolumn{1}{c|}{SGLV}& $5.556(0.509)$ & $1.377(0.059) $ & $0.697(0.031)$ & $
					2.380(0.112)$ & $2.028(0.091)$ & $0.140(0.006)$&$0.004(0.020)$\\
					\cline{1-8}
					&  \multicolumn{1}{c}{$a_{21}$} & \multicolumn{1}{c}{$a_{22}$} & \multicolumn{1}{c}{$a_{23}$} & \multicolumn{1}{c}{$ a_{24}$} & \multicolumn{1}{c|}{$a_{25}$} &\multicolumn{1}{c}{ $r_{2}$} &\multicolumn{1}{c}{$\sigma^{2}_{2}$} \\
					\multicolumn{1}{c|}{GLV}&  $3.196(0.222)$ & $1.768(0.065) $ & $1.056(0.041)$ & $
					3.523(0.151)$ & $2.526(0.114)$ & $0.613(0.017)$ &\multicolumn{1}{c}{$-$} \\
					\multicolumn{1}{c|}{SGLV}&  $ 2.714(0.181)$ & $1.208(0.049) $ & $0.716(0.030)$ & $
					2.523(0.115)$ & $2.024(0.094)$&$0.140(0.006)$ &$0.017(0.017)$ \\
					\cline{1-8}
					&  \multicolumn{1}{c}{$ a_{31}$} & \multicolumn{1}{c}{$a_{32}$} & \multicolumn{1}{c}{$a_{33}$} & \multicolumn{1}{c}{$a_{34}$} & \multicolumn{1}{c|}{$a_{35}$} & \multicolumn{1}{c}{$r_{3}$} &\multicolumn{1}{c}{$\sigma^{2}_{3}$} \\
					\multicolumn{1}{c|}{GLV}&  $ 3.517(0.236)$ & $1.706(0.069)$ & $1.546(0.045)$ &
					$ 2.111(0.105)$ & $2.172(0.099)$ & $0.874(0.019)$ &\multicolumn{1}{c}{$-$} \\
					\multicolumn{1}{c|}{SGLV}&  $2.921(0.195)$ & $1.274(0.054)$ & $0.950(0.031)$ &
					$ 1.879(0.089)$ & $1.817(0.081)$ & $0.223(0.008)$&$0.030(0.023)$ \\
					\cline{1-8}
					&   \multicolumn{1}{c}{$a_{41}$} & \multicolumn{1}{c}{$a_{42}$} & \multicolumn{1}{c}{$a_{43}$}&  \multicolumn{1}{c}{$a_{44}$} & \multicolumn{1}{c|}{$a_{45}$} & \multicolumn{1}{c}{$r_{4}$} &\multicolumn{1}{c}{$\sigma^{2}_{4}$} \\
					\multicolumn{1}{c|}{GLV}&  $3.196(0.211)$ & $1.055(0.052)$ & $0.468(0.022)$ &
					$ 4.900(0.156)$ & $2.072(0.108)$ & $0.456(0.013)$&\multicolumn{1}{c}{$-$} \\
					\multicolumn{1}{c|}{SGLV}&  $2.871(0.207)$ & $0.953(0.047)$ & $0.417(0.018)$ &
					$ 3.028(0.108)$ & $1.856(0.100)$ & $0.114(0.005)$ &$0.029(0.022)$ \\
					\cline{1-8}
					&  \multicolumn{1}{c}{$a_{51}$} & \multicolumn{1}{c}{$a_{52}$} & \multicolumn{1}{c}{$a_{53}$} &  \multicolumn{1}{c}{$a_{54}$} & \multicolumn{1}{c|}{$a_{55}$} & \multicolumn{1}{c}{$r_{5}$} & \multicolumn{1}{c}{$\sigma^{2}_{5}$}\\
					\multicolumn{1}{c|}{GLV}&  $ 3.381(0.259)$ & $1.750(0.073)$ & $1.168(0.043)$ &
					$ 2.951(0.126)$ & $3.916(0.134)$ & $0.835(0.020)$ &\multicolumn{1}{c}{$-$} \\
					\multicolumn{1}{c|}{SGLV}&   $2.662(0.194)$ & $1.314(0.057)$ & $0.859(0.035)$ &
					$ 2.374(0.103)$ & $2.585(0.102)$ & $0.201(0.008)$ &$0.021(0.019)$ \\
					\cline{1-8}
					\hline
		\end{tabular}}}
	\end{center}
\end{table}

\renewcommand\arraystretch{0.6}
\begin{table}[!htbp]
	\caption{Simulation results for Case~2, $n=1000 $ based on 1000
		simulations: mean squared errors (standard error) of estimates of $\{a_{kl}, 1\leq k\leq 5, 1\leq l\leq 5\}$ and $\{r_k, 1\leq k\leq 5\}$. }\label{Case2-delta0-1000}\begin{center}
		\resizebox{\textwidth}{28mm}{
			\setlength{\tabcolsep}{0.5mm}{
				\begin{tabular}{r|rrrrr|rr}
					\hline
					\hline
					\multicolumn{1}{c|}{Method}&\multicolumn{5}{c}{Mean squared errors (standard error)}&\multicolumn{1}{r}{}& \multicolumn{1}{r}{}\\
					\hline
					&  \multicolumn{1}{c}{$ a_{11}$} & \multicolumn{1}{c}{$a_{12}$} & \multicolumn{1}{c}{$ a_{13}$} & \multicolumn{1}{c}{$a_{14}$} & \multicolumn{1}{c|}{$a_{15}$} & \multicolumn{1}{c}{$r_{1}$} &\multicolumn{1}{c}{$ \sigma^{2}_{1}$ } \\
					\multicolumn{1}{c|}{GLV}&   $5.817(0.714)$ & $1.038(0.044) $ & $0.657(0.027)$ & $
					1.524(0.071)$ & $1.215(0.059)$ & $0.508(0.013)$ &\multicolumn{1}{c}{$-$} \\
					\multicolumn{1}{c|}{SGLV}& $4.634(0.431)$ & $0.735(0.033) $ & $0.445(0.019)$ & $
					1.083(0.050)$ & $0.941(0.045)$ & $0.077(0.003)$&$0.002(0.018)$\\
					\cline{1-8}
					&  \multicolumn{1}{c}{$a_{21}$} & \multicolumn{1}{c}{$a_{22}$} & \multicolumn{1}{c}{$a_{23}$} & \multicolumn{1}{c}{$ a_{24}$} & \multicolumn{1}{c|}{$a_{25}$} &\multicolumn{1}{c}{ $r_{2}$} &\multicolumn{1}{c}{$\sigma^{2}_{2}$} \\
					\multicolumn{1}{c|}{GLV}&  $3.279(0.360)$ & $1.684(0.050) $ & $0.721(0.023)$ & $
					2.109(0.074)$ & $1.162(0.054)$ & $0.598(0.011)$ &\multicolumn{1}{c}{$-$} \\
					\multicolumn{1}{c|}{SGLV}&  $ 2.631(0.183)$ & $1.020(0.039) $ & $0.493(0.034)$ & $
					1.480(0.047)$ & $0.895(0.046)$&$0.098(0.012)$ &$0.013(0.014)$ \\
					\cline{1-8}
					&  \multicolumn{1}{c}{$ a_{31}$} & \multicolumn{1}{c}{$a_{32}$} & \multicolumn{1}{c}{$a_{33}$} & \multicolumn{1}{c}{$a_{34}$} & \multicolumn{1}{c|}{$a_{35}$} & \multicolumn{1}{c}{$r_{3}$} &\multicolumn{1}{c}{$\sigma^{2}_{3}$} \\
					\multicolumn{1}{c|}{GLV}&  $ 2.618(0.183)$ & $1.080(0.039)$ & $1.552(0.034)$ &
					$ 1.082(0.047)$ & $1.073(0.046)$ & $0.831(0.012)$ &\multicolumn{1}{c}{$-$} \\
					\multicolumn{1}{c|}{SGLV}&  $2.254(0.162)$ & $0.755(0.030)$ & $0.914(0.023)$ &
					$ 0.902(0.040)$ & $0.892(0.040)$ & $0.178(0.005)$&$0.029(0.019)$ \\
					\cline{1-8}
					&   \multicolumn{1}{c}{$a_{41}$} & \multicolumn{1}{c}{$a_{42}$} & \multicolumn{1}{c}{$a_{43}$}&  \multicolumn{1}{c}{$a_{44}$} & \multicolumn{1}{c|}{$a_{45}$} & \multicolumn{1}{c}{$r_{4}$} &\multicolumn{1}{c}{$\sigma^{2}_{4}$} \\
					\multicolumn{1}{c|}{GLV}&  $2.737(0.231)$ & $0.513(0.025)$ & $0.232(0.010)$ &
					$ 4.988(0.119)$ & $0.928(0.43)$ & $0.465(0.009)$&\multicolumn{1}{c}{$-$} \\
					\multicolumn{1}{c|}{SGLV}&  $2.383(0.199)$ & $0.458(0.022)$ & $0.193(0.009)$ &
					$ 2.900(0.082)$ & $0.781(0.036)$ & $0.074(0.003)$ &$0.027(0.019)$ \\
					\cline{1-8}
					&  \multicolumn{1}{c}{$a_{51}$} & \multicolumn{1}{c}{$a_{52}$} & \multicolumn{1}{c}{$a_{53}$} &  \multicolumn{1}{c}{$a_{54}$} & \multicolumn{1}{c|}{$a_{55}$} & \multicolumn{1}{c}{$r_{5}$} & \multicolumn{1}{c}{$\sigma^{2}_{5}$}\\
					\multicolumn{1}{c|}{GLV}&  $ 3.071(0.203)$ & $1.045(0.040)$ & $0.809(0.025)$ &
					$ 1.740(0.072)$ & $3.574(0.099)$ & $0.799(0.013)$ &\multicolumn{1}{c}{$-$} \\
					\multicolumn{1}{c|}{SGLV}&   $2.583(0.191)$ & $0.771(0.31)$ & $0.549(0.018)$ &
					$ 1.299(0.056)$ & $2.166(0.070)$ & $0.161(0.005)$ &$0.018(0.014)$ \\
					\cline{1-8}
					\hline
		\end{tabular}}}
	\end{center}
\end{table}

\renewcommand\arraystretch{0.6}
\begin{table}[htbp]
	\begin{center}
		\caption{5\% Confidence intervals for estimated
                  interaction coefficients for the five
			most abundant families in person~2$'$s gut using our approach. The significantly
			non-zero interaction coefficients are highlighted in blue (for negative values).}\label{GutFamilyInterval5}
		\resizebox{\textwidth}{15mm}{
			\setlength{\tabcolsep}{0.5mm}{
				\begin{tabular}{r|rrrrr}
					\hline
					\hline
					\diagbox{~}{$a_{kp}$}{~~~~~~~~~~~} & \multicolumn{1}{c}{B.} & \multicolumn{1}{c}{R.} & \multicolumn{1}{c}{L.} & \multicolumn{1}{c}{P.} & \multicolumn{1}{c}{B$'$.}  \\
					\hline
					\multicolumn{1}{l|}{Bacteroidaceae}       & $(-2.575,-1.420)$ & $\cellcolor{blue!30!white!70}{(-3.677,-0.747)}$&
					$\cellcolor{blue!30!white!70}{(-3.842,-0.228)}$ & $(-0.143,5.568)$ & $(-4.361,4.713)$ \\
					\multicolumn{1}{l|}{Ruminococcaceae}      & $(-0.338,0.354)$ & $(-4.836,-3.079)$ &
					\cellcolor{blue!30!white!70}{$(-3.037,-0.871)$} & $\cellcolor{red!30!white!70}{(1.641,5.063)}$ &$(-3.321,2.118)$  \\
					\multicolumn{1}{l|}{Lachnospiraceae}      & $(-0.884,0.155)$ & $\cellcolor{blue!30!white!70}{(-2.951,-0.314)}$ &
					$(-7.107,-3.854)$ & $ (-1.511,3.628)$ & $(-4.688,3.479)$  \\
					\multicolumn{1}{l|}{Porphyromonadaceae}   & $(-0.742,0.125)$ & $(-1.064,1.136)$ &
					$(-1.140,1.374)$ & $(-12.515,-8.228)$ & $(-2.996,3.817)$   \\
					\multicolumn{1}{l|}{Bacteroidales (unsp.)}& $\cellcolor{blue!30!white!70}{(-1.959,-0.686)}$ &
					$(-3.199,0.030)$ & $\cellcolor{blue!30!white!70}{(-5.313,-1.330)}$ & $(-1.131,5.162)$ & $(-13.984,-3.983)$ \\
					\hline
					\hline
		\end{tabular}}}
	\end{center}
\end{table}

\renewcommand\arraystretch{0.6}
\begin{table}[htbp]
	\begin{center}
		\caption{Estimated growth rates, interaction coefficients and
			diffusion coefficients for the five most abundant families in Person~2's
			gut under the deterministic GLV model. The significantly non-zero interaction
			coefficients (at the 5\% significance level) are highlighted in blue.}\label{GutFamilyTableGLV}
		\setlength{\tabcolsep}{1.5mm}{
			\begin{tabular}{r|rrrrr|rr}
				\hline
				\hline
				\diagbox{~}{$a_{kp}$}{~~~~~~~~~~~} & \multicolumn{1}{c}{B.} & \multicolumn{1}{c}{R.} & \multicolumn{1}{c}{L.} & \multicolumn{1}{c}{P.} & \multicolumn{1}{c}{B$'$.} & \multicolumn{1}{c}{$r_i$} & \multicolumn{1}{c}{$\sigma^{2}_i$} \\
				\hline
				\multicolumn{1}{l|}{Bacteroidaceae}       & $ -1.130$ & $-0.406$& $-0.902$ & $1.875$ & $1.174$ &  0.497 &-  \\
				\multicolumn{1}{l|}{Ruminococcaceae}      & $0.042$ & $ -3.068$ & \cellcolor{blue!30!white!70}{$-1.198$} & $1.207$ &$-0.081$ & 0.510 & -  \\
				\multicolumn{1}{l|}{Lachnospiraceae}      & $-0.018$ & $-0.930$ & $ -3.946$ & $ -1.021$ & $-0.223$ &  0.703 & - \\
				\multicolumn{1}{l|}{Porphyromonadaceae}   & $-0.225$ & $-0.057$ & $-0.328$ & $-8.630$ & $0.281$ &  0.572 & -  \\
				\multicolumn{1}{l|}{Bacteroidales (unsp.)}& \cellcolor{blue!30!white!70}{$-0.456$} &  $0.127$ & $-2.090$ & $1.431$ & $-8.072$ &  0.659 & - \\
				\hline
				\hline
		\end{tabular}}
	\end{center}
\end{table}
\renewcommand\arraystretch{0.6}
\begin{table}[htbp]
	\begin{center}
		\caption{Bootstrap 5\% Confidence intervals for deterministic
			GLV model estimates of interaction
			coefficients and growth rates for the five most abundant families in
			Person~2's gut. The significantly non-zero inter-species
			interaction coefficients are highlighted in
			blue.}\label{GutFamilyGLVConfidence} \resizebox{\textwidth}{15mm}{
			\setlength{\tabcolsep}{0.5mm}{
				\begin{tabular}{r|rrrrr}
					\hline
					\hline
					\diagbox{~}{$a_{kp}$}{~~~~~~~~~~~} & \multicolumn{1}{c}{B.} & \multicolumn{1}{c}{R.} & \multicolumn{1}{c}{L.} & \multicolumn{1}{c}{P.} & \multicolumn{1}{c}{B$'$.}  \\
					\hline
					\multicolumn{1}{l|}{Bacteroidaceae}       & $(-2.614,-0.707)$ & $(-3.941,1.087)$&
					$(-4.775,0.987)$ & $(-6.062,9.060)$ & $(-3.013,4.297)$ \\
					\multicolumn{1}{l|}{Ruminococcaceae}      & $(-0.597,0.707)$ & $(-5.068,-2.105)$ &
					\cellcolor{blue!30!white!70}{$(-3.138,-0.056)$} & $(-2.540,5.628)$ &$(-3.972,2.684)$  \\
					\multicolumn{1}{l|}{Lachnospiraceae}      & $(-0.914,0.426)$ & $(-3.713,0.343)$ &
					$(-7.329,-2.343)$ & $ (-6.414,4.292)$ & $(-3.378,2.603)$  \\
					\multicolumn{1}{l|}{Porphyromonadaceae}   & $(-0.830,0.461)$ & $(-1.324,2.109)$ &
					$(-3.172,2.311)$ & $(-12.779,-5.178)$ & $(-4.429,4.612)$   \\
					\multicolumn{1}{l|}{Bacteroidales (unsp.)}& \cellcolor{blue!30!white!70}{$(-2.064,-0.022)$} &
					$(-3.579,1.602)$ & $(-5.997,0.006)$ & $(-6.647,8.029)$ & $(-13.143,-4.039)$ \\
					\hline
					\hline
		\end{tabular}}}
	\end{center}
\end{table}

\renewcommand\arraystretch{0.6}
\begin{figure}[htbp]
	\centering
	\caption{ (a) Relative abundance of the family Bacteroidaceae over time; (b) Logarithm of the temporal data in (a) and the fitting curve (blue line) from the SGLV model. }\label{AbundancePlot}
	\subfigure[Proportion]{
		\label{level.sub.1}
		\includegraphics[scale=0.38]{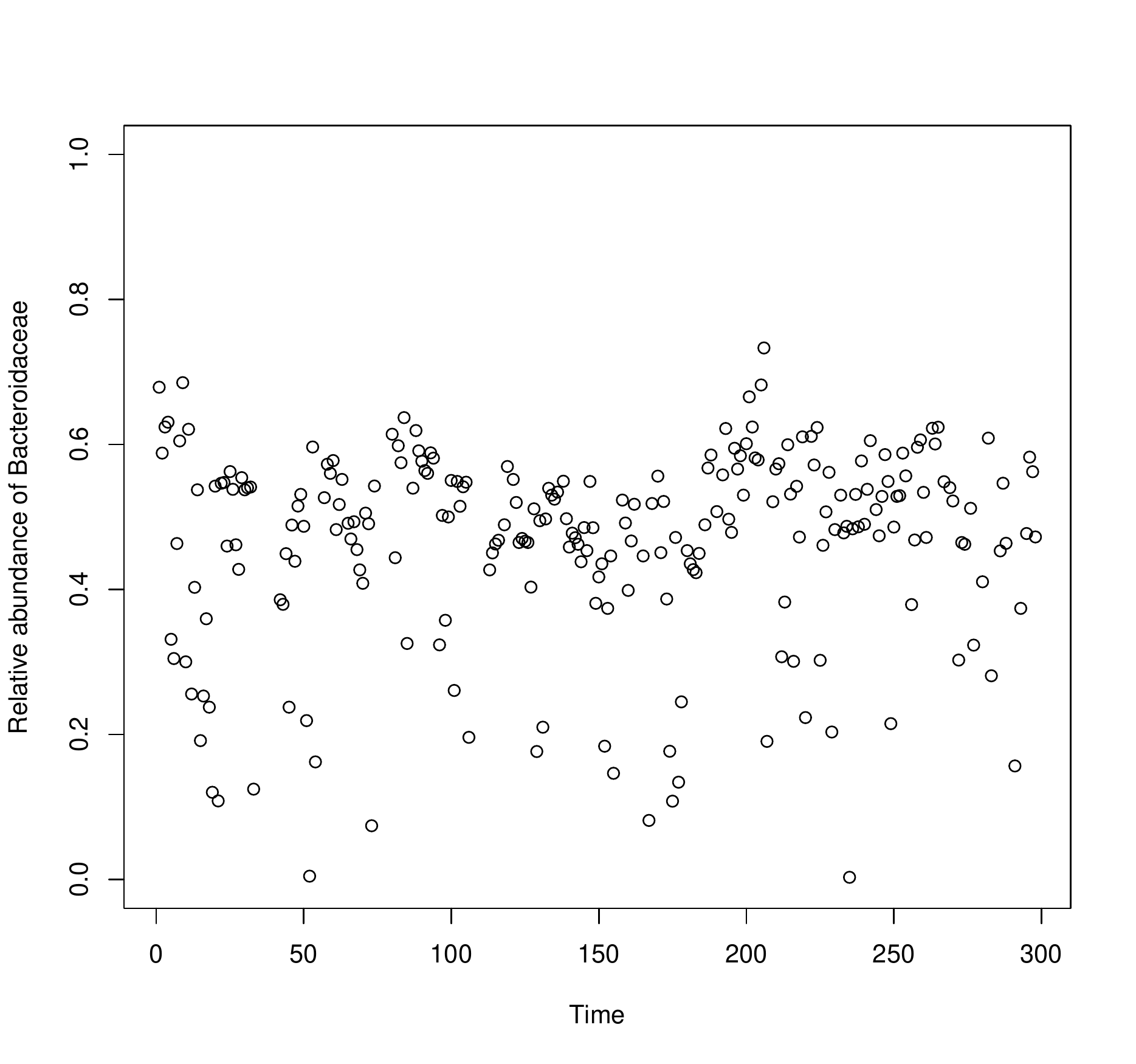}}
	\subfigure[log-proportion with one-step predictions (in blue)]{
		\label{logData}
		\includegraphics[width=0.48\linewidth]{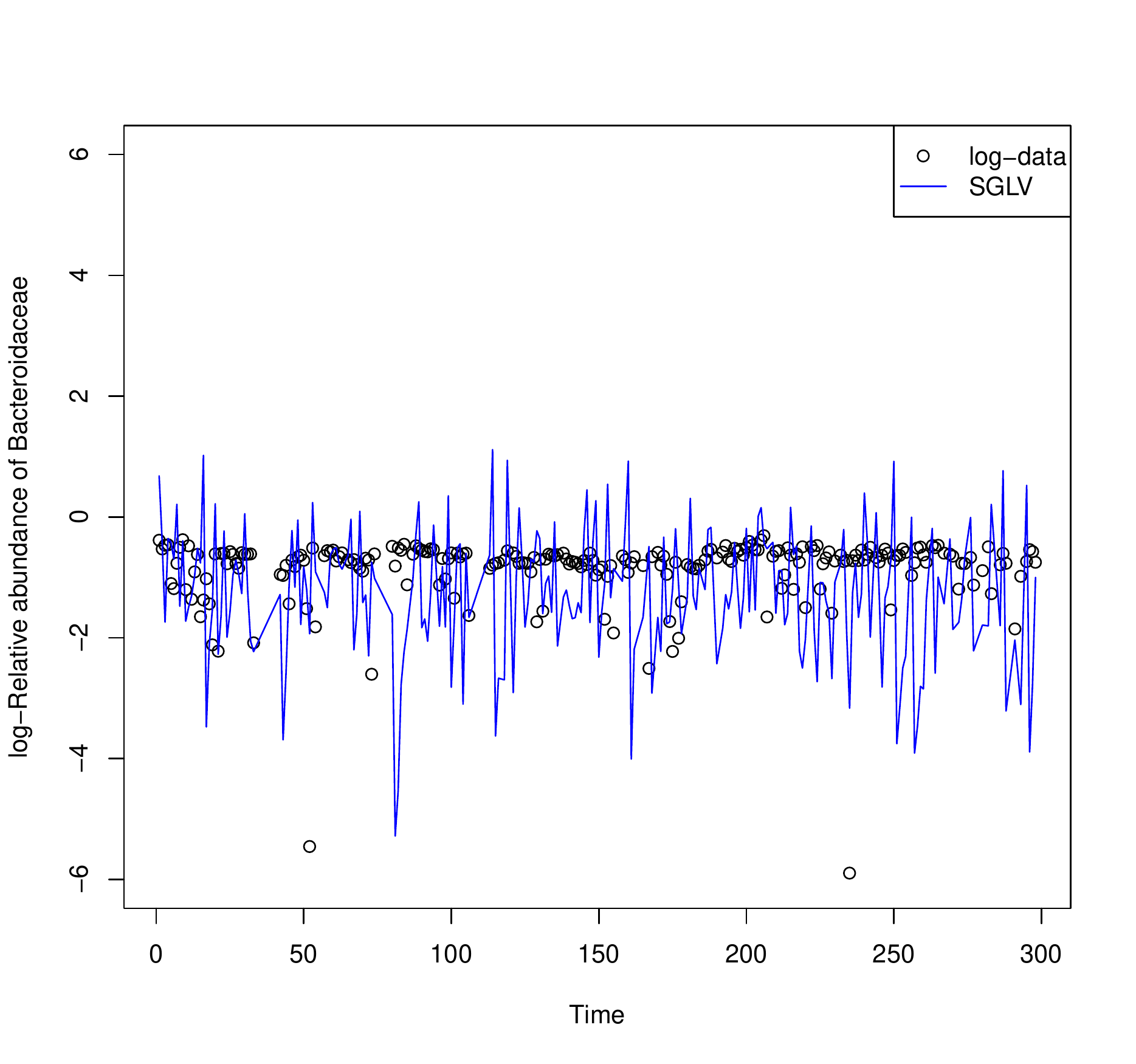}	}
	\label{level}
\end{figure}

\clearpage
\section{Proof of Propositions}\label{AppendProof}
We first show the existence and uniqueness of solutions to our stochastic differential equation model~\eqref{1.1}.
We first recall the assumptions needed:
\begin{assumption}
	\label{SupplementaryassumptionA1}
	The initial value $x(0)=(x_{1}(0),\ldots,x_{N}(0))^{\T}\in \mathbb{R}^{N}_{+}$, $r_{k}-\sigma^{2}_{k}/2>0,\sigma_{k}>0$,  and $A=(a_{kl})_{N\times N}$ is  non-positive definite.
\end{assumption}
\begin{assumption}
	\label{SupplementaryassumptionA2} For some $\phi\geq 4$, the elements of $A$ satisfy
	$a_{kk}+\phi(\phi+1)^{-1}\sum_{l=1}^N(a_{kl}\lor
	0)+(\phi+1)^{-1}\sum_{l=1}^N(a_{lk}\lor 0)<0.$
\end{assumption}
\begin{assumption}
	\label{SupplementaryassumptionA3}
	Each element of the vector $\tilde{x}=(\tilde{x}_{1},...,\tilde{x}_{N})^{\T}=-A^{-1}\left(r-\sigma^{2}/2\right)$ is positive, i.e.,  $\tilde{x}_{k}>0$ for $k=1,...,N$.
\end{assumption}
\begin{assumption}
	\label{SupplementaryassumptionA4}
	There exist positive constants $c_1$,$\ldots$, $c_N$ such that
	for $k=1,...,N$,
	$$\sum^{N}_{i=1}c_{i}\sigma^{2}_{i}\tilde{x}_{i} < -\left[2c_{k}a_{kk}
	+\sum_{l\neq k}\{c_{k}|a_{kl}|
	+c_{l}|a_{lk}|\}   \right]\tilde{x}^{2}_{k}.$$
\end{assumption}
\begin{proposition}
	For $t\geq0$, under Assumption~\ref{assumptionA1}, there is a unique solution $x(t)$ to SGLV model~\eqref{1.1} and  $x(t)\in \mathbb{R}^{N}_{+}$ almost surely.
\end{proposition}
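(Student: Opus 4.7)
The plan is to follow the standard two-step strategy for SDEs with locally (but not globally) Lipschitz coefficients: first establish a unique maximal local solution, then produce a Lyapunov function to rule out explosion and simultaneously keep the solution inside $\mathbb{R}^N_+$. Since the drift $x \mapsto \mathrm{diag}(x)(r+Ax)$ and the diffusion $x \mapsto \mathrm{diag}(\sigma)\mathrm{diag}(x)$ are polynomial and hence locally Lipschitz, the classical strong existence--uniqueness theorem delivers a unique solution on a maximal stochastic interval $[0,\tau_e)$. Picking $k_0$ with $1/k_0 < x_i(0) < k_0$ for all $i$, define the exit times
$$\tau_k = \inf\{t \geq 0 : \min_i x_i(t) \leq 1/k \text{ or } \max_i x_i(t) \geq k\}, \quad k\geq k_0.$$
Then $\tau_k \nearrow \tau_e$, and showing $\tau_k \to \infty$ almost surely simultaneously rules out blow-up and hitting the boundary $\partial\mathbb{R}^N_+$.

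The central step is to exhibit the ``entropy-type'' Lyapunov function
$$V(x) = \sum_{i=1}^N (x_i - 1 - \log x_i),$$
which is non-negative on $\mathbb{R}^N_+$ and diverges as any $x_i \downarrow 0$ or $x_i \uparrow \infty$. A direct It\^o calculation gives the generator
$$LV(x) = \sum_i (x_i-1)\Bigl(r_i + \sum_j a_{ij}x_j\Bigr) + \tfrac{1}{2}\sum_i \sigma_i^2 = x^\T A x + r^\T x - \mathbf{1}^\T A x - \mathbf{1}^\T r + \tfrac{1}{2}\sum_i \sigma_i^2.$$
Non-positive definiteness of $A$ (Assumption~\ref{assumptionA1}) yields $x^\T A x \leq 0$, while the remaining terms are at most linear in $x$. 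Combined with the elementary inequality $x_i \leq 2(x_i - 1 - \log x_i) + 3$ valid for all $x_i > 0$, this produces the crucial bound $LV(x) \leq K(1+V(x))$ for a constant $K$ depending only on the parameters.

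Dynkin's formula applied to $V(x(t \wedge \tau_k))$ together with Gronwall's inequality then gives $\mathbb{E}\,V(x(T\wedge\tau_k)) \leq (V(x(0))+KT)e^{KT}$ for every fixed $T > 0$. On the event $\{\tau_k \leq T\}$ some coordinate of $x(\tau_k)$ equals $1/k$ or $k$, so $V(x(\tau_k)) \geq \min(k-1-\log k,\; 1/k-1+\log k) \to \infty$ as $k \to \infty$; Chebyshev's inequality therefore forces $P(\tau_k \leq T) \to 0$. Letting $k \to \infty$ and then $T \to \infty$ shows $\tau_e = \infty$ almost surely, which simultaneously establishes global existence, uniqueness and positivity $x(t) \in \mathbb{R}^N_+$. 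The main obstacle is establishing $LV \leq K(1+V)$ despite the quadratic drift: without the non-positive definiteness of $A$, the term $x^\T A x$ would grow like $\|x\|^2$ and overwhelm the at-most-linear Lyapunov $V$, causing the Gronwall step to fail, so the hypothesis on $A$ is essential rather than cosmetic; this is precisely why the classical linear-growth-based results of Arnold, Friedman and Øksendal cannot be invoked directly.
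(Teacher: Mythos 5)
Your proposal is correct and follows essentially the same route as the paper: the same entropy-type Lyapunov function $V(x)=\sum_i(x_i-1-\log x_i)$, the same use of non-positive definiteness of $A$ to kill the quadratic term $x^{\T}Ax$, the elementary inequality $x_i\le 2(x_i-1-\log x_i)+\mathrm{const}$, and a Gr\"onwall/Chebyshev argument on stopped expectations. The only (cosmetic) difference is that you localize with coordinate exit times $\tau_k$ while the paper stops at level sets of $V$ itself; both yield the same conclusion.
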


\begin{proof}
	Since the coefficients of the equation are locally Lipschitz
	continuous, for any given initial value $x(0)\in \mathbb{R}_{+}$ there
	is a unique local solution $x(t)$ on $t \in [0,\tau_{e})$, where
	$\tau_{e}$ is the explosion time. To show that the solution is
	global, it is equivalent to show that $\tau_{e}=+\infty$ a.s.
	Define a $C^{2}$-function $V:\mathbb{R}^{N}_{+} \rightarrow \mathbb{R}_{\geq 0}$ by
	\begin{eqnarray}
		V(x(t))&=& \sum^{N}_{k=1}\{x_{k}(t)-1-\log (x_{k}(t))\}.\nonumber
	\end{eqnarray}
	Note that $V(x(t))$ is non-negative because $x_{k}(t)-1-\log
	(x_{k}(t)) \geq 0$ for all $x_{k}(t) > 0,k=1,...,N$. Furthermore,
	$V(x(t))\rightarrow\infty$ for any explosion time of $x(t)$, so
	$\tau_e$ is the explosion time of $V(x(t))$.
	
	We choose a sufficiently large non-negative number $\eta_{0}$ such
	that $V(x(0))\leq\eta_{0}$. For each $\eta\geq \eta_{0}$, we can
	define the stopping time
	\begin{eqnarray*}
		\tau_{\eta} &=&
		\inf\{t\in[0,\tau_{e})|V(x(t))\geq\eta\}
	\end{eqnarray*}
	Clearly, $\tau_{\eta}$ is increasing.  Set
	$\tau_{\infty}=\lim_{\eta\rightarrow\infty}\tau_{\eta}$. Since
	$\tau_{\infty}\leq \tau_{e}$, it is sufficient to prove
	$\tau_{\infty}=\infty$ a.s. This is equivalent to showing for any
	$T<\infty$ that $P(\tau_\eta \leq T)\rightarrow 0$ as
	$\eta\rightarrow\infty$.
	
	By It\^{o}'s formula, we get
	\begin{eqnarray}\label{last1}
		\mathrm{d}V(x(t))& =& F_{1}(x(t))dt
		+\sum^{N}_{k=1}\sigma_{k}\{x_{k}(t)-1\}dB_{k}(t),
	\end{eqnarray}
	where
	\begin{eqnarray}
		F_{1}(x(t)) &:=&
		\sum^{N}_{k=1}\{r_{k}x_{k}(t)-(r_{k}-\sigma^{2}_{k}/2)\}+\sum^{N}_{k=1}\{x_{k}(t)-1\}\sum^{N}_{l=1}a_{kl}x_{l}(t). \nonumber
	\end{eqnarray}
	Because $A$ is a non-positive definite matrix,
	$\sum^{N}_{k=1}\sum^{N}_{l=1}a_{kl}x_{k}(t)x_{l}(t)\leq 0$ and we have
	\begin{eqnarray}
		F_{1}(x(t)) &\leq&
		\sum^{N}_{k=1}r_{k}x_{k}(t)-\sum^{N}_{k,l=1}a_{kl}x_{l}(t)
		-\sum^{N}_{k=1}(r_{k}-\sigma^{2}_{k}/2)\nonumber\\ & \leq & C
		\sum^{N}_{k=1}x_{k}(t)+NC,\nonumber
	\end{eqnarray}
	where $C=\max_{k=1,...,N}\{(r_{k}-\sum^{N}_{l=1}a_{lk})\vee
	(r_{k}-\sigma^{2}_{k}/2)\vee 0\}$. From the inequality $x_{k}(t)\leq
	2\{x_{k}(t)-1-\log(x_{k}(t))\}+2$, we have
	\begin{equation}
		F_{1}(x(t)) \leq 2CV(x(t))+3NC,\nonumber
	\end{equation}
	so, from Equation~\eqref{last1}
	\begin{eqnarray}\label{last2}
		\mathrm{d}V(x(t))& \leq & \{2CV(x(t))+3NC\}dt
		+\sum^{N}_{k=1}\sigma_{k}\{x_{k}(t)-1\}dB_{k}(t),\nonumber
	\end{eqnarray}
	and
	\begin{equation}\label{last3}
		\int^{T\wedge \tau_{\eta}}_{0} dV(x(t)) \leq \int^{T\wedge
			\tau_{\eta}}_{0}\{2CV(x(t))+3NC\}dt +\int^{T\wedge
			\tau_{\eta}}_{0}\sum^{N}_{k=1}\sigma_{k}\{x_{k}(t)-1\}dB_{k}(t).\nonumber
	\end{equation}
	Taking expectations on both sides,
	\begin{eqnarray}\label{last4}
		E\{V(x(T\wedge \tau_{\eta}))\} &\leq& V(x(0))+ 2C
		E\left\{\int^{T\wedge \tau_{\eta}}_{0}V(x(t))dt\right\}+3NCE(T\wedge
		\tau_{\eta})\nonumber\\ & \leq& V(x(0))+3NCT+ 2C
		E\left\{\int^{T\wedge \tau_{\eta}}_{0}V(x(t))dt\right\}.\nonumber
	\end{eqnarray}
	By Gr\"{o}nwall's inequality, we have
	\begin{eqnarray}\label{last5}
		E\{V(x(T\wedge \tau_{\eta}))\}& \leq& \{V(x(0))+3 NCT\}
		\exp(2CT).\nonumber
	\end{eqnarray}
	By definition of $\tau_{\eta}$, we have $V(x(\tau_\eta))=\eta$, so
	$$\{V(x(0))+3 NCT\}\exp(2CT)\geq E\{V(x(T\land \tau_{\eta}))\}\geq \eta P(\tau_\eta\leq T).$$
	The left side of this inequality is bounded, so we must have
	$P(\tau_\eta\leq T)\rightarrow 0$ as $\eta\rightarrow\infty$.
\end{proof}

\begin{proposition}
	Under Assumptions~\ref{assumptionA1} and \ref{assumptionA2}, there exist positive constants $C_{1}$ and $C_{2}$ such that for any initial value $x_{0}\in \mathbb{R}_{+}^{N}$,
	the solution of SGLV~model~\eqref{1.1} has the properties
	\begin{equation}\label{Supplementarylimsup}
		\limsup_{t\rightarrow+\infty}\sum^{N}_{k=1}E\{x^{\theta}_{k}(t)\}\leq C_{1},\quad
		\limsup_{t\rightarrow+\infty}\sum^{N}_{k=1}t^{-1}E\left\{\int^{t}_{0}x^{\%theta}_{k}(s)ds\right\}\leq C_{2},\nonumber
	\end{equation}
	for any $0\leqslant \theta\leq 4$.
\end{proposition}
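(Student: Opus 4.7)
The plan is to apply It\^{o}'s formula to the Lyapunov function $V(x)=\sum_{k=1}^{N}x_{k}^{\phi}$, where $\phi\geq 4$ is the exponent supplied by Assumption~\ref{assumptionA2}, and then transfer the resulting bounds to arbitrary $\theta\in[0,4]$ by monotonicity. Using model~\eqref{1.1},
\begin{equation*}
dV(x(t)) = \mathcal{L}V(x(t))\,dt + \phi\sum_{k=1}^{N}\sigma_{k}x_{k}^{\phi}(t)\,dB_{k}(t),
\end{equation*}
with generator
\begin{equation*}
\mathcal{L}V(x) = \phi\sum_{k=1}^{N}\Bigl(r_{k}+\tfrac{\phi-1}{2}\sigma_{k}^{2}\Bigr)x_{k}^{\phi} + \phi\sum_{k,l=1}^{N}a_{kl}x_{k}^{\phi}x_{l}.
\end{equation*}

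The crucial step is controlling the bilinear sum. Splitting off the diagonal, using $a_{kl}x_{k}^{\phi}x_{l}\leq(a_{kl}\vee 0)x_{k}^{\phi}x_{l}$ for $l\neq k$, and applying the weighted Young inequality $x_{k}^{\phi}x_{l}\leq\tfrac{\phi}{\phi+1}x_{k}^{\phi+1}+\tfrac{1}{\phi+1}x_{l}^{\phi+1}$ (whose exponents are tuned precisely to match the coefficients in Assumption~\ref{assumptionA2}), I would obtain
\begin{equation*}
\sum_{k,l=1}^{N}a_{kl}x_{k}^{\phi}x_{l} \leq \sum_{k=1}^{N}\Bigl[a_{kk}+\tfrac{\phi}{\phi+1}\sum_{l=1}^{N}(a_{kl}\vee 0)+\tfrac{1}{\phi+1}\sum_{l=1}^{N}(a_{lk}\vee 0)\Bigr]x_{k}^{\phi+1} \leq -\lambda\sum_{k=1}^{N}x_{k}^{\phi+1}
\end{equation*}
for some $\lambda>0$. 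The linear-growth term $(r_{k}+\tfrac{\phi-1}{2}\sigma_{k}^{2})x_{k}^{\phi}$ is absorbed using the scalar bound $\alpha x^{\phi}\leq\epsilon x^{\phi+1}+C(\alpha,\epsilon)$, where $\epsilon$ is chosen small enough that the net coefficient in front of $x_{k}^{\phi+1}$ remains strictly negative. This yields
\begin{equation*}
\mathcal{L}V(x) \leq -c_{1}\sum_{k=1}^{N}x_{k}^{\phi+1}+C_{*},
\end{equation*}
and, using $x^{\phi+1}\geq x^{\phi}-1$, also $\mathcal{L}V(x)\leq -c_{2}V(x)+C_{**}$.

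To make the expectation step rigorous I would localize with $\tau_{n}=\inf\{t:V(x(t))\geq n\}$ so that the stochastic integral is a genuine martingale on $[0,t\wedge\tau_{n}]$ and vanishes in expectation, then send $n\to\infty$ via Fatou's lemma after Gr\"{o}nwall's inequality, producing $E[V(x(t))]\leq V(x(0))e^{-c_{2}t}+C_{**}/c_{2}$. This establishes the first claim for $\theta=\phi$. For the time-averaged claim I would integrate the stronger pre-Gr\"{o}nwall inequality from $0$ to $t$, drop the non-negative term $E[V(x(t))]$, and divide by $t$ to obtain $\limsup_{t\to\infty}t^{-1}\int_{0}^{t}E\sum_{k}x_{k}^{\phi+1}(s)\,ds\leq C_{*}/c_{1}$, which handles $\theta=\phi+1$. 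Since $x^{\theta}\leq 1+x^{\phi+1}$ whenever $0\leq\theta\leq\phi+1$, and $\phi\geq 4$ ensures $[0,4]\subseteq[0,\phi+1]$, both bounds extend to all $\theta\in[0,4]$ with adjusted constants.

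The main obstacle is the algebraic matching in the second paragraph: the Young exponents $\phi/(\phi+1)$ and $1/(\phi+1)$ must line up exactly with the structure of Assumption~\ref{assumptionA2}, and this alignment forces the particular choice $V(x)=\sum_{k}x_{k}^{\phi}$. Without it, the positive cross terms $a_{kl}x_{k}^{\phi}x_{l}$ ($l\neq k$) cannot be dominated by the negative diagonal contributions $a_{kk}x_{k}^{\phi+1}$, and the inequality $\mathcal{L}V\leq -c_{1}\sum_{k}x_{k}^{\phi+1}+C_{*}$ would fail.
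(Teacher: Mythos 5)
Your proposal is correct and follows essentially the same route as the paper: the Lyapunov function $V(x)=\sum_{k}x_{k}^{\phi}$, the Young inequality $x_{k}^{\phi}x_{l}\leq\tfrac{\phi}{\phi+1}x_{k}^{\phi+1}+\tfrac{1}{\phi+1}x_{l}^{\phi+1}$ matched to Assumption~\ref{assumptionA2}, and localization plus Gr\"{o}nwall. The only cosmetic difference is that the paper applies It\^{o} to $e^{t}V(x(t))$ and bounds the resulting drift by a constant, whereas you extract the explicit inequality $\mathcal{L}V\leq -c_{2}V+C_{**}$ (and obtain the time-averaged bound at exponent $\phi+1$ rather than $\phi$), which is an equivalent formulation.
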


\begin{proof}
	Since the existence of higher moments implies the existence of lower
	moments, it is sufficient to prove the results for some
	$\theta\geqslant 4$. We show that the result holds for $\theta=\phi$
	from Assumption~\ref{assumptionA2}.  Define a $C^{2}$-function
	$V:\mathbb{R}^{N}_{+}\rightarrow \mathbb{R}_{+}$ by $V(x(t))
	=\sum^{N}_{k=1}x^{\phi}_{k}(t)$, where
	$x(t)=(x_{1}(t),...,x_{N}(t))^{\T}$.  We apply It\^{o}'s formula to
	$e^{t}V(x(t))$ and get
	\begin{equation*}
		\mathrm{d}\{e^{t}V(x(t))\} = e^{t}F(x(t))\mathrm{d}t+e^{t}\sum^{N}_{k=1}\phi\sigma_{k}x^{\phi}_{k}(t)\mathrm{d}B_{k}(t),\nonumber
	\end{equation*}
	where
	\begin{eqnarray}
		F(x(t)):=V(x(t))+\phi\sum^{N}_{k=1}r_{k}x^{\phi}_{k}(t)+\phi\sum^{N}_{k,l=1}a_{kl}x^{\phi}_{k}(t)x_{l}(t)+\{\phi(\phi-1)/2\}\sum^{N}_{k=1}\sigma^{2}_{k}x^{\phi}_{k}(t).
	\end{eqnarray}
	Note that, setting $J_{k}=\{l\ne k|a_{kl}<0\}$, we have
	\begin{eqnarray}
		\sum^{N}_{k,l=1}a_{kl}x^{\phi}_{k}(t)x_{l}(t)
		&=& \sum^{N}_{k=1}a_{kk}x^{\phi+1}_{k}(t)+\sum^{N}_{k=1}\left(\sum_{l\in J_{k}}+\sum_{l\in J^{c}_{k}}\right)a_{kl}x^{\phi}_{k}(t)x_{l}(t)\nonumber\\
		&\leq  & \sum^{N}_{k=1}a_{kk}x^{\phi+1}_{k}(t)+\sum^{N}_{k=1}\sum_{l\in J^{c}_{k}}a_{kl}x^{\phi}_{k}(t)x_{l}(t)\nonumber\\
		&\leq  & \sum^{N}_{k=1}a_{kk}x^{\phi+1}_{k}(t)+\sum^{N}_{k=1}\sum_{l\in J^{c}_{k}}a_{kl}\left\{
		\phi(\phi+1)^{-1} x^{\phi+1}_{k}(t)+(\phi+1)^{-1} x^{\phi+1}_{l}(t)\right\}\nonumber\\
		&=  & \sum^{N}_{k=1}x^{\phi+1}_{k}(t)\left\{a_{kk}+\phi(\phi+1)^{-1}\sum^{N}_{l=1}(a_{kl}\vee 0)+(\phi+1)^{-1}\sum^{N}_{l=1}(a_{lk}\vee 0)\right\}.\nonumber
	\end{eqnarray}
	Since $a_{kk}+\phi(\phi+1)^{-1}\sum^{N}_{l=1}(a_{kl}\vee 0)+(\phi+1)^{-1}\sum^{N}_{l=1}(a_{lk}\vee 0)<0$,
	$F(x(t))$ is bounded. Let $ C_{1} := \sup_{x(t)\in \mathbb{R}^{N}_{+} }F(x(t))<+\infty$,
	and we have
	\begin{equation}\label{2.7}
		\mathrm{d}\{\exp(t)V(x(t))\}
		\leq C_{1}\exp(t)\mathrm{d}t+\exp(t)\sum^{N}_{k=1}\phi\sigma_{k}x^{\phi}_{k}(t)\mathrm{d}B_{k}(t).
	\end{equation}
	
	Let $\eta_{0}>V(x(0))$. For each  $\eta\geq \eta_{0}$, we define the stopping time
	$$  \tau_{\eta} = \inf\{t\geq 0| V(x(t))\geq\eta\}.$$
	It then follows from Equation~\eqref{2.7}
	\begin{equation*}
		E\left[\int^{t\wedge\tau_{\eta}}_{0}d\{\exp(s)V(x(s))\}\right] \leq E \left[\int^{t\wedge\tau_{\eta}}_{0}C_{1}\exp(s)ds\right]
		+E\left[\int^{t\wedge\tau_{\eta}}_{0}\exp(s)\sum^{N}_{k=1}\phi\sigma_{k}x^{\phi}_{k}(s)dB_{k}(s)\right],\nonumber
	\end{equation*}
	where $\int^{t\wedge\tau_{\eta}}_{0}\exp(s)\sum^{N}_{k=1}\phi\sigma_{k}x^{\phi}_{k}(s)dB_{k}(s)$ is a martingale with expectation equal to zero, so we have
	\begin{equation*}
		E\{\exp(t\wedge\tau_{\eta})V(x(t\wedge\tau_{\eta}))\} \leq  V(x_{0})+C_{1}E\left(\int^{t\wedge\tau_{\eta}}_{0}\exp(s)ds\right).\nonumber
	\end{equation*}
	By the proof of Proposition~\ref{pro2.1},
	$\tau_{\eta}\rightarrow\infty$ almost surely as $\eta\rightarrow+\infty$. So we have
	\begin{equation}\label{2.8}
		\exp(t)E\{V(x(t))\} \leq  V(x_{0})+C_{1}(\exp(t)-1),
	\end{equation}
	thus
	\begin{eqnarray}
		\limsup_{t\rightarrow+\infty} \sum^{N}_{k=1}E\{x^{\phi}_{k}(t)\} &\leq & C_{1}.\nonumber
	\end{eqnarray}
	Integrating both sides of equation (\ref{2.8}) divided by $\exp(t)$, we can get
	\begin{eqnarray}
		\limsup_{t\rightarrow+\infty}t^{-1}\int^{t}_{0}E\{V(x(s))\}ds &\leq & C_{2}.\nonumber
	\end{eqnarray}
	By Fubini's theorem, we obtain the result.
\end{proof}

The proof of Proposition~3, uses the following theorems from Chapter~4 of \cite{RZH:1980}.

{
	\let\temptheorem\thetheorem
	\renewcommand{\thetheorem}{4.1}
	
	\begin{theorem}[Stationarity]\label{2}
		If Assumptions~\ref{assumptionB.1}--\ref{assumptionB.2} hold, then the Markov process $X(t)$ has a unique
		stationary distribution $\mu(\cdot)$.
	\end{theorem}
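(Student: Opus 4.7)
The plan is to follow Has'minskii's classical two-step strategy for establishing a unique stationary measure of a nondegenerate diffusion. First, I would establish existence via a Krylov--Bogolyubov procedure: define $\mu_t(\cdot) = t^{-1} \int_0^t P_s(x, \cdot)\,ds$ and show that this family of probability measures is tight. Assumption~B.2, which I expect to supply a Lyapunov function $V \geq 0$ with $LV$ bounded above by a negative constant outside some compact set $K$, is designed precisely to furnish this tightness: an application of Dynkin's formula gives a bound of the form $E\{V(X_t)\} \leq V(x) + C\,t - c\int_0^t P_s(x, K^c)\,ds$, so that the Ces\`aro-averaged mass outside $K$ vanishes. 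A tightness argument then extracts a weakly convergent subsequence $\mu_{t_n} \to \mu$, and the Feller property of the semigroup (standard under local Lipschitz continuity of the SDE coefficients) lets one pass to the limit in $\int (P_h f)\,d\mu_{t_n} = \int f\,d\mu_{t_n} + o(1)$ to conclude that $\mu$ is invariant.

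For uniqueness, I would lean on Assumption~B.1, which I take to be a uniform ellipticity condition on the diffusion matrix on bounded subdomains. Uniform ellipticity gives, via parabolic PDE theory applied to the Kolmogorov backward equation, that the transition kernel $P_t(x, dy)$ admits a strictly positive continuous density $p_t(x,y)$ for every $t > 0$; the semigroup is therefore both strong Feller and topologically irreducible on the interior of the state space. Combining strong Feller and irreducibility with the positive recurrence already delivered by B.2, I would invoke the standard dichotomy: any two invariant probability measures must then be mutually absolutely continuous and must coincide on the single recurrent class, forcing $\mu$ to be unique.

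The main obstacle will be verifying the Lyapunov inequality $LV \leq -c$ outside a compact set in a manner compatible with the state space $\mathbb{R}^N_+$ rather than $\mathbb{R}^N$: the function $V$ must blow up both as $\|x\| \to \infty$ and as $x$ approaches the boundary $\partial \mathbb{R}^N_+$, so that no escape of mass toward extinction or explosion can occur. A secondary technical issue is justifying the passage to the weak limit in the Krylov--Bogolyubov step without losing mass, which again reduces to Lyapunov control near the boundary. Once these bookkeeping items are in place, the existence and uniqueness conclusion is essentially automatic from the Has'minskii (or Meyn--Tweedie) framework, and the result can be cited in the downstream proof of Proposition~\ref{pro2.3} after it is verified that the SGLV generator satisfies the hypothesized B.1 and B.2 under Assumptions~\ref{assumptionA1}, \ref{assumptionA3} and \ref{assumptionA4}.
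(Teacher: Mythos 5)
This statement is not proved in the paper at all: it is Theorem~4.1 of Chapter~4 of Ka\'{s}minskii's monograph \citep{RZH:1980}, quoted verbatim so that it can be invoked in the proof of Proposition~\ref{pro2.3}. There is therefore no in-paper argument to compare against; the relevant comparison is with Ka\'{s}minskii's own proof, which proceeds quite differently from your sketch. He does not use Krylov--Bogolyubov averaging or the strong Feller/irreducibility dichotomy; instead he builds the stationary measure from the embedded Markov chain obtained by watching the process alternately hit $\Gamma=\partial U$ and the boundary of a larger domain, establishes a Doeblin-type minorization for that chain using only the nondegeneracy of $\Sigma(x)$ \emph{inside} $U$, and then defines $\mu$ as the normalized expected occupation measure over one cycle. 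Your route is the standard modern alternative and is perfectly viable for globally elliptic diffusions, so as a strategy it is legitimate, but it buys less here because it needs more than the stated hypotheses actually provide.

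Two concrete gaps. First, your uniqueness step asserts that Assumption~\ref{assumptionB.1} yields a strictly positive transition density $p_t(x,y)$ everywhere, hence strong Feller plus topological irreducibility on the whole state space. But Assumption~\ref{assumptionB.1} only bounds the smallest eigenvalue of $\Sigma(x)$ away from zero \emph{on the bounded domain $U$ and a neighborhood thereof}; nothing is assumed outside $U$, so global positivity of the density does not follow and the strong Feller/irreducibility dichotomy cannot be invoked as stated. This is precisely the difficulty Ka\'{s}minskii's cycle construction is designed to avoid, since it only ever uses ellipticity inside $U$. Second, you read Assumption~\ref{assumptionB.2} as supplying a Lyapunov function $V$ with $\mathcal{L}V\leq -c$ off a compact set, but B.2 is the weaker hitting-time condition ($\sup_{x\in K}E_x\tau<\infty$); the Lyapunov inequality is the \emph{sufficient condition} that the paper verifies later, in the proof of Proposition~\ref{pro2.3}, in order to check B.2 for the SGLV model. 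A proof of the theorem itself must derive tightness of the Ces\`aro averages directly from the hitting-time bound (via positive recurrence of the cycle structure), not from a Lyapunov function that is not among the hypotheses. Your closing remarks about the boundary behaviour of $V$ on $\mathbb{R}^N_+$ are sensible, but they pertain to the verification of B.1--B.2 for the SGLV generator, which the paper carries out separately, rather than to the proof of this imported theorem.
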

	
	\renewcommand{\thetheorem}{4.2}
	
	\begin{theorem}[Ergodicity]\label{3}
		Suppose that Assumptions~\ref{assumptionB.1}--\ref{assumptionB.2} hold, and let $\mu$ be the
		stationary distribution of the process $X(t)$.
		Let $f(\cdot)$ be a function integrable with respect to the measure $\mu$. Then
		\begin{eqnarray}
			P_{x}\left\{\lim_{T\rightarrow+\infty}\frac{1}{T}\int^{T}_{0}f(X(t))dt=\int_{\mathbb{R}^{N}}f(x)\mu(dx) \right\}&=&1
		\end{eqnarray}
		for all $x\in \mathbb{R}^{N}$.
	\end{theorem}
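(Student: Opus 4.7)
The plan is to derive the ergodic theorem from the existence of a unique stationary distribution $\mu$ established in Theorem~4.1 by combining Birkhoff's ergodic theorem for stationary processes with a uniqueness argument to get ergodicity of $\mu$, and then lifting convergence from $P_{\mu}$-a.s.\ starting points to every $x$ via Harris-type recurrence.

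First, I would start the process from $\mu$. By Theorem~4.1, $\mu$ is invariant under the transition semigroup $P_t$, so under $P_\mu$ the process $\{X(t)\}_{t\ge 0}$ is strictly stationary. The continuous-time Birkhoff--Khinchin theorem applied to the $P_\mu$-shift on path space then yields, for every $f \in L^1(\mu)$,
$$\lim_{T\to+\infty}\frac{1}{T}\int_{0}^{T} f(X(t))\,dt \;=\; E_\mu[f(X(0))\mid \mathcal{I}]\quad P_\mu\text{-a.s.,}$$
where $\mathcal{I}$ denotes the sigma-algebra of shift-invariant events on path space.

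Second, I would show that $\mathcal{I}$ is $P_\mu$-trivial, so that the conditional expectation collapses to $\int_{\mathbb{R}^N} f\,d\mu$. The idea is by contradiction: if there existed $A\in\mathcal{I}$ with $0<P_\mu(A)<1$, then $P_\mu(\,\cdot\mid A)$ and $P_\mu(\,\cdot\mid A^c)$ would be two distinct shift-invariant probability measures on path space, and by the Markov property their time-$0$ marginals would give two distinct stationary measures for $X$, contradicting the uniqueness part of Theorem~4.1. Hence $E_\mu[f(X(0))\mid \mathcal{I}] = \int f\,d\mu$ almost surely, which establishes the conclusion for $P_\mu$-a.e.\ starting point.

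Third, I would upgrade from $\mu$-a.e.\ initial condition to every $x\in \mathbb{R}^N$. Assumption~B.1 ensures non-degeneracy of the diffusion, so transition kernels $P(t,x,\cdot)$ admit positive densities and the process is strong Feller; Assumption~B.2 (a Lyapunov/drift condition) forces return to compact sets in finite mean time. Together these give Harris recurrence and total-variation convergence $\|P(t,x,\cdot)-\mu\|_{TV}\to 0$ for every $x$. A coupling argument (couple the process started at $x$ with the stationary version after a finite almost-surely finite meeting time) then transfers the Cesaro limit from $P_\mu$-a.s.\ to $P_x$-a.s., because changing the paths on a finite-length initial segment does not affect the limit $T^{-1}\int_0^T f(X(t))dt$.

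The main obstacle is Step~2: the passage from stationarity to ergodicity, i.e., triviality of $\mathcal{I}$. The argument that two invariant tail events would produce two distinct stationary measures requires careful use of the Markov property to disintegrate path-space invariance into state-space invariance, and is the essential ingredient that distinguishes the mere existence of a stationary distribution from the ergodic behaviour of time averages.
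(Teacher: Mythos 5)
You should first note that the paper itself does not prove this statement at all: it is imported verbatim as Theorem~4.2 from Chapter~4 of the monograph of Ka\'{s}minskii cited in the appendix, and is used as a black box in the proof of Proposition~3. So the comparison is between your proposal and the classical proof in that source, which proceeds quite differently from you: it builds nested domains $U\subset U_1$, decomposes the trajectory into cycles between $\partial U$ and $\partial U_1$, shows that the embedded Markov chain of successive hitting positions on $\partial U$ satisfies a Doeblin condition (using the ellipticity of Assumption~B.1, which is only assumed \emph{inside} $U$ and a neighbourhood), applies the law of large numbers to that chain, and recovers $\mu$ through a Kac-type cycle representation; this yields the $P_x$-a.s.\ statement for every $x$ directly, with no path-space ergodic theory. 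Your route (Birkhoff--Khinchin under $P_\mu$, triviality of the shift-invariant $\sigma$-field via uniqueness, then lifting to arbitrary initial points) is the standard modern alternative and is sound in outline; in Step~2 you correctly flag, but do not supply, the key lemma that conditioning the stationary Markov path law on a shift-invariant event yields again a Markov law with the same semigroup, and you must also handle the degenerate branch of the dichotomy: if the two conditional time-$0$ marginals coincide they both equal $\mu$, whence both conditional path laws equal $P_\mu$, contradicting their mutual singularity on $A$ and $A^c$ -- uniqueness of the marginal alone does not immediately give the contradiction you assert.

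The genuine gap is in Step~3. Assumption~B.1 asserts that the smallest eigenvalue of the diffusion matrix is bounded away from zero only on the bounded domain $U$ and some neighbourhood thereof, not on all of $\mathbb{R}^N$; indeed, in the application in this paper the diffusion matrix is ${\rm diag}(\sigma^2_1x^2_1,\ldots,\sigma^2_Nx^2_N)$, which degenerates at the boundary of $\mathbb{R}^N_+$. Consequently your claims that ``transition kernels $P(t,x,\cdot)$ admit positive densities and the process is strong Feller'' and that $\|P(t,x,\cdot)-\mu\|_{TV}\to 0$ for every $x$ do not follow from the stated hypotheses, and the coupling with an a.s.\ finite meeting time is unjustified as written: outside $U$ the noise may be too degenerate for any such global regularization. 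The repair is to argue locally, which is exactly what forces the cycle structure of the original proof: Assumption~B.2 gives finite mean hitting time of $U$ from every $x$ (uniformly on compacts), and the ellipticity on $U$ gives a minorization condition on a compact subset of $U$ (a small set), so the process is positive Harris recurrent; the law of large numbers for additive functionals of Harris processes (e.g.\ Az\'ema--Duflo--Revuz in continuous time) then delivers the conclusion for every starting point, replacing both your strong Feller claim and your coupling step. With that substitution your architecture closes, but as proposed, Step~3 would fail.
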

	
	\let\thetheorem\temptheorem
	
}

The Assumptions~\ref{assumptionB.1}--\ref{assumptionB.2} are as follows: There exists a bounded open domain $U\subset \mathbb{R}^{N}$ with
regular boundary $\Gamma$, having the following properties:

{\let\tempassumption\theassumption
	\renewcommand{\theassumption}{B.1}
	
	\begin{assumption}
		\label{assumptionB.1}
		In the domain $U$ and some neighborhood thereof, the smallest
		eigenvalue of the diffusion matrix $\Sigma(x)$ is bounded away from zero.
	\end{assumption}
	
	\renewcommand{\theassumption}{B.2}
	
	\begin{assumption}
		\label{assumptionB.2}
		If $x\in \mathbb{R}^{N}\setminus U$, the mean time $\tau$ at which a path  issuing from $x$ reaches the set $U$ is finite, and $\sup_{x\in K}E_{x} \tau<\infty$ for every
		compact subset $K\subset\mathbb{R}^{N}$.
	\end{assumption}
	
	\let\theassumption\tempassumption
}

\begin{proposition}\label{Supplementarypro2.3}
	Under Assumptions~\ref{assumptionA1}, \ref{assumptionA3} and \ref{assumptionA4}, there is a stationary distribution for the solution of SGLV~model~\eqref{1.1}, and it has the ergodic property.
\end{proposition}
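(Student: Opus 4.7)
The plan is to verify Assumptions~B.1 and B.2 of Khasminskii's Theorems~\ref{2} and~\ref{3} on the state space $\mathbb{R}^{N}_{+}$, which would deliver both the stationary distribution and the ergodic property simultaneously. Assumption~B.1 is immediate: the diffusion matrix of SGLV~model~\eqref{1.1} is $\Sigma(x)=\mathrm{diag}(\sigma_{1}^{2}x_{1}^{2},\ldots,\sigma_{N}^{2}x_{N}^{2})$, whose smallest eigenvalue is $\min_{k}\sigma_{k}^{2}x_{k}^{2}$; for any bounded open domain $U\subset\mathbb{R}^{N}_{+}$ with $\overline{U}\subset\mathbb{R}^{N}_{+}$, this quantity is bounded away from zero on $U$ and a neighborhood thereof. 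Positive-recurrence (Assumption~B.2) is the real content, and will be established by constructing a suitable Lyapunov function.

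The natural candidate, motivated by classical deterministic Lotka--Volterra analysis, is
\[
V(x)=\sum_{k=1}^{N}c_{k}\Bigl\{x_{k}-\tilde{x}_{k}-\tilde{x}_{k}\log(x_{k}/\tilde{x}_{k})\Bigr\},
\]
where $\tilde{x}=-A^{-1}(r-\sigma^{2}/2)\in\mathbb{R}^{N}_{+}$ by Assumption~\ref{assumptionA3} and the $c_{k}>0$ are the constants furnished by Assumption~\ref{assumptionA4}. This $V$ is $C^{2}$ on $\mathbb{R}^{N}_{+}$, non-negative, and $V(x)\to\infty$ as $x$ approaches either $\partial\mathbb{R}^{N}_{+}$ (through the $-\tilde{x}_{k}\log x_{k}$ terms) or infinity (through the linear $x_{k}$ terms), so it is a proper function on $\mathbb{R}^{N}_{+}$. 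Applying It\^{o}'s formula and using the identity $r_{k}+\sum_{l}a_{kl}x_{l}=\sigma_{k}^{2}/2+\sum_{l}a_{kl}(x_{l}-\tilde{x}_{l})$ (which is just the rearrangement of $A\tilde{x}=-(r-\sigma^{2}/2)$), a direct computation with $y_{k}:=x_{k}-\tilde{x}_{k}$ gives
\[
\mathcal{L}V(x)=\tfrac{1}{2}\sum_{k}c_{k}\sigma_{k}^{2}\tilde{x}_{k}+\tfrac{1}{2}\sum_{k}c_{k}\sigma_{k}^{2}y_{k}+\sum_{k,l}c_{k}a_{kl}y_{k}y_{l}.
\]

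The key step is to bound the cross-term using $|y_{k}y_{l}|\leq(y_{k}^{2}+y_{l}^{2})/2$, which yields
\[
\sum_{k,l}c_{k}a_{kl}y_{k}y_{l}\leq -\tfrac{1}{2}\sum_{k}\gamma_{k}y_{k}^{2},\qquad \gamma_{k}:=-2c_{k}a_{kk}-\sum_{l\neq k}\bigl(c_{k}|a_{kl}|+c_{l}|a_{lk}|\bigr),
\]
so that $\mathcal{L}V(x)\leq \tfrac{1}{2}\sum_{k}c_{k}\sigma_{k}^{2}\tilde{x}_{k}+\tfrac{1}{2}\sum_{k}c_{k}\sigma_{k}^{2}y_{k}-\tfrac{1}{2}\sum_{k}\gamma_{k}y_{k}^{2}$. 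Assumption~\ref{assumptionA4} exactly guarantees $\gamma_{k}\tilde{x}_{k}^{2}>\sum_{i}c_{i}\sigma_{i}^{2}\tilde{x}_{i}$ for every $k$; this is precisely the inequality required to make $\mathcal{L}V(x)$ strictly negative both when any $|y_{k}|$ is large (handled by the $-\gamma_{k}y_{k}^{2}$ term dominating the linear term) and as $x$ approaches the boundary $\partial\mathbb{R}^{N}_{+}$ (where some $y_{k}=-\tilde{x}_{k}$ and the remaining $y_{l}$ are bounded). Consequently one can choose a constant $C>0$ and a bounded open set $U$ with $\overline{U}\subset\mathbb{R}^{N}_{+}$ such that $\mathcal{L}V(x)\leq -C$ for all $x\in\mathbb{R}^{N}_{+}\setminus U$.

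A standard stopping-time argument applied to the semimartingale $V(x(t))+Ct$ then shows that the mean hitting time of $U$ from any starting point outside $U$ is finite, uniformly on compact subsets; this verifies Assumption~B.2. Theorems~\ref{2} and~\ref{3} of \cite{RZH:1980} therefore yield both the existence of a unique stationary distribution for $x(t)$ and the ergodic property, completing the proof. The main obstacle I anticipate is the bookkeeping needed to turn the pointwise inequality $\mathcal{L}V\leq -C$ into the right choice of $U$ so that both the boundary $\partial\mathbb{R}^{N}_{+}$ and infinity are genuinely excluded; the Cauchy--Schwarz/AM--GM step is loose but Assumption~\ref{assumptionA4} is tight enough to absorb the slack, which is why the bound involves absolute values $|a_{kl}|$ rather than the signs of the $a_{kl}$.
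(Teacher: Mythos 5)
Your proposal is correct in outline and follows essentially the same route as the paper: verify Khasminskii's conditions B.1--B.2 on $\mathbb{R}^{N}_{+}$ using the Lyapunov function $V^{\ast}(x)=\sum_{k}c_{k}\{x_{k}-\tilde{x}_{k}-\tilde{x}_{k}\log(x_{k}/\tilde{x}_{k})\}$, bound the cross terms by $|y_{k}y_{l}|\le(y_{k}^{2}+y_{l}^{2})/2$, and invoke Assumption~\ref{assumptionA4} to place the set $\{\mathcal{L}V^{\ast}\ge 0\}$ strictly inside $\mathbb{R}^{N}_{+}$. One substantive remark: your It\^{o} computation is actually the more accurate one. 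Since $r_{k}+\sum_{l}a_{kl}x_{l}=\sigma_{k}^{2}/2+\sum_{l}a_{kl}y_{l}$, the generator is $\mathcal{L}V^{\ast}=\sum_{k,l}c_{k}a_{kl}y_{k}y_{l}+\tfrac{1}{2}\sum_{k}c_{k}\sigma_{k}^{2}x_{k}$, i.e.\ it contains the linear term $\tfrac{1}{2}\sum_{k}c_{k}\sigma_{k}^{2}y_{k}$ that you keep but that the paper's displayed formula silently drops (the paper writes $\tfrac{1}{2}\sum_{k}c_{k}\sigma_{k}^{2}\tilde{x}_{k}$ where $\tfrac{1}{2}\sum_{k}c_{k}\sigma_{k}^{2}x_{k}$ should appear). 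Having kept that term, however, your claim that Assumption~\ref{assumptionA4} is ``precisely'' what is needed near the boundary is a bit quick: on the face $y_{k}=-\tilde{x}_{k}$ with the other coordinates moderate, completing the square in each $y_{l}$, $l\neq k$, leaves a positive residue $\sum_{l\neq k}(c_{l}\sigma_{l}^{2})^{2}/(8\gamma_{l})$ that Assumption~\ref{assumptionA4} as stated does not obviously absorb, so the ellipsoid $\{\mathcal{L}V^{\ast}\ge -C\}$ is shifted and enlarged relative to the one the assumption controls. This is a small quantitative gap rather than a wrong idea, and it afflicts the paper's own proof in a different (arguably worse) form; closing it would require either a slightly strengthened version of Assumption~\ref{assumptionA4} or an explicit argument that the enlarged ellipsoid still avoids $\partial\mathbb{R}^{N}_{+}$.
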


\begin{proof}
	To apply Theorems~\ref{2} and~\ref{3}, we first need to show that
	Assumptions~\ref{assumptionB.1}--\ref{assumptionB.2} hold.  For Assumption~\ref{assumptionB.1}, Since $\sigma_{k}>0$ in SGLV (3), the smallest eigenvalue of the diffusion matrix
	$\Sigma={\rm diag}(\sigma^{2}_{1},...,\sigma^{2}_{N})$ is bounded away from
	zero. To verify Assumption~\ref{assumptionB.2}, it is sufficient to show that there exists some
	neighborhood $U$ and a non-negative $C^{2}$-function $V(x(t))$ such
	that and for any $x(t)\in \mathbb{R}^{N}\setminus U$, $\mathcal{L}V$
	is negative (for details, refer to \cite{ZY:2007}), where
	\begin{equation}\label{Ldefinition}
		\mathcal{L}:=\sum^{N}_{k=1}x_{k}(t)\left\{r_{k}+\sum^{N}_{l=1}a_{kl}x_{l}(t)\right\}\partial(\cdot)\{\partial
		x_{k}(t)\}^{-1}+2^{-1}\sum^{N}_{k=1}\sigma^{2}_{k}\partial^{2}(\cdot)\{\partial
		x_{k}(t)\partial x_{l}(t)\}^{-1}.
	\end{equation}
	
	By Assumption~\ref{assumptionA3} , $\tilde{x}=(\tilde{x}_{1},...,\tilde{x}_{N})^{\T}$ is positive.
	We define a non-negative $C^{2}-$function
	$V^{\ast}(x(t))=\sum^{N}_{k=1}c_{k}\{x_{k}(t)-\tilde{x}_{k}-\tilde{x}_{k}\log(x_{k}(t)/\tilde{x}_{k})\}$ and will show that there exists a constant $C^{\ast}>0$ such that
	$$
	\mathcal{L}V^{\ast}(x(t))\leq -C^{\ast},
	$$
	where $\mathcal{L}$, defined by Equation~\eqref{Ldefinition},
	denotes the infinitesimal generator of stochastic differential equation~\eqref{1.1}.
	It is easy to see that $V^{\ast}(x(t))\rightarrow+\infty$ as $x(t)\rightarrow
	+\infty$ and $V^{\ast}(x(t))\rightarrow+\infty$ as $x(t)\rightarrow 0$.
	With some computation, we can get
	\begin{eqnarray}
		&&  \mathcal{L}V^{\ast}(x(t)) \nonumber\\
		&=& \sum^{N}_{k=1}c_{k}a_{kk}\{x_{k}(t)-\tilde{x}_{k}\}^{2}
		+\sum^{N}_{k=1}\sum^{N}_{l\neq k}c_{k}a_{kl}\{x_{k}(t)-\tilde{x}_{k}\}\{x_{l}(t)-
		\tilde{x}_{l}\}+\frac{1}{2}\sum^{N}_{k=1}c_{k}\sigma^{2}_{k}\tilde{x}_{k}.\nonumber
	\end{eqnarray}
	Note that
	\begin{eqnarray}
		&& \sum^{N}_{k=1}\sum^{N}_{l\neq k}c_{k}a_{kl}\{x_{k}(t)-\tilde{x}_{k}\}\{x_{l}(t)-
		\tilde{x}_{l}\}\nonumber\\
		&\leq &\frac{1}{2}\sum^{N}_{k=1}\sum^{N}_{l\neq k}c_{k}|a_{kl}|\big[\{x_{k}(t)-\tilde{x}_{k}\}^{2}+\{x_{l}(t)-
		\tilde{x}_{l}\}^{2}\big],\nonumber
	\end{eqnarray}
	so
	\begin{equation}\label{1.2.1}
		\mathcal{L}V^{\ast}(x(t))
		\leq  \frac{1}{2}\sum^{N}_{k=1}\left\{2c_{k}a_{kk}
		+\sum_{l\neq k}(c_{k}|a_{kl}|
		+c_{l}|a_{lk}|)
		\right\}\{x_{k}(t)-\tilde{x}_{k}\}^{2}+\frac{1}{2}\sum^{N}_{k=1}c_{k}\sigma^{2}_{k}\tilde{x}_{k}.\nonumber
	\end{equation}
	Assumption~\ref{assumptionA4} gives that,
	\begin{equation*}
		\sum^{N}_{k=1}c_{k}\sigma^{2}_{k}\tilde{x}_{k} < \min\left\{-\left\{2c_{k}a_{kk}
		+\sum_{l\neq k}(c_{k}|a_{kl}|
		+c_{l}|a_{lk}|)\right\}\tilde{x}^{2}_{k}, k=1,...,N\right\}.
	\end{equation*}
	We consider the ellipsoid
	\begin{equation*}\label{el}
		\mathbb{E}=\left\{x(t)\middle|\sum^{N}_{k=1}\left\{2c_{k}a_{kk}
		+\sum_{l\neq k}(c_{k}|a_{kl}|
		+c_{l}|a_{lk}|)\right\}\left\{x_{k}(t)-\tilde{x}_{k}\right\}^{2}+\sum^{N}_{k=1}c_{k}\sigma^{2}_{k}\tilde{x}_{k}\geq
		0\right\}.
	\end{equation*}
	It is easy to see that $\mathcal{L}V^{\ast}(x(t))>0$ in some subset of
	$\mathbb{E}$ (e.g.$\{x(t)=\tilde{x}(t)\}$)and
	$\mathcal{L}V^{\ast}(x(t))<0$ outside $\mathbb{E}$, so we can find a $\mathbb{D}\subseteq
	\mathbb{R}^{N}_{+}$ and a positive constant $C^{\ast}$, such that
	$\mathbb{D}\supset\mathbb{E}$ and $\mathcal{L}V^{\ast}(x(t))\leq
	-C^{\ast},~x(t)\in \mathbb{R}^{N}_{+}\setminus \bar{\mathbb{D}}$,
	where $\bar{\mathbb{D}}$ is the closure of $\mathbb{D}$, which implies the solution $x(t)$ is recurrent in the
	domain $\mathbb{D}$.
	By Theorem~\ref{2}, $x(t)$ has a stationary distribution, and by the
	Theorem~\ref{3}, it also has the ergodic property.
\end{proof}

\begin{corollary}
	Under Assumptions~\ref{assumptionA1}, \ref{assumptionA3} and \ref{assumptionA4}, for any Borel measurable function $f(\cdot):\mathbb{R}^{N}_{+}\rightarrow \mathbb{R}$,
	which is integrable
	with respect to the density, $\pi(\cdot)$, of the stationary distribution, the solution of SGLV~model \eqref{1.1} has the property,
	\begin{equation}\label{Supplementary2.11}
		\lim_{t\rightarrow+\infty}t^{-1}\int^{t}_{0}f(x(s))ds=\int_{\mathbb{R}^{N}_{+}}f(x)\pi(dx).\nonumber
	\end{equation}
\end{corollary}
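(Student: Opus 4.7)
The plan is to derive the corollary as an almost-immediate consequence of the Khasminskii ergodic theorem (Theorem 4.2 quoted earlier) combined with Proposition~3. The work has essentially been done in establishing Proposition~3; the corollary simply reads off the time-average identity from the ergodic machinery that was set up there.

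Concretely, I would proceed as follows. First, I would recall that Proposition~3, under Assumptions~1, 3 and 4, produces a unique stationary distribution $\pi$ for the Markov process $x(t)$ solving SGLV~\eqref{1.1}, and that the proof of that proposition explicitly verified Assumptions~B.1 and B.2 of Khasminskii: non-degeneracy of the diffusion matrix $\Sigma = \mathrm{diag}(\sigma_1^2,\ldots,\sigma_N^2)$ in a neighborhood of the bounded domain $\mathbb{D}$, together with the Lyapunov inequality $\mathcal{L}V^{\ast}(x(t)) \leq -C^{\ast}$ for $x(t)\in\mathbb{R}^N_+ \setminus \bar{\mathbb{D}}$, which delivers the finite-mean-return-time property and recurrence of the process in $\mathbb{D}$. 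With those hypotheses in force and with $f$ assumed integrable against $\pi$ by the statement of the corollary, Theorem~4.2 applies directly on the state space $\mathbb{R}^N_+$ and yields
\[
P_{x(0)}\!\left\{\lim_{t\rightarrow+\infty} t^{-1}\int^{t}_{0} f(x(s))\,ds = \int_{\mathbb{R}^{N}_{+}} f(x)\,\pi(dx)\right\} = 1
\]
for every initial condition $x(0)\in\mathbb{R}^N_+$, which is exactly the assertion of the corollary.

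Two mild points are worth flagging, though I do not expect either to be a genuine obstacle. The first is a matter of state space: Theorem~4.2 is stated on $\mathbb{R}^N$, but by Proposition~1 the solution is almost surely confined to $\mathbb{R}^N_+$, so one simply restricts the ergodic identity to this invariant open set — $\pi$ is supported on $\mathbb{R}^N_+$, and all integrals are interpreted accordingly. The second is that Theorem~4.2 requires only $\mu$-integrability of $f$, which matches the hypothesis on $f$ here verbatim; no extra moment control is needed beyond what is granted by the statement. Hence no additional analytical work is required, and the proof reduces to citing Proposition~3, observing the verification of Assumptions~B.1--B.2 carried out there, and invoking Theorem~4.2.

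If anything plays the role of a ``hard part,'' it is simply making sure these citations are clean and that one has not implicitly used more regularity on $f$ than the statement allows; but since Khasminskii's ergodic theorem is formulated precisely for Borel measurable $\pi$-integrable $f$, the corollary really is an immediate corollary rather than an independent result.
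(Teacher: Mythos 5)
Your proposal is correct and follows essentially the same route as the paper: cite Proposition~3 for the existence of the stationary distribution and the verification of the Khasminskii conditions, then invoke Theorem~4.2 of \cite{RZH:1980} to read off the almost-sure time-average identity for any $\pi$-integrable Borel $f$. Your remark about restricting the ergodic identity to the invariant set $\mathbb{R}^{N}_{+}$ is a point the paper's proof leaves implicit, but it does not change the argument.
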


\begin{proof}
	By Proposition~\ref{pro2.3},
	there exists a stationary density for the stochastic differential equation, denoted by $\pi(\cdot)$.
	The ergodic theory on the stationary distribution (Theorem 4.2 from \cite{RZH:1980}) gives that
	for any Borel measurable function $f(\cdot):\mathbb{R}^{N}_{+}\rightarrow \mathbb{R}$,
	which is integrable
	with respect to $\pi(\cdot)$,
	\begin{equation}
		\lim_{t\rightarrow+\infty}t^{-1}\int^{t}_{0}f(x(s))ds=\int_{\mathbb{R}^{N}_{+}}f(x)\pi(dx),\nonumber
	\end{equation}
	holds with probability 1.
\end{proof}

\section{Proof of Theorems}\label{AppendProofTheorem}
Recall that the main theorems depend on the following additional assumptions:

\setcounter{assumption}{4}

\begin{assumption}
	\label{SupplementaryassumptionA5}
	$T$ is fixed, and $\Delta_{\max}\rightarrow0 $.
\end{assumption}
\begin{assumption}
	\label{SupplementaryassumptionA6}
	(I) $T\rightarrow+\infty$ and $\Delta_{\max}\rightarrow0 $; (II) $T\rightarrow+\infty$ and
	$T\Delta_{\max}\rightarrow 0$.
\end{assumption}

Before proving Theorem \ref{FixedT},
we first prove the following key
lemmas:
{\let\templemma\thelemma
	\renewcommand{\thelemma}{S1}
	
	\begin{lemma}\label{lemma2}
		Suppose Assumptions~\ref{assumptionA1}, \ref{assumptionA2}, and \ref{assumptionA5}, hold. For fixed N and T, let $\Psi=\Theta \times \Phi$,
		and $\mathcal{K}\subset\Theta$ be a relatively compact set. Then for any $\tilde{\theta}=(\tilde{\vartheta},\sigma)\in \Psi$
		, and $r=0,1,2$.
		\begin{equation}\label{lemma7.1}
		\sup_{\vartheta\in\overline{\mathcal{K}}}|(\partial/\partial \vartheta)^{r}\ell_{n,T}(\vartheta)-(\partial/\partial \vartheta)^{r}\ell_{T}(\vartheta)|=O_{P_{\tilde{\theta}}}(\Delta^{1/2}_{\max}),n\rightarrow+\infty,
		\end{equation}
		where $\overline{\mathcal{K}}$ is the closure of $\mathcal{K}$.
	\end{lemma}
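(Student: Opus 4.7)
The plan is to reduce the lemma to a component-wise statement, decompose each difference $\ell_{n,T}(\vartheta_k)-\ell_T(\vartheta_k)$ into a Riemann-sum remainder plus an It\^o-integral-approximation remainder, and bound both using the path modulus of the SDE~\eqref{logSDE} together with the moment bounds of Proposition~\ref{pro2.2}. Since $\ell_{n,T}(\vartheta) = \sum_{k=1}^N \sigma_k^{-2}\ell_{n,T}(\vartheta_k)$ with $\sigma_k>0$ fixed, it suffices to bound each $\ell_{n,T}(\vartheta_k)-\ell_T(\vartheta_k)$ and its first two derivatives. Writing $h_k(x,\vartheta_k) = R_k + \sum_l a_{kl} e^{x_l}$, which is linear in $\vartheta_k$, and $\tau(t) = t_i$ for $t\in[t_i,t_{i+1})$, the telescoping identity $\sum_{i=1}^{n-1} h_k(u(t_i),\vartheta_k)\,\Delta_i u_k = \int_0^T h_k(u(\tau(t)),\vartheta_k)\,du_k(t)$ gives
$$
\ell_{n,T}(\vartheta_k)-\ell_T(\vartheta_k) = \int_0^T [h_k(u(\tau(t)),\vartheta_k)-h_k(u(t),\vartheta_k)]\,du_k(t) - \tfrac12\int_0^T [h_k^2(u(\tau(t)),\vartheta_k)-h_k^2(u(t),\vartheta_k)]\,dt.
$$
Substituting the SDE~\eqref{logSDE} into $du_k$ further splits the first integral into a Lebesgue remainder involving the true drift parameters and an It\^o integral $\int_0^T [h_k(u(\tau(t)),\vartheta_k)-h_k(u(t),\vartheta_k)]\,\sigma_k\,dB_k(t)$.

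Next I would establish the key modulus-of-continuity estimate
$$
E\bigl[h_k(u(t),\vartheta_k)-h_k(u(\tau(t)),\vartheta_k)\bigr]^2 \leq C\,\Delta_{\max},
$$
uniformly in $t\in[0,T]$, with $C$ depending polynomially on $\|\vartheta_k\|$. This follows from the mean-value estimate $|e^{u_l(t)}-e^{u_l(\tau(t))}|\leq (e^{u_l(t)}\vee e^{u_l(\tau(t))})\,|u_l(t)-u_l(\tau(t))|$, Cauchy--Schwarz, the Burkholder--Davis--Gundy inequality applied to~\eqref{logSDE}, and the fourth-moment bounds of Proposition~\ref{pro2.2} for $x_l(t)$. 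Combined with It\^o's isometry for the stochastic remainder and the trivial bound $E|\int_0^T g(t)\,dt|\leq T\,\sup_t(E g(t)^2)^{1/2}$ for the Lebesgue remainder, this yields $\ell_{n,T}(\vartheta_k)-\ell_T(\vartheta_k)=O_{P_{\tilde\theta}}(\Delta_{\max}^{1/2})$ pointwise in $\vartheta_k$. The same argument applies verbatim to the $r$th derivatives: linearity of $h_k$ in $\vartheta_k$ implies $\partial^2 h_k/\partial\vartheta_k^2\equiv 0$, so for $r=2$ the integrand does not depend on $\vartheta_k$ at all, and for $r=1$ the decomposition is identical with $\partial h_k/\partial\vartheta_k=(1,e^{u_1(t)},\ldots,e^{u_N(t)})^{\T}$ (independent of $\vartheta_k$) in place of $h_k$.

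For uniformity over $\overline{\mathcal{K}}$, I exploit that $(\partial/\partial\vartheta)^r\{\ell_{n,T}(\vartheta_k)-\ell_T(\vartheta_k)\}$ is a random polynomial in $\vartheta_k$ of degree at most $2-r$: each monomial coefficient is itself a Riemann- or It\^o-approximation remainder of exactly the form above, hence is $O_{P_{\tilde\theta}}(\Delta_{\max}^{1/2})$, and supremizing over the compact set $\overline{\mathcal{K}}$ introduces only constants depending on $\mathcal{K}$. The main obstacle is the non-Lipschitz exponential nonlinearity in $h_k$, which precludes the $C^1$-based uniform bounds used in the standard Lipschitz diffusion estimation literature; closing the $L^2$ modulus-of-continuity estimate then demands the fourth-order moment bounds of Proposition~\ref{pro2.2}, applied via Cauchy--Schwarz to the product $(e^{u_l(t)}\vee e^{u_l(\tau(t))})\,|u_l(t)-u_l(\tau(t))|$, to ensure the polynomial constant $C$ in $\|\vartheta_k\|$ is finite.
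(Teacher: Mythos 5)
Your proof follows essentially the same route as the paper's: the same three-term decomposition of $\ell_{n,T}(\vartheta)-\ell_{T}(\vartheta)$ into a drift Riemann remainder, an It\^{o}-integral remainder, and a squared-drift remainder, the same key increment estimates $E\{x_{l}(t_{i})-x_{l}(t)\}^{2}\leq C\Delta_{i,t}$ and $E|\{x_{l}(t_{i})-x_{l}(t)\}x_{m}(t)|\leq C\Delta^{1/2}_{\max}$ obtained from It\^{o}'s isometry and the moment bounds of Proposition~\ref{pro2.2}, and the same use of compactness of $\mathcal{K}$ (via linearity of the drift in $\vartheta_{k}$) to pull the parameter-dependent coefficients out of the supremum, with the $r=1,2$ cases handled by the same observation that the differentiated integrands are parameter-free. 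One minor point in your favour: you apply the isometry to the stochastic remainder as a single integral over $[0,T]$, which yields $(CT\Delta_{\max})^{1/2}$ directly, whereas the paper's displayed bound for its term $\mathrm{I}_{2}$ sums the $L^{1}$ norms interval by interval and as literally written only gives $O(\sum_{i}\Delta_{i,t})$; your version is the one that actually delivers the claimed rate.
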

}

\begin{proof}
	Here we only consider the proof of the case $r=0$, because proofs of the cases $r=1$ and $r=2$ are similar to the case $r=0$.
	In order to show~\eqref{lemma7.1}, we only need to show that
	\begin{eqnarray}
		\lim_{\kappa\rightarrow+\infty}P\left\{\sup_{\vartheta\in\overline{\mathcal{K}}}\big|\ell_{n,T}(\vartheta)-\ell_{T}(\vartheta)\big|>\kappa \Delta^{1/2}_{\max}\right\}
		\leq \lim_{\kappa\rightarrow+\infty} \kappa^{-1}\Delta^{-1/2}_{\max} E\left\{\sup_{\vartheta\in\overline{\mathcal{K}}}\big|\ell_{n,T}(\vartheta)-\ell_{T}(\vartheta)\big|\right\}=0.\nonumber
	\end{eqnarray}
	We start by setting
	\begin{eqnarray}
		\ell_{n,T}(\vartheta)-\ell_{T}(\vartheta)
		&=&\sum^{N}_{k=1}\sigma^{-2}_{k}\sum^{n-1}_{i=0}\int^{t_{i+1}}_{t_{i}}
		\left(\sum^{N}_{l=1}a_{kl}\{x_{l}(t_{i})-x_{l}(t)\}\right)
		\left\{\tilde{R}_{k}+\sum^{N}_{m=1}\tilde{a}_{km}x_{m}(t)\right\}dt\nonumber\\
		&&+\sum^{N}_{k=1}\sigma^{-1}_{k}\sum^{n-1}_{i=0}\int^{t_{i+1}}_{t_{i}}\sum^{N}_{l=1}a_{kl}\{x_{l}(t_{i})
		-x_{l}(t)\}dB_{k}(t)\nonumber\\
		&&-\sum^{N}_{k=1}2^{-1}\sigma^{-2}_{k}\sum^{n-1}_{i=0}\int^{t_{i+1}}_{t_{i}}\big\{R_{k}+\sum^{N}_{l}a_{kl}x_{l}(t_{i})\big\}^{2}-\big\{R_{k}+\sum^{N}_{l}a_{kl}x_{l}(t)\big\}^{2}dt\nonumber\\
		&:= & \mathrm{I}_{1}+\mathrm{I}_{2}+\mathrm{I}_{3},\nonumber
	\end{eqnarray}
	where $\mathrm{I}_{1},\mathrm{I}_{2},\mathrm{I}_{3}$ are the three
	terms of the above equation.  It is therefore sufficient to prove that
	for some constant $C$, we have $E\left(\sup_{\vartheta\in
		\overline{\mathcal{K}}}|\mathrm{I}_{i}|\right)<C\Delta_{\max}^{-1/2}$,
	for $i=1,2,3$.
	
	\begin{eqnarray}
		E\sup_{\vartheta\in\bar{\mathcal{K}}}\mathrm{I}_{1} &\leq&
		\sum^{N}_{k=1}\sigma^{-2}_{k}\sum^{N}_{l=1}\left(\sup_{\vartheta\in\bar{\mathcal{K}}}|a_{kl}\tilde{R}_{k}|\right)\sum^{n-1}_{i=0}\int^{t_{i+1}}_{t_{i}}
		|\{x_{l}(t_{i})-x_{l}(t)\}|dt\nonumber\\
		&&+\sum^{N}_{k=1}\sum^{N}_{l,m=1}\sigma^{-2}_{k}\left(\sup_{\vartheta\in\bar{\mathcal{K}}}|a_{kl}\tilde{a}_{km}|\right)\sum^{n-1}_{i=0}\int^{t_{i+1}}_{t_{i}}
		|\{x_{l}(t_{i})-x_{l}(t)\}x_{m}(t)|dt,\nonumber\\
		E\sup_{\vartheta\in\bar{\mathcal{K}}}\mathrm{I}_{2} &\leq&
		\sum^{N}_{k=1}\sum^{N}_{l=1}\sigma^{-1}_{k}\left(\sup_{\vartheta\in\bar{\mathcal{K}}}|a_{kl}|\right)\sum^{n-1}_{i=0}E\left|\int^{t_{i+1}}_{t_{i}}\{x_{l}(t_{i})
		-x_{l}(t)\}dB_{k}(t)\right|\nonumber\\
		&\leq&
		\sum^{N}_{k=1}\sum^{N}_{l=1}\sigma^{-1}_{k}\left(\sup_{\vartheta\in\bar{\mathcal{K}}}|a_{kl}|\right)\sum^{n-1}_{i=0}
		\left(E\left[\int^{t_{i+1}}_{t_{i}}\{x_{l}(t_{i})
		-x_{l}(t)\}dB_{k}(t)\right]^{2}\right)^{1/2}\nonumber\\
		&=&
		\sum^{N}_{k=1}\sum^{N}_{l=1}\sigma^{-1}_{k}\left(\sup_{\vartheta\in\bar{\mathcal{K}}}|a_{kl}|\right)\sum^{n-1}_{i=0}
		\left(E\left[\int^{t_{i+1}}_{t_{i}}\{x_{l}(t_{i})
		-x_{l}(t)\}^{2}d(t)\right]\right)^{1/2},\nonumber\\
		E\sup_{\vartheta\in\bar{\mathcal{K}}}\mathrm{I}_{3} &\leq&
		\sum^{N}_{k=1}\sum^{N}_{l=1}\sigma^{-2}_{k}\left(\sup_{\vartheta\in\bar{\mathcal{K}}}
		|R_{k}a_{kl}|\right)\sum^{n-1}_{i=0}\int^{t_{i+1}}_{t_{i}}
		\left|\big\{x_{l}(t_{i})-x_{l}(t)\big\}\right|dt\nonumber\\
		&&+ \sum^{N}_{k=1} \sum^{N}_{l,m=1}\sigma^{-2}_{k}\left(\sup_{\vartheta\in\bar{\mathcal{K}}}
		|R_{k}a_{kl}a_{km}|\right)\sum^{n-1}_{i=0}\int^{t_{i+1}}_{t_{i}}
		|\{x_{l}(t_{i})-x_{l}(t)\}\{x_{m}(t_{i})-x_{m}(t)\}|dt.\nonumber
	\end{eqnarray}
	Since $\mathcal{K}$ is a relatively compact set, the terms,
	$\sup_{\vartheta \in \bar{\mathcal{K}}}|\tilde{R}_{k}a_{kl}|$,
	$\sup_{\vartheta \in \bar{\mathcal{K}}}|a_{kl}\tilde{a}_{km}|$,
	$\sup_{\vartheta \in \bar{\mathcal{K}}}|a_{kl}|$,
	$\sup_{\vartheta \in \bar{\mathcal{K}}}|R_{k}a_{kl}|$ and
	$\sup_{\vartheta \in \bar{\mathcal{K}}}|R_{k}a_{kl}a_{km}|$, are all bounded.
	By It\^{o}'s isometry and Jensen's inequality, we have
	\begin{eqnarray}
		E\left|\int^{t_{i+1}}_{t_{i}}\{x_{l}(t_{i})-x_{l}(t)\}x_{m}(t)dt\right|
		&\leq& \int^{t_{i+1}}_{t_{i}}E|\{x_{l}(t_{i})-x_{l}(t)\}x_{m}(t)|dt,\nonumber
	\end{eqnarray}
	and
	\begin{eqnarray}
		&&\left[E\left|\int^{t_{i+1}}_{t_{i}}\!\!\!\big\{x_{l}(t_{i})-x_{l}(t)\}dB_{k}(t)\right|\right]^{2}\nonumber\\
		&\leq& E\left[\int^{t_{i+1}}_{t_{i}}\{x_{l}(t_{i})-x_{l}(t)\}dB_{k}(t)\right]^{2}\nonumber\\
		&=&
		\int^{t_{i+1}}_{t_{i}}E\left(\{x_{l}(t_{i})-x_{l}(t)\}^{2}\right)dt\nonumber\\
		&=&
		\int^{t_{i+1}}_{t_{i}}E[\{x_{l}(t_{i})-x_{l}(t)\}x_{l}(t_{i})]dt-   \int^{t_{i+1}}_{t_{i}}E[\{x_{l}(t_{i})-x_{l}(t)\}x_{l}(t)]dt.\nonumber
	\end{eqnarray}
	It is enough to show that there exist three constants $C$, $C_{1}$ and $C_{2}$ such that
	\begin{eqnarray}\label{3.3.3.11}
		E|\{x_{l}(t_{i})-x_{l}(t)\}x_{m}(t)| &\leq & C_{1}\Delta_{\max}+C_{2}\Delta^{1/2}_{\max},
	\end{eqnarray}
	and
	\begin{eqnarray}\label{3.3.3.22}
		E\left(\{x_{l}(t_{i})-x_{l}(t)\}^{2}\right) &\leq & C\Delta_{i,t}.
	\end{eqnarray}
	We prove~\eqref{3.3.3.11} using Proposition~\ref{pro2.2}
	and It\^{o}'s isometry:
	\begin{eqnarray}
		&&E|\{x_{l}(t)-x_{l}(t_{i})\}x_{m}(t)|\nonumber\\
		&=& E\left|\int^{t}_{t_{i}}\left\{r_{k}+\sum^{N}_{l=1}a_{kl}x_{l}(s)\right\}x_{k}(s)x_{m}(t)ds+\int^{t}_{t_{i}}\sigma_{k}x_{k}(s)x_{m}(t)dB_{k}(s)\right|\nonumber\\
		&\leq&  E\left|\int^{t}_{t_{i}}\left\{r_{k}+\sum^{N}_{l=1}a_{kl}x_{l}(s)\right\}x_{k}(s)x_{m}(t)ds\right|
		+E\left|\int^{t}_{t_{i}}\sigma_{k}x_{k}(s)x_{m}(t)dB_{k}(s)\right|\nonumber\\
		&\leq &
		E\bigg|
		\int^{t}_{t_{i}}\left\{r_{k}+\sum^{N}_{l=1}a_{kl}x_{l}(s)\right\}x_{k}(s)x_{m}(t)ds\bigg|
		+\sigma_{k}\left[E\left\{\int^{t}_{t_{i}}x^{2}_{m}(t)x^{2}_{k}(s)ds\right\}\right]^{1/2}\nonumber\\
		&\leq &  C_{1}\Delta_{\max}+C_{2}\Delta^{1/2}_{\max}.
	\end{eqnarray}
	For~\eqref{3.3.3.22}, we have:
	\begin{eqnarray}
		E\left(\{x_{l}(t_{i})-x_{l}(t)\}^{2}\right)
		&=&E\left[\int^{t}_{t_{i}}\left\{r_{l}+\sum^{N}_{p=1}a_{lp}x_{p}(s)\right\}x_{l}(s)ds
		+\int^{t}_{t_{i}}\sigma_{l}dB_{l}(s)\right]^{2}\nonumber\\
		&\leq&2\Delta_{i,t} E\left[\int^{t}_{t_{i}}\left\{r_{l}+\sum^{N}_{p=1}a_{lp}x_{p}(s)\right\}^{2}x^{2}_{l}(s)ds\right]
		+2\int^{t}_{t_{i}}\sigma^{2}_{l}E\{x^{2}_{l}(s)\}ds\nonumber\\
		&\leq &C\Delta_{i,t}.\nonumber
	\end{eqnarray}
	Because $E\left(\sup_{\vartheta\in{\mathcal{K}}}|\mathrm{I}_{1}|\right)$,
	$E\left(\sup_{\vartheta\in{\mathcal{K}}}|\mathrm{I}_{2}|\right)$ and $E\left(\sup_{\vartheta\in{\mathcal{K}}}|\mathrm{I}_{3}|\right)$ are bounded by
	$C\Delta_{\max}^{-1/2}$ for some $C$, we have
	\begin{eqnarray}
		&&P_{\tilde{\theta}}\left(\Delta^{-1/2}_{\max}\sup_{\vartheta\in\mathcal{K}}|\ell_{n,T}-\ell_{T}(\vartheta)|\geq \kappa \right)\leq C\kappa^{-1}\nonumber
	\end{eqnarray}
	and
	\begin{eqnarray}
		\lim_{\kappa\rightarrow +\infty}\limsup_{n}P_{\tilde{\theta}}\bigg(\Delta^{-1/2}_{\max}\sup_{\vartheta\in\Theta}\left|T^{-1}\ell_{n,T}(\vartheta)-T^{-1}\ell_{T}(\vartheta)\right|\geq \kappa\bigg) &=& 0.\nonumber
	\end{eqnarray}
\end{proof}

\setcounter{theorem}{0}

\begin{theorem}
	Under Assumptions~\ref{assumptionA1}, \ref{assumptionA2}, and
	\ref{assumptionA5}, we have that, conditional on the maximum likelihood
	estimators lying in some compact parameter space $\mathcal{K}$,
	
	(i) $\|\hat{\vartheta}_{n,T}-\hat{\vartheta}_{T}\|=O_{p}\left(\Delta^{1/2}_{\max}\right)$.
	
	(ii) For any $k\in\{1,...,N\}$,
	$
	(n/2)^{1/2}\sigma^{-2}_{k}(\widehat{\sigma}^{2}_{k,n,T}-\sigma^{2}_{k}) \rightarrow N(0,1).\nonumber$
\end{theorem}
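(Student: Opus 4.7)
For part (i), the plan is a standard Taylor-expansion argument bridged by Lemma~\ref{lemma2}. Conditional on both $\hat\vartheta_{n,T}$ and $\hat\vartheta_T$ lying in the compact set $\mathcal{K}$, they satisfy the score equations $\partial_\vartheta\ell_{n,T}(\hat\vartheta_{n,T})=0$ and $\partial_\vartheta\ell_T(\hat\vartheta_T)=0$. Expanding the discrete score about $\hat\vartheta_T$ gives
\[
\hat\vartheta_{n,T}-\hat\vartheta_T \;=\; -\bigl[\partial^2_\vartheta \ell_{n,T}(\tilde\vartheta)\bigr]^{-1}\bigl\{\partial_\vartheta \ell_{n,T}(\hat\vartheta_T)-\partial_\vartheta \ell_T(\hat\vartheta_T)\bigr\}
\]
for some $\tilde\vartheta$ between $\hat\vartheta_{n,T}$ and $\hat\vartheta_T$. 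Lemma~\ref{lemma2} with $r=1$ controls the numerator by $O_p(\Delta_{\max}^{1/2})$ uniformly in $\overline{\mathcal{K}}$. Lemma~\ref{lemma2} with $r=2$, combined with the (strict) positive definiteness of the continuous-time observed information on $\mathcal{K}$, shows $\partial^2_\vartheta \ell_{n,T}(\tilde\vartheta)=\partial^2_\vartheta\ell_T(\tilde\vartheta)+O_p(\Delta_{\max}^{1/2})$ is invertible with bounded inverse. The product then yields the claimed $O_p(\Delta_{\max}^{1/2})$ rate.

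For part (ii), the plan is to decompose the squared residual into a dominant i.i.d.\ Gaussian chunk plus negligible remainders. Substituting the true SDE for $u_k$ into $\Delta_i u_k$ gives
\[
\Delta_i u_k - \Bigl[\hat R_k + \sum_l \hat a_{kl} e^{u_l(t_i)}\Bigr]\Delta_{i,t} \;=\; \sigma_k\,\Delta_i B_k + \xi_i,
\]
where $\xi_i := (R_k^0-\hat R_k)\Delta_{i,t}+\sum_l (a_{kl}^0-\hat a_{kl}) e^{u_l(t_i)}\Delta_{i,t}+\sum_l a_{kl}^0\!\int_{t_i}^{t_{i+1}}\!\!\{e^{u_l(s)}-e^{u_l(t_i)}\}ds$ collects the drift-estimation error and the Euler-discretization error. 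Squaring, dividing by $\Delta_{i,t}$, and averaging,
\[
\widehat{\sigma}^2_{k,n,T} \;=\; \frac{\sigma_k^2}{n-1}\sum_{i=1}^{n-1} Z_i^2 \;+\; \frac{2\sigma_k}{n-1}\sum_{i=1}^{n-1}\frac{\Delta_i B_k\,\xi_i}{\Delta_{i,t}} \;+\; \frac{1}{n-1}\sum_{i=1}^{n-1}\frac{\xi_i^2}{\Delta_{i,t}},
\]
with $Z_i := \Delta_i B_k/\Delta_{i,t}^{1/2}$ i.i.d.\ standard normal. A classical CLT applied to $\sum Z_i^2$, using $\mathrm{Var}(Z_i^2)=2$, yields $(n/2)^{1/2}\sigma_k^{-2}\{\sigma_k^2(n-1)^{-1}\sum_i Z_i^2-\sigma_k^2\}\rightarrow N(0,1)$. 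For the two remainder sums, the moment bounds in Proposition~\ref{pro2.2} control $e^{u_l(s)}$; combined with the $\Delta_{\max}^{1/2}$ bound from part (i) and the fact that the pure Euler contribution to $\xi_i$ is of $L^2$-order $\Delta_{i,t}^{3/2}$, so that $\xi_i = O_p(\Delta_{i,t}\Delta_{\max}^{1/2})$, a Cauchy--Schwarz argument and routine moment calculations (noting $\sum_i \Delta_{i,t}=T$ is fixed while $n\to\infty$) show both remainders are $o_p(n^{-1/2})$ under Assumption~\ref{assumptionA5}.

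The main obstacle is the statistical dependence between $\hat\vartheta_{n,T}$ and the Brownian increments $\{\Delta_i B_k\}$: the cross term $\sum_i \Delta_i B_k\,\xi_i/\Delta_{i,t}$ is not a martingale sum in the natural filtration because $\xi_i$ contains $\hat a_{kl}$, a functional of the entire path. This is precisely why Lemma~\ref{lemma2} is proved uniformly over $\overline{\mathcal{K}}$: one first bounds $\sup_{\vartheta\in\overline{\mathcal{K}}}|\sum_i\Delta_i B_k\,\xi_i(\vartheta)/\Delta_{i,t}|$ and only then substitutes $\hat\vartheta_{n,T}$. A secondary subtlety, foreshadowed by Remark~\ref{remarktheorem1}, is that because $T$ is fixed there is no ergodicity, so part~(i) bounds only the gap to $\hat\vartheta_T$ and not to $\vartheta^0$; removing this residual bias requires the $T\to\infty$ regime of Assumption~\ref{assumptionA6} and is treated in Theorem~\ref{NoFixT}.
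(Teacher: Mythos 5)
Your proposal follows essentially the same route as the paper: part~(i) is the exact Taylor expansion of the (quadratic) score bridged by Lemma~\ref{lemma2} with $r=1$ for the score gap and an $O_p(1)$ bound on the inverse Hessian, and part~(ii) is the same decomposition of $\widehat{\sigma}^{2}_{k,n,T}$ into the i.i.d.\ chi-squared main term (giving the $N(0,1)$ limit with variance $2$) plus a squared-remainder term and a cross term whose expectations are bounded via Proposition~\ref{pro2.2} and It\^{o}'s isometry. One caveat: your claim $\xi_i=O_p(\Delta_{i,t}\Delta_{\max}^{1/2})$ overstates the rate, since for fixed $T$ the drift-estimation component of $\xi_i$ is only $O_p(\Delta_{i,t})\cdot\|\hat{\vartheta}_{n,T}-\vartheta^{0}\|$ and, as you yourself note via Remark~\ref{remarktheorem1}, $\|\hat{\vartheta}_{n,T}-\vartheta^{0}\|$ does not vanish for fixed $T$ --- but the weaker bound $\xi_i=O_p(\Delta_{i,t})$ still makes both remainders $o_p(n^{-1/2})$ because $\sum_i\Delta_{i,t}=T$ is fixed, so the conclusion stands.
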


\begin{proof}
	We Taylor expand $\ell'_{T}(\hat{\vartheta}_{n,T})$ about
	$\hat{\vartheta}_{T}$. Because $\ell_T(\vartheta)$ is quadratic in
	$\vartheta$, the expansion is exact.
	\begin{eqnarray}
		\ell'_{T}(\hat{\vartheta}_{n,T}) &=& \ell'_{T}(\hat{\vartheta}_{T})+ \ell''_{T}(\hat{\vartheta}_{T})(\hat{\vartheta}_{n,T}-\hat{\vartheta}_{T}),\nonumber
	\end{eqnarray}
	since $\ell'_{T}(\hat{\vartheta}_{T})=0,\ell'_{n,T}(\hat{\vartheta}_{n,T})=0$,
	\begin{eqnarray}
		\|\hat{\vartheta}_{n,T}-\hat{\vartheta}_{T} \|&\leq&\|\{\ell''_{T}(\hat{\vartheta}_{T})\}^{-1}\|\cdot \|\ell'_{T}(\hat{\vartheta}_{n,T})-\ell'_{T}(\hat{\vartheta}_{T})\|\nonumber\\
		&=&\|\{\ell''_{T}(\hat{\vartheta}_{T})\}^{-1}\| \cdot
		\|\ell'_{T}(\hat{\vartheta}_{n,T})-\ell'_{n,T}(\hat{\vartheta}_{n,T})\|,\nonumber
	\end{eqnarray}
	where $\|\{\ell''_{T}(\hat{\vartheta}_{T})\}^{-1}\|=O_{p}(1)$ due to
	\begin{eqnarray}
		\partial^{2}\ell_{T}(\hat{\vartheta}_{T}))/(\partial a_{kl} \partial a_{km}) &=& -\int^{T}_{0}\exp\{u_{l}(t)+u_{m}(t)\}dt,\nonumber
	\end{eqnarray}
	and by Proposition~\ref{pro2.2},
	for any given $\epsilon$, there exists a constant $\kappa_{1}$ such that
	\begin{eqnarray}
		P( |\partial^{2}\ell_{T}(\hat{\vartheta}_{T})/(\partial a_{kl} \partial a_{km})|>\kappa_{1})  \leq \kappa^{-1}_{1}\int^{T}_{0}E\exp\{u_{l}(t)+u_{m}(t)\}dt\leq C\kappa^{-1}_{1}<\epsilon,\nonumber
	\end{eqnarray}
	further by Lemma~\ref{lemma2}, we have
	\begin{eqnarray}
		\|\hat{\vartheta}_{n,T}-\hat{\vartheta}_{T} \|&=&O_{p}(\Delta^{1/2}_{\max}).\nonumber
	\end{eqnarray}
	whenever $\hat{\vartheta}_{n,T}\in{\mathcal K}$. Next, we consider the asymptotic property of $\widehat{\sigma}^{2}_{k,n,T}$.
	\begin{eqnarray}
		&& n^{1/2}(\widehat{\sigma}^{2}_{k,n,T}-\sigma^{2}_{k})\nonumber\\
		&=& n^{1/2}\left[\widehat{\sigma}^{2}_{k,n,T}-n^{-1}\sigma^{2}_{k}\sum^{n-1}_{i=0}\{\Delta_{i}B_{k}(t)\}^{2}\Delta^{-1}_{i,t}\right]
		+2^{1/2}\sigma^{2}_{k}(2n)^{-1/2}\sum^{n-1}_{i=0}[\{\Delta_{i}B_{k}(t)\}^{2}-\Delta_{i,t}]\Delta^{-1}_{i,t}.\nonumber
	\end{eqnarray}
	Since $(2n)^{-1/2}\sum^{n-1}_{i=0}[\{\Delta_{i}B_{k}(t)\}^{2}-\Delta_{i,t}]\Delta^{-1}_{i,t}\rightarrow N(0,1)$ in distribution, we only need to show that the first term converges to zero.
	\begin{eqnarray}
		&&n^{1/2}\left[\widehat{\sigma}^{2}_{k,n,T}-n^{-1}\sigma^{2}_{k}\sum^{n-1}_{i=0}\{\Delta_{i}B_{k}(t)\}^{2}\Delta^{-1}_{i,t}\right]\nonumber\\
		&=&n^{-1/2}\left\{\sum^{n-1}_{i=0}\left(\Delta_{i}u_{k}(t)-\left[\hat{\dot{r}}_{k}+\sum^{N}_{l=1}\hat{a}_{kl}
		\exp\{u_{l}(t_{i})\}\right]\Delta_{i,t}\right)^{2}\Delta^{-1}_{i,t}
		-\sigma^{2}_{k}\sum^{n-1}_{i=0}\{\Delta_{i}B_{k}(t)\}^{2}\Delta^{-1}_{i,t}\right\}\nonumber\\
		&:=& n^{-1/2}(I'_{1}+I'_{2}),\nonumber
	\end{eqnarray}
	where
	\begin{eqnarray}
		I'_{1} &=&  \sum^{n-1}_{i=0}\Delta^{-1}_{i,t}\left(\int^{t_{i+1}}_{t_{i}}
		\sum^{N}_{l=1}\hat{a}_{kl}[\exp\{u_{l}(t)\}-\exp\{u_{l}(t_{i})\}]dt\right)^{2},\nonumber\\
		I'_{2} &=& -2\sigma_{k}\sum^{n-1}_{i=0}\Delta^{-1}_{i,t}\int^{t_{i+1}}_{t_{i}}dB_{k}(t)\cdot\int^{t_{i+1}}_{t_{i}}\sum^{N}_{l=1}\hat{a}_{kl}[\exp\{u_{l}(t)\}-\exp\{u_{l}(t_{i})\}]dt.\nonumber
	\end{eqnarray}
	Using the proof of Lemma \ref{lemma2}, It\^{o}'s isometry and the
	fact that $x_{k}(t)=e^{u_{k}(t)}$, it is easy to show
	\begin{eqnarray}
		E(I'_{1})&\leq& E\left\{\sum^{n-1}_{i=0}\int^{t_{i+1}}_{t_{i}}\left[\sum^{N}_{l=1}\hat{a}_{kl}\{x_{k}(t)-x_{k}(t_{i})\}\right]^{2}dt\right\}
		\leq  2 CT\Delta_{\max},\nonumber\\
		E(I'_{2})
		&\leq & 2 E \left\{\sum^{n-1}_{i=0}\left(\sigma^{2}_{k}\int^{t_{i+1}}_{t_{i}}dt\right)\right\}+
		\sum^{n-1}_{i=1}\Delta^{-2}_{i,t}E\left[\int^{t_{i+1}}_{t_{i}}\sum^{N}_{l=1}\hat{a}_{kl}\{x_{l}(t)-x_{l}(t_{i})\}dt\right]^{2}\nonumber\\
		&\leq& 2T\sigma^{2}_{k}+\sum^{N}_{l=1} \sum^{n-1}_{i=1}C\Delta^{-1}_{i,t}\left[\int^{t_{i+1}}_{t_{i}}E\{x_{l}(t)-x_{l}(t_{i})\}^{2}dt\right]\nonumber\\
		&\leq& 2T\sigma^{2}_{k}+ CNT,\nonumber
	\end{eqnarray}
	where the constant $C$ varies from line to line. So we have
	\begin{eqnarray}
		&&\sqrt{n}\left[\widehat{\sigma^{2}}_{k,
			n,T}-n^{-1}\sigma_{k}^2\sum^{n-1}_{i=0}\{\Delta_{i}B_{k}(t)\}^{2}\Delta^{-1}_{i,t}\right]
		\rightarrow_{p}0.\nonumber
	\end{eqnarray}
	This completes the proof of Theorem~\ref{FixedT}.
\end{proof}

Before the proof of Theorem~\ref{NoFixT},
we give the auxiliary lemma:

{\let\templemma\thelemma
	\renewcommand{\thelemma}{S2}
	
\begin{lemma}\label{lemma5}
Suppose Assumptions~\ref{assumptionA1}--\ref{assumptionA4} hold. For
fixed N, any $\tilde{\theta}=(\tilde{\vartheta},\sigma)\in {\mathcal
	K}\subseteq \Psi$, where $\mathcal{K}$ is a compact subset of
$\Psi=\Theta \times \Phi$,
any initial value $x(0)$, and $r=0,1,2$,
\begin{equation}
\sup_{\vartheta\in{\mathcal{K}}}|T^{-1}(\partial/\partial \vartheta)^{r}\ell_{n,T}(\vartheta)-T^{-1}(\partial/\partial \vartheta)^{r}\ell_{T}(\vartheta)|=O_{P_{\tilde{\theta}}}(\Delta^{1/2}_{\max}),T\rightarrow+\infty,n\rightarrow+\infty.
\end{equation}
\end{lemma}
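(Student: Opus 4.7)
The plan is to mimic the proof of Lemma~\ref{lemma2} but carefully track the $T$-dependence of each bound so that it grows at most linearly in $T$; then dividing by $T$ and applying Markov's inequality yields the desired $O_{P_{\tilde{\theta}}}(\Delta_{\max}^{1/2})$ rate, now uniformly as $T\to+\infty$.

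For $r=0$, I would use the same three-term decomposition $\ell_{n,T}(\vartheta)-\ell_T(\vartheta)=\mathrm{I}_1+\mathrm{I}_2+\mathrm{I}_3$ as in the proof of Lemma~\ref{lemma2}. Since $\mathcal{K}$ is compact, the $\vartheta$-dependent coefficients $\sup_\vartheta|a_{kl}\tilde R_k|$, $\sup_\vartheta|a_{kl}\tilde a_{km}|$, etc., are all bounded. For the two Riemann-integral terms $\mathrm{I}_1$ and $\mathrm{I}_3$, I combine the per-interval estimate $E(x_l(t)-x_l(t_i))^2\le C\Delta_{i,t}$ (already established in the proof of Lemma~\ref{lemma2}) with Cauchy--Schwarz and the uniform moment bound $\sup_{t\ge 0}E[x_m^\theta(t)]\le C_1$ from Proposition~\ref{pro2.2} to get
\[
E\sup_{\vartheta\in\mathcal{K}}|\mathrm{I}_j|\;\le\;C\sum_{i=0}^{n-1}\Delta_{i,t}^{3/2}\;\le\;C\Delta_{\max}^{1/2}\sum_i\Delta_{i,t}\;=\;CT\Delta_{\max}^{1/2},\qquad j=1,3.
\]
For the stochastic-integral term $\mathrm{I}_2$, I exploit orthogonality of the martingale increments $\int_{t_i}^{t_{i+1}}(x_l(t_i)-x_l(t))\,dB_k(t)$, so that their variances add; this yields a variance bound $\sum_iC\Delta_{i,t}^2\le CT\Delta_{\max}$ and hence $E\sup_\vartheta|\mathrm{I}_2|\le C\sqrt{T\Delta_{\max}}\le CT\Delta_{\max}^{1/2}$ for $T\ge 1$. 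Dividing by $T$ and applying Markov's inequality then gives, uniformly in $T$,
\[
P_{\tilde{\theta}}\Bigl(T^{-1}\Delta_{\max}^{-1/2}\sup_{\vartheta\in\mathcal{K}}|\ell_{n,T}(\vartheta)-\ell_T(\vartheta)|\ge\kappa\Bigr)\;\le\;C/\kappa,
\]
which is the $r=0$ conclusion.

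For $r=1,2$, the derivatives $(\partial/\partial\vartheta)^r\ell_{n,T}(\vartheta)$ and $(\partial/\partial\vartheta)^r\ell_T(\vartheta)$ have the same structural form as the $r=0$ case---polynomials of degree at most $2$ in $\exp\{u_l(t)\}=x_l(t)$ with smooth $\vartheta$-dependence on $\mathcal{K}$---so the same three-term decomposition and the same per-interval moment estimates apply term by term. The higher powers of $x_l$ arising in the second derivative are still controlled, since Proposition~\ref{pro2.2} supplies uniform bounds for $\theta\le 4$.

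The main obstacle is precisely the uniform-in-$T$ moment control. In Lemma~\ref{lemma2}, $T$ was fixed, so any $T$-dependence could be absorbed into the constants and the bound on $\sup_{t\in[0,T]}E[x_k^\theta(t)]$ required nothing more than continuity. Here, to keep the right-hand side linear in $T$ rather than growing faster, I must invoke the genuinely global bound $\limsup_{t\to+\infty}\sum_kE[x_k^\theta(t)]\le C_1$ from Proposition~\ref{pro2.2}; without it the Markov step would not give a rate uniform in the diverging observation horizon.
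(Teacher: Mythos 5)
Your proposal is correct and shares the paper's overall skeleton: the same three-term decomposition of $\ell_{n,T}(\vartheta)-\ell_{T}(\vartheta)$ into two Riemann-integral terms and one stochastic-integral term, Markov's inequality after normalising by $T\Delta_{\max}^{1/2}$, and per-interval estimates $E\{x_{l}(t_{i})-x_{l}(t)\}^{2}\leq C\Delta_{i,t}$ and $E|\{x_{l}(t_{i})-x_{l}(t)\}x_{m}(t)|\leq C\Delta_{\max}^{1/2}$ drawn from Proposition~\ref{pro2.2} and It\^{o}'s isometry. The one genuine difference is your treatment of the martingale term, and it is an improvement rather than a mere variant. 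The paper applies the triangle inequality across the $n$ increments and bounds each $E\left|\int_{t_{i}}^{t_{i+1}}\{x_{l}(t_{i})-x_{l}(t)\}\,dB_{k}(t)\right|$ by $\bigl(C\Delta_{i,t}^{2}\bigr)^{1/2}=C^{1/2}\Delta_{i,t}$; summed over $i$ this gives only $O(T)$, which after division by $T\Delta_{\max}^{1/2}$ is $O(\Delta_{\max}^{-1/2})$ and so does not by itself deliver the stated rate. Your use of the orthogonality of the martingale increments, $E\bigl(\sum_{i}M_{i}\bigr)^{2}=\sum_{i}E M_{i}^{2}\leq CT\Delta_{\max}$ and hence $E\bigl|\sum_{i}M_{i}\bigr|\leq (CT\Delta_{\max})^{1/2}\leq CT\Delta_{\max}^{1/2}$ for $T\geq 1$, recovers the missing factor of $\Delta_{\max}^{1/2}$; this is legitimate because the $a_{kl}$ enter the stochastic integral linearly, so the supremum over the compact set $\mathcal{K}$ can be pulled outside before invoking orthogonality (the same repair is needed for $\mathrm{I}_{2}$ in Lemma~\ref{lemma2}). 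Two small points to tighten: Proposition~\ref{pro2.2} as stated only controls $\limsup_{t\rightarrow+\infty}$ of the moments, so for your Cauchy--Schwarz step you should cite the display $\exp(t)E\{V(x(t))\}\leq V(x_{0})+C_{1}\{\exp(t)-1\}$ from its proof, which gives the genuinely uniform bound $\sup_{t\geq 0}E\{x_{m}^{\theta}(t)\}<\infty$; and for $r=2$ you should note explicitly that the differences $x_{l}(t_{i})x_{m}(t_{i})-x_{l}(t)x_{m}(t)$ are handled by the same second- and fourth-moment bounds, which is why $\theta\leq 4$ in Proposition~\ref{pro2.2} suffices.
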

}

\begin{proof}
	As in the proof of Lemma~\ref{lemma2}, we only consider the case when
	$r=0$, with other cases being similar. For any $M>0$,
	\begin{eqnarray}
		P_{\tilde{\theta}}\left\{T^{-1}\Delta^{-1/2}_{\max}\sup_{\vartheta \in\overline{\mathcal{K}}}|\ell_{n,T}(\vartheta)-\ell_{T}(\vartheta)|\geq M \right\}\leq M^{-1}E_{\tilde{\theta}}\left\{T^{-1}\Delta^{-1/2}_{\max}\sup_{\vartheta\in\overline{\mathcal{K}}}
		|\ell_{n,T}(\vartheta)-\ell_{T}(\vartheta)|\right\}.\nonumber
	\end{eqnarray}
	In order to prove the above inequality, we only need to show that for
	some constant $C$, we have $E_{\tilde{\theta}}\left\{T^{-1}\Delta^{-1/2}_{\max}\sup_{\vartheta\in\overline{\mathcal{K}}}|\ell_{n,T}(\vartheta)-\ell_{T}(\vartheta)|\right\}\leq C$.
	
	Since
	\begin{eqnarray}
		&& T^{-1}\Delta^{-1/2}_{\max}\{\ell_{n,T}(\vartheta)-\ell_{T}(\vartheta)\}\nonumber\\
		&=&T^{-1}\Delta^{-1/2}_{\max}\sum^{N}_{k=1}\sigma^{-2}_{k}\sum^{n-1}_{i=0}\int^{t_{i+1}}_{t_{i}}
		\sum^{N}_{l=1}a_{kl}\{x_{l}(t_{i})-x_{l}(t)\}
		\left\{\tilde{\dot{r}}_{k}+\sum^{N}_{l=1}\tilde{a}_{kl}x_{l}(t)\right\}dt\nonumber\\
		&&+T^{-1}\Delta^{-1/2}_{\max}\sum^{N}_{k=1}\sigma^{-1}_{k}\sum^{n-1}_{i=0}
		\int^{t_{i+1}}_{t_{i}}\sum^{N}_{l=1}a_{kl}\{x_{l}(t_{i})
		-x_{l}(t)\}dB_{k}(t)\nonumber\\
		&&-(2T)^{-1}\Delta^{-1/2}_{\max}\sum^{N}_{k=1}\sigma^{-2}_{k}\sum^{n-1}_{i=0}
		\int^{t_{i+1}}_{t_{i}}\left\{\dot{r}_{k}+\sum^{N}_{l}a_{kl}x_{l}(t_{i})\right\}^{2}-\left\{\dot{r}_{k}+\sum^{N}_{l}a_{kl}x_{l}(t)\right\}^{2}dt\nonumber\\
		&:= & \mathrm{S}_{1}+\mathrm{S}_{2}+\mathrm{S}_{3},\nonumber
	\end{eqnarray}
	where $\mathrm{S}_{1},\mathrm{S}_{2},\mathrm{S}_{3}$ correspond to the three terms of the above
	equation respectively. It is easy to show that
	\begin{eqnarray}
		E\sup_{\vartheta\in {\mathcal{K}}}|\mathrm{S}_{1}|
		&\leq& E\sup_{\vartheta \in {\mathcal{K}}}\left|T^{-1}\Delta^{-1/2}_{\max}
		\sum^{N}_{k=1}\sigma^{-2}_{k}\sum^{n-1}_{i=0}\int^{t_{i+1}}_{t_{i}}
		\sum^{N}_{l=1}a_{kl}\{x_{l}(t_{i})-x_{l}(t)\}
		\left\{\tilde{R}_{k}+\sum^{N}_{l=1}\tilde{a}_{kl}x_{l}(t)\right\}dt\right|, \nonumber\\
		&\leq& T^{-1}\Delta^{-1/2}_{\max}
		\sum^{N}_{k=1}\sum^{N}_{l=1}\sigma^{-2}_{k}\sum^{n-1}_{i=0}\left(\sup_{\vartheta\in {\mathcal{K}}}|a_{kl}\tilde{R}_{k}|\right)\int^{t_{i+1}}_{t_{i}}
		E|\{x_{l}(t_{i})-x_{l}(t)\}|dt, \nonumber\\
		&&+T^{-1}\Delta^{-1/2}_{\max}
		\sum^{N}_{k=1}\sum^{N}_{l,m=1}\sigma^{-2}_{k}\sum^{n-1}_{i=0}\left(\sup_{\vartheta\in {\mathcal{K}}}|a_{kl}a_{km}\tilde{R}_{k}|\right)\int^{t_{i+1}}_{t_{i}}
		E|\{x_{l}(t_{i})-x_{l}(t)\}x_{l}|dt, \nonumber
	\end{eqnarray}
	\begin{eqnarray}
		E\sup_{\vartheta\in {\mathcal{K}}}|\mathrm{S}_{2}|&\leq &E\sup_{\vartheta \in{\mathcal{K}}}\left[T^{-1}\Delta^{-1/2}_{\max}\sum^{N}_{k=1}\sigma^{-1}_{k}\sum^{n-1}_{i=0}
		\bigg|\int^{t_{i+1}}_{t_{i}}\sum^{N}_{l=1}a_{kl}\{x_{l}(t_{i})-x_{l}(t)\}dB_{k}(t)\bigg|\right] \nonumber\\
		&\leq &T^{-1}\Delta^{-1/2}_{\max}\sum^{N}_{k=1}\sigma^{-1}_{k}\sum^{n-1}_{i=0}\sum^{N}_{l=1}
		\left(\sup_{\vartheta\in {\mathcal{K}}}|a_{kl}|\right)E\left|\int^{t_{i+1}}_{t_{i}}\{x_{l}(t_{i})-x_{l}(t)\}dB_{k}(t)\right| \nonumber\\
		&\leq&
		T^{-1}\Delta^{-1/2}_{\max}\sum^{N}_{k=1}\sigma^{-1}_{k}\sum^{n-1}_{i=0}\sum^{N}_{l=1}
		\left(\sup_{\vartheta\in {\mathcal{K}}}|a_{kl}|\right)\left\{E\left(\int^{t_{i+1}}_{t_{i}}\{x_{l}(t_{i})-x_{l}(t)\}^{2}dt\right) \right\}^{1/2}, \nonumber
	\end{eqnarray}
	\begin{eqnarray}
		E\sup_{\vartheta\in{\mathcal{K}}}|\mathrm{S}_{3}|&\leq & E\sup_{\vartheta\in {\mathcal{K}}}\left[(2T)^{-1}\Delta^{-1/2}_{\max}\sum^{N}_{k=1}\sigma^{-2}_{k}\sum^{n-1}_{i=0}\int^{t_{i+1}}_{t_{i}}
		\left|R_{k}+\sum^{N}_{l=1}a_{kl}x_{l}(t_{i})\right|\left|\sum^{N}_{l=1}a_{kl}\{x_{l}(t_{i})-x_{l}(t)\}\right|dt\right] \nonumber\\
		&& +E\sup_{\vartheta\in {\mathcal{K}}}\left[2T^{-1}\Delta^{-1/2}_{\max}\sum^{N}_{k=1}\sigma^{-2}_{k}\sum^{n-1}_{i=0}\int^{t_{i+1}}_{t_{i}}
		\left|R_{k}+\sum^{N}_{l=1}a_{kl}x_{l}(t)\right|\left|\sum^{N}_{l=1}a_{kl}\{x_{l}(t_{i})-x_{l}(t)\}\right|dt\right].\nonumber\\
		&\leq&T^{-1}\Delta^{-1/2}_{\max}\sum^{N}_{k=1}\sigma^{-2}_{k}\sum^{n-1}_{i=0}
		\sum^{N}_{l=1} \left(\sup_{\vartheta\in {\mathcal{K}}}|R_{k}a_{kl}|\right)
		\int^{t_{i+1}}_{t_{i}}E|x_{l}(t_{i})-x_{l}(t)|dt \nonumber\\
		&&+(2T)^{-1}\Delta^{-1/2}_{\max}\sum^{N}_{k=1}\sigma^{-2}_{k}
		\sum^{N}_{l,m=1}\left(\sup_{\vartheta\in{\mathcal{K}}}|a_{kl}a_{km}|\right)\sum^{n-1}_{i=0}\int^{t_{i+1}}_{t_{i}}
		E|x_{l}(t_{i})\{x_{l}(t_{i})-x_{l}(t)\}|dt\nonumber\\
		&&+(2T)^{-1}\Delta^{-1/2}_{\max}\sum^{N}_{k=1}\sigma^{-2}_{k}
		\sum^{N}_{l,m=1}\left(\sup_{\vartheta\in{\mathcal{K}}}|a_{kl}a_{km}|\right)\sum^{n-1}_{i=0}\int^{t_{i+1}}_{t_{i}}
		E|x_{l}(t)\{x_{l}(t_{i})-x_{l}(t)\}|dt.\nonumber
	\end{eqnarray}
	Since $\mathcal{K}$ is a compact set, the terms, $\sup_{\vartheta\in{\mathcal{K}}}|\tilde{R}_{k}a_{kl}|$,
	$\sup_{\vartheta\in{\mathcal{K}}}|\tilde{R}_{k}a_{kl}a_{km}|$,
	$\sup_{\vartheta\in{\mathcal{K}}}|a_{kl}|$,
	$\sup_{\vartheta\in{\mathcal{K}}}|R_{k}a_{kl}|$,
	$\sup_{\vartheta\in{\mathcal{K}}}|a_{kl}a_{km}|$,
	are all bounded.
	In order to prove the boundedness of $E\sup_{\vartheta\in
		{\mathcal{K}}}|\mathrm{S}_{1}|$, $E\sup_{\vartheta\in
		{\mathcal{K}}}|\mathrm{S}_{2}|$ and $E\sup_{\vartheta\in {\mathcal{K}}}|\mathrm{S}_{3}|$, we only need to show that there exist two constants $C_{1}, C_{2}$, such that
	\begin{equation}\label{3.3.3.1}
		E|\{x_{l}(t_{i})-x_{l}(t)\}x_{m}(t)| \leq  C_{1}\Delta_{\max}+C_{2}\Delta^{1/2}_{\max}.
	\end{equation}
	
	By Proposition~\ref{pro2.2},
	\begin{eqnarray}
		&&E|\{x_{l}(t)-x_{l}(t_{i})\}x_{m}(t)|\nonumber\\
		&=& E\bigg|\int^{t}_{t_{i}}\{r_{k}+\sum^{N}_{l=1}a_{kl}x_{l}(s)\}x_{k}(s)x_{m}(t)ds
		+\int^{t}_{t_{i}}\sigma_{k}x_{k}(s)x_{m}(t)dB_{k}(s)\bigg|\nonumber\\
		&\leq&
		E\bigg|
		\int^{t}_{t_{i}}\{r_{k}+\sum^{N}_{l=1}a_{kl}x_{l}(s)\}x_{k}(s)x_{m}(t)ds\bigg|
		+\sigma_{k}\left[\int^{t}_{t_{i}}E\{x^{2}_{m}(t)x^{2}_{k}(s)\}ds\right]\nonumber\\
		&\leq & C_{1}\Delta_{\max}+C_{2}\Delta^{1/2}_{\max},\nonumber
	\end{eqnarray}
	(\ref{3.3.3.1}) is proved.
	
	Because $E\sup_{\vartheta \in {\mathcal{K}}}|\mathrm{S}_{1}|, E\sup_{\vartheta \in {\mathcal{K}}}|\mathrm{S}_{2}|$ and $E\sup_{\vartheta \in {\mathcal{K}}}|\mathrm{S}_{3}|$ are bounded, we can conclude that there exist constants
	$C>0,T_{0}\geq0$, and $n_{0}\in$
	such that for any $T>T_{0}$, $n\geq n_{0}$, $M>0$,
	\begin{eqnarray}
		&&P_{\tilde{\theta}}\left\{\Delta^{-1/2}_{\max}\sup_{\vartheta\in\in{\mathcal{K}}}\left|T^{-1}\ell_{n,T}(\vartheta)-T^{-1}\ell_{T}(\vartheta)\right|\geq M \right\}\leq CM^{-1},\nonumber
	\end{eqnarray}
	and
	\begin{eqnarray}
		\lim_{M\rightarrow
			+\infty}\limsup_{n\rightarrow+\infty,T\rightarrow+\infty}P_{\tilde{\theta}}\left\{\Delta^{-1/2}_{\max}\sup_{\vartheta\in{\mathcal{K}}}\left|T^{-1}\ell_{n,T}(\vartheta)-T^{-1}\ell_{T}(\vartheta)\right|\geq M\right\} &=& 0.\nonumber
	\end{eqnarray}
\end{proof}

\begin{theorem}\label{SupplementaryNoFixT}
	Under Assumptions~\ref{assumptionA1}--\ref{assumptionA4} and~\ref{assumptionA6}(I), we have
	\begin{align*}
		\|\hat{\vartheta}_{n,T}-\hat{\vartheta}_{T}\|=O_{p}(\Delta^{1/2}_{\max}),\quad
		\|\hat{\vartheta}_{n,T}-\vartheta^{0}\|=o_{p}(1).
	\end{align*}
	If we further assume Assumption~\ref{assumptionA6}(II), then
	\begin{align*}
		T^{1/2}(\hat{\vartheta}_{n,T}-\vartheta^{0}) \rightarrow N(0, 
		(n/2)^{1/2}\sigma^{-2}_{k}(\widehat{\sigma}^{2}_{k,n,T}-\sigma^{2}_{k}) \rightarrow N(0,1)
	\end{align*}
	where $\mathcal{I}(\vartheta)={\rm diag}\{\sigma^{-2}_{1} I_{1}(\vartheta_{1}),...,\sigma^{-2}_{N}I_{N}(\vartheta_{N})\}$, $I_{k}(\vartheta_{k})$ is defined in (\ref{Information}), $k=1,...,N$.
\end{theorem}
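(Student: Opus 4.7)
The strategy mirrors Theorem~\ref{FixedT} but exploits the ergodicity established in Proposition~\ref{pro2.3} and Corollary~\ref{pro2.4} to push asymptotic results through as $T\to\infty$. For the discrepancy bound $\|\hat{\vartheta}_{n,T}-\hat{\vartheta}_T\|=O_p(\Delta_{\max}^{1/2})$, since $\ell_T(\vartheta)$ is quadratic in $\vartheta$, an exact Taylor expansion around $\hat{\vartheta}_T$, combined with the first-order conditions $\ell'_T(\hat{\vartheta}_T)=0$ and $\ell'_{n,T}(\hat{\vartheta}_{n,T})=0$, gives
$$\|\hat{\vartheta}_{n,T}-\hat{\vartheta}_T\|\leq \bigl\|\{T^{-1}\ell''_T(\hat{\vartheta}_T)\}^{-1}\bigr\|\cdot \bigl\|T^{-1}\ell'_T(\hat{\vartheta}_{n,T})-T^{-1}\ell'_{n,T}(\hat{\vartheta}_{n,T})\bigr\|.$$
Lemma~\ref{lemma5} controls the second factor as $O_p(\Delta_{\max}^{1/2})$. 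For the first factor, I would invoke Corollary~\ref{pro2.4} with integrands of the form $-\exp\{u_l(t)+u_m(t)\}$, whose integrability under the stationary distribution follows from Proposition~\ref{pro2.2}, to conclude that $T^{-1}\ell''_T(\vartheta)\to -\mathcal{I}(\vartheta)$ almost surely, whence the operator norm of the inverse is $O_p(1)$ at interior points.

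For consistency, I would combine this discrepancy bound with a proof that $\hat{\vartheta}_T\to_p \vartheta^0$. Applying Corollary~\ref{pro2.4} pointwise, together with an equicontinuity/envelope argument on the compact set $\mathcal{K}$ using the moment bounds in Proposition~\ref{pro2.2}, yields the uniform convergence $\sup_{\vartheta\in\mathcal{K}}|T^{-1}\ell_T(\vartheta)-\ell^{\infty}(\vartheta)|\to_p 0$, where $\ell^{\infty}$ is the stationary-expectation log-likelihood. Since $\ell^{\infty}$ is concave with unique maximizer at $\vartheta^0$ (from the information identity $\mathcal{I}(\vartheta^0)\succ 0$), a standard M-estimator argmax argument yields $\hat{\vartheta}_T\to_p\vartheta^0$, and the triangle inequality with part~(i) gives $\|\hat{\vartheta}_{n,T}-\vartheta^0\|=o_p(1)$.

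For asymptotic normality under Assumption~\ref{assumptionA6}(II), an exact Taylor expansion of $\ell'_{n,T}$ about $\vartheta^0$ yields
$$T^{1/2}(\hat{\vartheta}_{n,T}-\vartheta^0)=-\{T^{-1}\ell''_{n,T}(\vartheta^0)\}^{-1}\bigl\{T^{-1/2}\ell'_{n,T}(\vartheta^0)\bigr\}.$$
The information-factor converges to $-\mathcal{I}(\vartheta^0)$ in probability by Lemma~\ref{lemma5} plus Corollary~\ref{pro2.4}. For the score factor, I would split
$$T^{-1/2}\ell'_{n,T}(\vartheta^0) = T^{-1/2}\ell'_T(\vartheta^0) + T^{-1/2}\{\ell'_{n,T}(\vartheta^0)-\ell'_T(\vartheta^0)\}.$$
The continuous score $T^{-1/2}\ell'_T(\vartheta^0)$ is, via It\^o's formula applied to equation~\eqref{logSDE}, a sum of stochastic integrals against $\mathrm{d}B_k(t)$, whose quadratic variation converges (by Corollary~\ref{pro2.4} and the moment bounds) to $\mathcal{I}(\vartheta^0)$, so the martingale central limit theorem delivers $N(0,\mathcal{I}(\vartheta^0))$. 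Lemma~\ref{lemma5} bounds the discretization remainder by $O_p(T^{1/2}\Delta_{\max}^{1/2})=O_p((T\Delta_{\max})^{1/2})=o_p(1)$, precisely under Assumption~\ref{assumptionA6}(II). Slutsky then gives the drift result. The diffusion statement $(n/2)^{1/2}\sigma_k^{-2}(\hat{\sigma}^2_{k,n,T}-\sigma_k^2)\to N(0,1)$ is proved verbatim as in Theorem~\ref{FixedT}, since that argument depended only on the quadratic-variation expansion and the growth bounds of Proposition~\ref{pro2.2}, neither of which changes when $T$ diverges.

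The main obstacle will be juggling three asymptotic regimes simultaneously: ergodicity of $x(t)$ to handle $T\to\infty$, the martingale CLT for the score, and an Euler approximation error that must shrink strictly faster than the statistical fluctuation. The coupling $T\Delta_{\max}\to 0$ in Assumption~\ref{assumptionA6}(II) is exactly what ensures the discretization bias is negligible against the $T^{-1/2}$ rate; verifying the Lindeberg-type condition for the martingale CLT using only the polynomial moment bounds of Proposition~\ref{pro2.2}, and ensuring the equicontinuity argument for uniform convergence of $T^{-1}\ell_T$ carries through on the unbounded domain $\mathbb{R}^N_+$, are the delicate technical pieces.
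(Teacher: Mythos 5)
Your proposal is correct in substance and shares the paper's backbone: the exact quadratic Taylor expansion of $\ell'_T$ combined with the first-order conditions, Lemma~\ref{lemma5} to bound $\sup_{\vartheta\in\mathcal{K}}\|T^{-1}\ell'_{n,T}(\vartheta)-T^{-1}\ell'_T(\vartheta)\|$ by $O_p(\Delta_{\max}^{1/2})$, the triangle inequality for consistency, and the coupling $T\Delta_{\max}\to 0$ to make the discretization error negligible at the $T^{1/2}$ scale. Where you diverge is in how the continuous-time MLE $\hat{\vartheta}_T$ is handled: the paper does \emph{not} prove its consistency or asymptotic normality, but imports them from Huzak (2018, Theorem~4.5) and Brown--Hewitt (1975, Theorem~2), and likewise obtains the lower bound $\min_{\|\vartheta_\epsilon\|=1}\vartheta_\epsilon^{\T}\{-T^{-1}\ell''_T(\hat{\vartheta}_T)\}\vartheta_\epsilon\geq\lambda_0/2$ (hence $\|\{\ell''_T(\hat{\vartheta}_T)\}^{-1}\|\leq 2\lambda_0^{-1}T^{-1}$) from those references rather than from Corollary~\ref{pro2.4}. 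You instead propose to prove these from scratch: the Hessian limit $T^{-1}\ell''_T\to-\mathcal{I}$ via the ergodic theorem applied to $\exp\{u_l(t)+u_m(t)\}$, consistency of $\hat{\vartheta}_T$ via a uniform law of large numbers plus an argmax argument, and the score CLT via the martingale central limit theorem. This buys self-containedness at the cost of the two technical burdens you yourself identify (the Lindeberg condition and uniform convergence over $\mathbb{R}^N_+$), which the paper sidesteps by citation; both routes are legitimate, and yours is arguably closer to what a referee would want spelled out.

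One caution on the diffusion part: the argument is not quite ``verbatim'' from Theorem~\ref{FixedT}. There the remainder is $n^{-1/2}(I_1'+I_2')$ with $E(I_2')\leq 2T\sigma_k^2+CNT$, which vanishes for fixed $T$ as $n\to\infty$ but grows linearly in $T$ otherwise. The paper's Theorem~2 proof therefore rewrites the remainder as $(T\bar{\Delta})^{1/2}(S_1'+S_2')$ with $\bar{\Delta}=T/n$ and $E(S_2')=O(1)$, so that negligibility requires precisely $T\Delta_{\max}\to 0$, i.e.\ Assumption~\ref{assumptionA6}(II). Your write-up should make this rescaling explicit rather than asserting that nothing changes when $T$ diverges; the conclusion is unaffected since you only claim the $\sigma^2$ result under Assumption~\ref{assumptionA6}(II), but the dependence on that assumption is exactly here and should not be hidden.
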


\begin{proof}
	From the proof of Theorem~\ref{FixedT},
	we have
	\begin{eqnarray}
		&&\|\hat{\vartheta}_{n,T}-\hat{\vartheta}_{T}\|\nonumber\\ &\leq &  \|\{\ell''_{T}(\hat{\vartheta}_{T})\}^{-1}\|\cdot\|\ell''_{T}(\hat{\vartheta}_{T})(\hat{\vartheta}_{n,T}-\hat{\vartheta}_{T})\|\nonumber\\
		&=&
		\|\{\ell''_{T}(\hat{\vartheta}_{T})\}^{-1}\|\cdot\|\ell'_{T}(\hat{\vartheta}_{n,T})-\ell'_{T}(\hat{\vartheta}_{T})+\ell'_{n,T}(\hat{\vartheta}_{n,T})-\ell'_{T}(\hat{\vartheta}_{n,T})-\ell''_{T}(\hat{\vartheta}_{T})(\hat{\vartheta}_{n,T}-\hat{\vartheta}_{T})\|\nonumber\\
		&\leq &
		\|\{\ell''_{T}(\hat{\vartheta}_{T})\}^{-1}\|\cdot\|\ell'_{T}(\hat{\vartheta}_{n,T})-\ell'_{T}(\hat{\vartheta}_{T})-\ell''_{T}(\hat{\vartheta}_{T})(\hat{\vartheta}_{n,T}-\hat{\vartheta}_{T})\|\nonumber\\
		&&+ \|\{\ell''_{T}(\hat{\vartheta}_{T})\}^{-1}\|\cdot\|\ell'_{n,T}(\hat{\vartheta}_{n,T})-\ell'_{T}(\hat{\vartheta}_{n,T})\|.\nonumber
	\end{eqnarray}
	By inspection, we see that $\ell_{T}(\vartheta)$ is a quadratic
	function in $\vartheta$, so
	$$\ell'_{T}(\hat{\vartheta}_{n,T})-\ell'_{T}(\hat{\vartheta}_{T})-\ell''_{T}(\hat{\vartheta}_{T})(\hat{\vartheta}_{n,T}-\hat{\vartheta}_{T})=0.$$
	By Lemma~\ref{lemma5}, for any compact subset $K\subseteq \Theta$, and
	any $\epsilon>0$,
	there exists a constant $C(\epsilon)$ such that
	\begin{equation}\label{applylemma5}
		P\left(\sup_{\vartheta \in
			{\mathcal{K}}}\|T^{-1}\ell'_{n,T}(\vartheta)-T^{-1}\ell'_{T}(\vartheta)\|
		>  C(\epsilon)\Delta^{-1/2}_{\max}\right)<\epsilon.
	\end{equation}
	
	Let $\lambda_{0}>0$ be the minimal eigenvalue of the Fisher
	information matrix $I(\vartheta)$. By Theorem 4.5(i) from
	\cite{MH:2017} or Theorem 2 of \cite{BH:1975}, there a.s. exists $T_{0}>0$,
	such that for all $T>T_{0}$ and all $\hat{\vartheta}_{T}$ in the neighborhood
	of $\vartheta$ with the radius $\epsilon_{0}/2$,
	$\min_{\|\vartheta_{\epsilon}\|=1}\vartheta_{\epsilon}^{\mathcal{T}}\{-T^{-1}\ell''_{T}(\hat{\vartheta}_{T})\}\vartheta_{\epsilon}\geq
	\lambda_{0}/2>0$.  So we have
	$\|\{\ell''_{T}(\hat{\vartheta}_{T})\}^{-1}\| \leq
	2\lambda^{-1}_{0}T^{-1}$. Combining this with \eqref{applylemma5}, for
	sufficiently large $T$,
	$$ P\left(\|\hat{\vartheta}_{n,T}-\hat{\vartheta}_{T}\| >  2\lambda^{-1}_{0}T^{-1}C(\epsilon)T\Delta^{1/2}_{\max}\right)<\epsilon.$$
	For any $\delta>0$, we can choose a sequence $\epsilon_n\rightarrow 0$
	such that
	$C_n:=2C(\epsilon_n)\lambda^{-1}_{0}<C'\Delta_{\max}^{-\delta}$. This gives
	\begin{eqnarray}
		\limsup_{n,T}P_{\theta}\left(\left[\left\|\hat{\vartheta}_{n,T}-\hat{\vartheta}_{T}\right\|\leq C' \Delta^{\frac{1}{2}-\delta}_{\max}\right]^{c} \right)= 0.\nonumber
	\end{eqnarray}
	Hence we get $\hat{\vartheta}_{n,T}\rightarrow_{p}\hat{\vartheta}_{T}$. From the above results and Slutsky's
	theorem, when $T\rightarrow+\infty,\Delta_{\max}\rightarrow0$, we have the consistency of $\hat{\vartheta}_{n,T}$,
	\begin{eqnarray}
		\|\hat{\vartheta}_{n,T}-\vartheta\| \leq   \|\hat{\vartheta}_{n,T}-\hat{\vartheta}_{T}\|+\|\hat{\vartheta}_{T}-\vartheta\|\nonumber
		\rightarrow  0.
	\end{eqnarray}
	To prove its asymptotic normality, we consider
	\begin{eqnarray}
		\|T^{1/2}(\hat{\vartheta}_{n,T}-\vartheta)-T^{1/2}(\hat{\vartheta}_{T}-\vartheta)\| = (T\Delta_{\max})^{1/2}\Delta^{-1/2}_{\max}\|\hat{\vartheta}_{n,T}-\hat{\vartheta}_{T}\| \leq  C(\epsilon_n)T^{1/2}\Delta^{1/2}_{\max}
		\nonumber
	\end{eqnarray}
	Now since $\lim_{n,T}T\Delta_{\max}=0$, we can choose a sequence
	$\epsilon_n$ such that $\lim_{n,T}T\Delta_{\max}C(\epsilon_n)^2=0$. Then by Slutsky's theorem and the asymptotic normality of $T^{1/2}(\hat{\vartheta}_{T}-\vartheta)$, we get the asymptotic normality of $\hat{\vartheta}_{n,T}$.
	
	Next, we consider the asymptotic property of $\widehat{\sigma}^{2}_{k,n,T}$.
	\begin{eqnarray}
		&& \sqrt{n}(\widehat{\sigma}^{2}_{k,n,T}-\sigma^{2}_{k})\nonumber\\
		&=& \sqrt{n}\left[\widehat{\sigma}^{2}_{k,n,T}-n^{-1}\sigma^{2}_{k}\sum^{n-1}_{i=0}\{\Delta_{i}B_{k}(t)\}^{2}\Delta^{-1}_{i,t}\right]
		+\sqrt{2}\sigma^{2}_{k}(2n)^{-1/2}\sum^{n-1}_{i=0}[\{\Delta_{i}B_{k}(t)\}^{2}-\Delta_{i,t}]\Delta^{-1}_{i,t}.\nonumber
	\end{eqnarray}
	Since
	$(2n)^{-1/2}\sum^{n-1}_{i=0}[\{\Delta_{i}B_{k}(t)\}^{2}-\Delta_{i,t}]\Delta^{-1}_{i,t}\rightarrow_{L}
	N(0,1)$, we only need to show that the first term converges to
	zero. Let $\bar{\Delta}=n^{-1}T$: note that $\Delta_{\max}\geq\bar{\Delta}$.
	\begin{eqnarray}
		&&n^{1/2}\left[\widehat{\sigma}^{2}_{k,n,T}-n^{-1}\sigma^{2}_{k}\sum^{n-1}_{i=0}\{\Delta_{i}B_{k}(t)\}^{2}\Delta^{-1}_{i,t}\right]\nonumber\\
		&=&(T\bar{\Delta})^{1/2}T^{-1}\left\{\sum^{n-1}_{i=0}\left(\Delta_{i}u_{k}(t)-\left[\hat{\dot{r}}_{k}
		+\sum^{N}_{l=1}\hat{a}_{kl}\exp\{u_{l}(t_{i})\}\right]\Delta_{i,t}\right)^{2}\Delta^{-1}_{i,t}
		-\sigma^{2}_{k}\sum^{n-1}_{i=0}\{\Delta_{i}B_{k}(t)\}^{2}\Delta^{-1}_{i,t}\right\}\nonumber\\
		&:=& (T\bar{\Delta})^{1/2}(S'_{1}+S'_{2}),\nonumber
	\end{eqnarray}
	where
	\begin{eqnarray}
		S'_{1} &=& T^{-1} \sum^{n-1}_{i=0}\Delta^{-1}_{i,t}\left(\int^{t_{i+1}}_{t_{i}}
		\sum^{N}_{l=1}\hat{a}_{kl}[\exp\{u_{l}(t)\}-\exp\{u_{l}(t_{i})\}]dt\right)^{2},\nonumber\\
		S'_{2} &=& -2\sigma_{k}T^{-1}\sum^{n-1}_{i=0}\Delta^{-1}_{i,t}\int^{t_{i+1}}_{t_{i}}dB_{k}(t)\cdot\int^{t_{i+1}}_{t_{i}}\sum^{N}_{l=1}\hat{a}_{kl}
		[\exp\{u_{l}(t)\}-\exp\{u_{l}(t_{i})\}]dt,\nonumber
	\end{eqnarray}
	From the proof of Corollary~\ref{pro2.4}
	and It\^{o}'s isometry, we have
	\begin{eqnarray}
		ES'_{1}&\leq& 2T^{-1}E\left\{\sum^{n-1}_{i=0}\int^{t_{i+1}}_{t_{i}}\left[\sum^{N}_{l=1}\hat{a}_{kl}\{x_{k}(t)-x_{k}(t_{i})\}\right]^{2}dt\right\}
		\leq  2 C\Delta_{\max},\nonumber\\
		ES'_{2}
		&\leq & 2T^{-1} E \left\{\sum^{n-1}_{i=0}\left(\sigma^{2}_{k}\int^{t_{i+1}}_{t_{i}}dt\right)\right\}+
		T^{-1}\sum^{n-1}_{i=1}\Delta^{-2}_{i,t}E\left(\int^{t_{i+1}}_{t_{i}}\sum^{N}_{l=1}\hat{a}_{kl}
		[\exp\{u_{l}(t)\}-\exp\{u_{l}(t_{i})\}]dt\right)^{2},\nonumber\\
		&\leq & 2\sigma^{2}_{k}+ CNT^{-1}\sum^{n-1}_{i=1}\Delta^{-1}_{i,t}\int^{t_{i+1}}_{t_{i}}E[\exp\{u_{l}(t)\}-\exp\{u_{l}(t_{i})\}]^{2}dt,\nonumber\\
		&\leq & 2\sigma^{2}_{k}+ CN.\nonumber
	\end{eqnarray}
	So we have
	\begin{eqnarray}
		&&n^{1/2}\left[\widehat{\sigma}^{2}_{k,n,T}-n^{-1}\sigma^{2}_{k}\sum^{n-1}_{i=0}\{\Delta_{i}B_{k}(t)\}^{2}\Delta^{-1}_{i,t}\right]
		\rightarrow0.\nonumber
	\end{eqnarray}
	in probability. This completes the proof of Theorem 2.
\end{proof}

\end{document}